\setlist[enumerate]{leftmargin=.5in}
\setlist[itemize]{leftmargin=.5in}
\crefname{hypothesis}{Hypothesis}{Hypotheses}
\crefname{fact}{Fact}{Facts}
\def\zero{\bm{0}}
\def\one{\bm{1}}
\def\a{\bm{a}}
\def\b{\bm{b}}
\def\c{\bm{c}}
\def\e{\bm{e}}
\def\h{\bm{h}}
\def\p{\bm{p}}
\def\q{\bm{q}}
\def\u{\bm{u}}
\def\v{\bm{v}}
\def\w{\bm{w}}
\def\x{\bm{x}}
\def\y{\bm{y}}
\def\gammab{\bm{\gamma}}
\def\IC{\mathcal{I}}
\def\JC{\mathcal{J}}
\def\KC{\mathcal{K}}
\def\FC{\mathcal{F}}
\def\SC{\mathcal{S}}
\def\TC{\mathcal{T}}
\def\WC{\mathcal{W}}
\def\HC{\mathcal{H}}
\def\Real{\mathbb{R}}
\def\trans{\top}
\def\HT{\mathsf{H}}
\def\opt{\mathrm{opt}}
\def\sol{\mathrm{sol}}
\def\add{\mathrm{add}}
\def\feas{\mathrm{feas}}
\def\ave{\mathrm{ave}}
\def\real{\mathrm{real}}
\def\ident{\mathrm{ident}}
\def\out{\mathrm{out}}
\DeclareMathOperator{\rank}{rank}
\DeclareMathOperator{\equivSym}{\Leftrightarrow}
\DeclareMathOperator{\cone}{cone}
\DeclareMathOperator{\conv}{conv}
\DeclareMathOperator{\dist}{dist}
\DeclareMathOperator{\MRSA}{MRSA}
\title{
  Hyperspectral Image Data Reduction for Endmember Extraction
  \thanks{arXiv preprint, May 2026.
    \funding{This research was supported by the Japan Society for the Promotion of Science (JSPS KAKENHI Grant Number 23K11229).}}}
    \author{Tomohiko Mizutani\thanks{Department of Mathematical and Systems Engineering, Shizuoka University}
  (\email{mizutani.t@shizuoka.ac.jp}).}
\begin{document}

\maketitle

\begin{abstract}
    Endmember extraction from hyperspectral images aims to identify the spectral signatures
    of materials present in a scene. Recent studies have shown that self-dictionary methods
    can achieve high extraction accuracy; however, their high computational cost limits their
    applicability to large-scale hyperspectral images. Although several approaches have been
    proposed to mitigate this issue, it remains a major challenge. Motivated by this situation,
    this paper pursues a data reduction approach. Assuming that a hyperspectral image follows
    the linear mixing model with the pure-pixel assumption, we develop a data reduction technique to remove
    pixels corresponding to mixtures of multiple endmember signatures.
    We analyze the theoretical properties of this reduction step and show that it preserves pixels that lie close to the
    endmembers. Building on this result, we propose a data-reduced self-dictionary method that
    integrates the data reduction with a self-dictionary method based on a linear programming
    formulation. Numerical experiments demonstrate that the proposed method can substantially
    reduce the computational time of the original self-dictionary method without sacrificing
    endmember extraction accuracy.
\end{abstract}

\begin{keywords}
    hyperspectral image, endmember extraction, self-dictionary method, data reduction, nonnegative matrix factorization
\end{keywords}

\begin{MSCcodes}
    15A23, 65D18, 68U10
\end{MSCcodes}

\section{Introduction} \label{sec: introduction}
Endmember extraction from a hyperspectral image (HSI) is a fundamental task in utilizing hyperspectral imaging technology
for Earth's surface observation, with applications ranging from vegetation monitoring to mineral and resource exploration.
Numerous endmember extraction methods have been proposed to date; see \cite{Bio12, Ma14, Gil20} for surveys of recent advances.
Among these, self-dictionary methods have received growing attention because of their ability to achieve high extraction accuracy on HSIs.
However, a major drawback of self-dictionary methods is their high computational cost, which hinders their applicability to large-scale HSIs.

Self-dictionary methods formulate the endmember extraction task as a sparse optimization problem, treating the HSI data itself as a dictionary,
and then solve a convex relaxation of this problem.
Endmembers are identified using the optimal solution of the relaxed problem.
Although various self-dictionary methods have been developed in the literature \cite{Bit12, Elh12, Ess12, Gil18, Ngu22},
this paper mainly focuses on the method proposed by Bittorf et al.\ \cite{Bit12},
which employs a linear programming (LP) formulation as the convex relaxation.
For brevity, we refer to this LP-based self-dictionary method simply as the ``LP method'' throughout the paper.
Solving this LP relaxation is the primary computational bottleneck in applying the LP method to large-scale HSIs,
because the size of the LP problem grows quadratically with the number of pixels in the image.

There are two main approaches to addressing this computational issue:
(i) developing efficient algorithms for solving the convex relaxation problem, and
(ii) developing preprocessing techniques that reduce the size of the HSI data before applying self-dictionary methods.
There is a line of research exploring the first approach.
For example, an efficient algorithm based on the column generation framework was proposed in \cite{Miz25} for the LP method.
However, even with this improvement, processing the Urban HSI dataset, which is a widely used benchmark in the remote sensing community,
still takes 8 hours.
In light of this situation, this paper pursues the second approach.

\subsection*{Signal Model for Endmember Extraction from HSIs}
To describe our research question and contributions,
we begin by introducing the signal model for endmember extraction from HSIs.
In this paper, we adopt the linear mixing model under the pure-pixel assumption.
Consider an HSI consisting of $n$ pixels, acquired by a hyperspectral sensor with $d$ spectral bands.
We represent the HSI as a matrix $A \in \Real^{d \times n}$,
where the $(i,j)$th entry denotes the measurement of the $i$th spectral band at the $j$th pixel.
Thus, the $j$th column of $A$ corresponds to the observed spectral signature at the $j$th pixel.
We refer to such a matrix $A$ as an \emph{HSI matrix}.
Given an HSI matrix $A = [\a_1, \ldots, \a_n] \in \Real^{d \times n}$,
a \emph{linear mixing model}, abbreviated as LMM, assumes that the observed spectral signatures $\a_1, \ldots, \a_n$ are generated as
\begin{align*}
    \a_i = \sum_{j=1}^r h_{ji} \w_j + \v_i \quad \text{for} \  i = 1, \ldots, n,
\end{align*}
where each $\w_j$ satisfies $\w_j \ge \zero$ and each $h_{ji}$ satisfies $\sum_{j=1}^{r} h_{ji} = 1$
and $h_{ji} \ge 0$. 
We refer to $\w_j$ as an \emph{endmember signature},
$h_{ji}$ as the \emph{abundance fraction} of the $j$th endmember in the $i$th pixel,
and $\v_i$ as the \emph{noise} associated with the $i$th pixel.
An endmember signature represents the spectral signature of a distinct material present in the observed scene,
while an \emph{endmember} refers to the material itself.

To explain the pure-pixel assumption, we express the model above in matrix form as
\begin{align} \label{eq: LMM}
    A = WH + V,
\end{align}
where $W = [\w_1, \ldots, \w_r] \in \Real^{d \times r}, H \in \Real^{r \times n}$
with $(j,i)$th element $h_{ji}$, and $V = [\v_1, \ldots, \v_n] \in \Real^{d \times n}$.
Note that both $W$ and $H$ are nonnegative matrices, and, in particular,
the sum of the elements in each column of $H$ equals one.
We refer to $W$ as the \emph{endmember matrix} of $A$.
We say that the $j$th endmember has a \emph{pure pixel}
if there exists a column of $H$ equal to the $j$th unit vector.
The pure-pixel assumption asserts that each endmember has at least one such pure pixel.
Equivalently, $H$ can be written in the form
\begin{align} \label{eq: H under pure-pixel assumption}
    H = [I, \bar{H}] \Pi \in \Real^{r \times n},
\end{align}
where $I$ is the $r \times r$ identity matrix, $\Pi$  is an $n \times n$ permutation matrix,
and $\bar{H}$ is an $r \times (n-r)$ submatrix.
When the pure-pixel assumption holds, the LMM implies that,
in the absence of noise (i.e., $V = 0$), the endmember signatures $\w_1, \ldots, \w_r$ appear exactly as columns of $A$.

\subsection*{Motivation and Research Question}
We examine an HSI matrix $A$ from a geometric perspective,
assuming that it follows the LMM with the pure-pixel assumption, as given in \eqref{eq: LMM} and \eqref{eq: H under pure-pixel assumption}.
Let $\cone(A)$ denote the conical hull of the columns of $A$.
To simplify the discussion, we assume $V = 0$.
In this setting, the endmember signatures $\w_1, \ldots, \w_r$ correspond to the extreme rays of $\cone(A)$.
Thus, by removing the columns of $A$ that lie in its interior, the extreme rays remain, which correspond to the endmembers.
When $V \neq 0$, the resulting columns may instead be close to the endmembers.
Determining whether a column of $A$ lies in the interior of $\cone(A)$ can be formulated as a convex programming problem,
and the associated computational cost is low.
Accordingly,
if this data reduction step succeeds in discarding only irrelevant columns while retaining those close to the endmembers,
it can substantially reduce the computational time of self-dictionary methods without compromising their extraction performance.

Figure~\ref{fig: samson} illustrates this observation using the Samson HSI dataset,
which is described in detail in Section~\ref{subsec: objectives, implementation details, and datasets}.
Hereafter, let $A$ denote the HSI matrix of the Samson dataset.
We computed the intersection points of the columns of $A$ with a hyperplane
and then projected them onto a two-dimensional space.
The figure shows the projected points of all columns of $A$ (left)
and those obtained after removing the columns lying in the interior of $\cone(A)$ (right).
It can be seen that a large number of columns of $A$ are removed without discarding columns close to the endmembers.
Specifically, $A$ has 9025 columns,
but after removing the columns contained in the interior of $\cone(A)$, only 20 columns remain.
The details of how this figure was generated are provided in Section~\ref{subsubsec: comparison with existing methods}.

At this point, a natural question arises:
After removing the columns that lie in the interior of the conical hull of the HSI matrix,
how close are the remaining ones to the endmembers?
To the best of our knowledge, this research question has not been investigated from a theoretical perspective in the existing literature.
In the works of Esser et al.\ \cite{Ess12} and Gillis and Luce \cite{Gil18},
data reduction techniques based on clustering were employed within self-dictionary methods.
However, the theoretical properties of these reduction steps were not analyzed in their studies.
In this paper, we address this question through a theoretical analysis.
Building on the resulting insights,
we then develop a data reduction algorithm, which we integrate into a self-dictionary method.
\begin{figure}[h]
    \centering
    \begin{minipage}{0.48\linewidth}
        \centering
        \includegraphics[scale=0.6]{./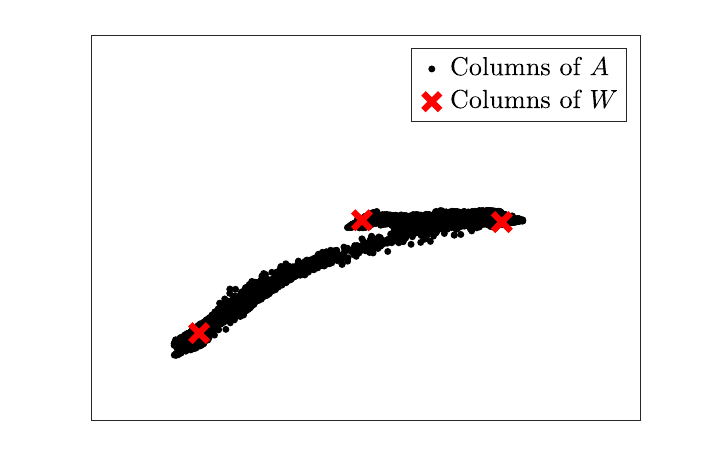}
    \end{minipage}
    \hfill
    \begin{minipage}{0.48\linewidth}
        \centering
        \includegraphics[scale=0.6]{./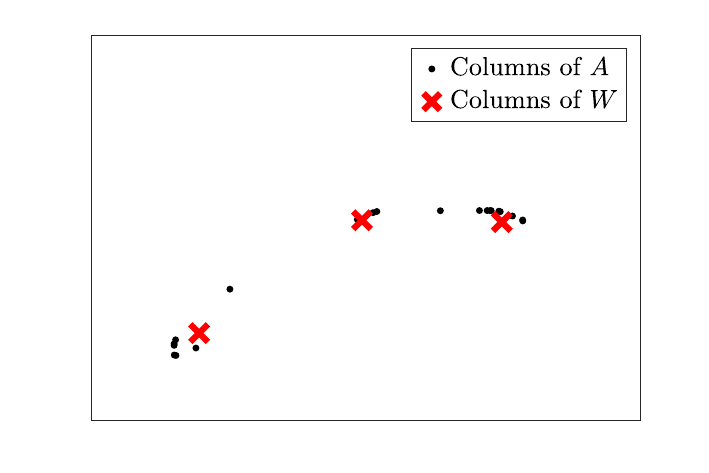}
    \end{minipage}
    \caption{Example of the Samson HSI dataset.
        The black dots represent the columns of the HSI matrix $A$,
        while the red x-markers represent the columns of the endmember matrix $W$.
        The left panel shows all columns of $A$,
        and the right panel shows the columns of $A$ after removing those lying in the interior of $\cone(A)$.}
    \label{fig: samson}
\end{figure}

\subsection*{Contributions}
The main contributions of this paper are threefold.
First, we address the research question posed above.
Our theoretical analysis is conducted under the assumption that the HSI matrix follows the LMM with the pure-pixel assumption.
We summarize our result as Theorem~\ref{thm: main result},
showing that even after removing the columns located inside the conical hull of the HSI matrix,
there still exist columns that lie close to the endmembers, provided that the noise level is below a certain threshold.

Second, we develop a data reduction algorithm based on the theoretical result,
and subsequently propose a data-reduced self-dictionary method, abbreviated as REDIC, for endmember extraction from HSIs
by integrating data reduction with the variant of the LP method proposed in \cite{Miz25}.
As observed in the case of the Samson dataset,
applying data reduction to HSI matrices can remove a large portion of their columns.
However, directly applying the LP method to the reduced data does not necessarily yield high extraction accuracy.
To address this issue,
we incorporate a data augmentation process together with an averaging technique to improve the extraction accuracy of REDIC.
For the reduced matrix obtained after data reduction,
we augment it by randomly selecting some of the removed columns and adding them back.
This augmentation step helps improve the accuracy of the extracted endmembers.
However, the randomness involved in the augmentation process introduces variability into the extraction results.
To mitigate this variability, we repeat the augmentation and endmember extraction using the LP method multiple times.
Finally, we average the outputs produced by the LP method to obtain the final estimates of the endmembers.

Third, we present numerical experiments demonstrating the performance of REDIC.
The experiments were conducted using both synthetic and real HSI datasets.
The results indicate that the data reduction algorithm within REDIC can remove a large portion of the columns in HSI matrices;
for example, for the Urban dataset, which originally contains about one hundred thousand columns,
only 483 columns remain after applying this data reduction.
In addition,
among the remaining columns, there exists at least one column close to each of the endmembers.
We compared the endmember extraction accuracy and computation time of REDIC with those of the LP method in \cite{Miz25}.
The results show that
REDIC can achieve extraction accuracy comparable to that of the LP method
while being substantially faster; for the Urban dataset,
REDIC requires only 16 minutes, whereas the LP method takes 8 hours.

The remainder of this paper is organized as follows.
Section \ref{sec: formulation of endmember extraction and self-dictionary approaches}
formulates the endmember extraction problem and reviews self-dictionary methods.
Section \ref{sec: theoretical result on data reduction and algorithm development }
presents our theoretical results on data reduction and describes the data reduction algorithm.
Section \ref{sec: REDIC method}
describes the REDIC method in detail and discusses related works on REDIC.
Section \ref{sec: experiments}
reports the results of numerical experiments.
Section \ref{sec: concluding remarks} concludes the paper and discusses future research directions.
Detailed proofs of the theoretical results are provided in
Appendices~\ref{sec: analysis of reduced hyperspectral image data} and~\ref{sec: analysis of DR and DRS}.

\subsection*{Notation}
We write $\Real^{d \times n}_+$ for the set of nonnegative matrices of size $d \times n$.
We use $\zero$ to denote the vector of all zeros, $\one$ the vector of all ones, $\e_i$ the $i$th unit vector,
$O$ the matrix of all zeros, $I$ the identity matrix, and $\Pi$ a permutation matrix.
For a positive integer $n$, we write $[n] = \{1, \ldots, n\}$.
For a set $\SC \subset [n]$ and an element $s \in \SC$,
we use $\SC - s$ to denote the set obtained by removing $s$ from $\SC$, i.e., $\SC \setminus \{s\}$.

Let $\a \in \Real^d$ and $A \in \Real^{d \times n}$.
The notation $\a(i)$ denotes the $i$th entry of $\a$, and $A(i,j)$ denotes the $(i,j)$th entry of $A$.
We denote by $A^\trans$ the transpose of $A$ and by $\rank(A)$ the rank of $A$.
Let $A = [\a_1, \ldots, \a_n]$.
For $\KC \subset [n]$,
we use $A(\KC)$ to denote the submatrix of $A$ obtained by retaining only the columns whose indices belong to $\KC$.
For example, if $\KC = \{1,3\}$, then $A(\KC) = [\a_1, \a_3]$.
The conical hull of the columns of $A$ is denoted by $\cone(A)$, i.e., $\cone(A) = \{ A \x \mid \x \ge \zero \}$.
We use $\| \cdot \|$ to denote a vector or matrix norm.
In particular, $\| \cdot \|_p$ denotes the $L_p$ norm and $\| \cdot \|_F$ denotes the Frobenius norm.
The $L_p$ norm for $A \in \Real^{d \times n}$ is defined by $\| A \|_p = \max_{ \| \x \|_p = 1} \ \| A \x \|_p$.
When $p = 1$, we have $ \| A \|_1 = \max_{j=1, \ldots, n} \sum_{i=1}^d |A(i,j)| $.

\section{Self-Dictionary Approaches to the Endmember Extraction Problem}
\label{sec: formulation of endmember extraction and self-dictionary approaches}

\subsection{Problem Formulation} \label{subsec: problem formulation}
We define the endmember extraction problem as described in Problem~\ref{prob: endmember extraction},
where we assume that the HSI matrix follows the LMM under the pure-pixel assumption
and that the number of endmembers is known in advance.

\begin{problem} \label{prob: endmember extraction}
Suppose that an HSI matrix $A \in \Real^{d \times n}$ is generated by the LMM $A = WH + V$, as given in~\eqref{eq: LMM},
where the pure-pixel assumption holds; that is, $H$ has the form~\eqref{eq: H under pure-pixel assumption}.
Given the HSI matrix $A$ and the number of endmembers $r$,
the objective is to estimate the $r$ columns of the endmember matrix $W$.
\end{problem}

In this paper, we use the mean-removed spectral angle (MRSA)
to evaluate the similarity between the true endmember signatures and their estimates.
For $\c \in \Real^d$, we define $\ave(\c) = (\one^\trans \c / d) \cdot \one$.
The MRSA value between the spectral signature vectors $\a, \b \in \Real^d$ is defined as
\begin{align*}
    \mathrm{MRSA}(\a, \b)
    = \frac{100}{\pi}
    \arccos
    \frac{
        (\a - \ave(\a))^\trans (\b - \ave(\b))
    }{
        \| \a - \ave(\a) \|_2 \, \| \b - \ave(\b) \|_2
    }
    \in [0,100].
\end{align*}
A smaller MRSA value for $\a$ and $\b$ indicates that $\a$ is more similar to $\b$.

Problem \ref{prob: endmember extraction} is essentially equivalent to computing nonnegative matrix factorizations (NMFs)
under the separability assumption.
In the context of NMF, $r$ is referred to as the factorization rank.
Considerable research has been devoted to the separable NMF problem; see \cite{Fu19, Gil20} for surveys of recent advances.
Arora et al.\ \cite{Aro12a} showed that $W$ can be recovered from $A$ in polynomial time when $V = O$.
Even when $V \neq O$, it is still possible to extract a submatrix of $A$ that is close to $W$,
provided that the noise $V$ is below a certain threshold.
Separable NMFs have applications not only to endmember extraction from HSIs,
but also to topic modeling in document corpora \cite{Aro12b, Aro13}.
Other applications include community detection \cite{Hua19}, spectral clustering \cite{Miz22b}, and crowdsourcing \cite{Ibr19}.

\subsection{Review of Self-Dictionary Methods}
Numerous endmember extraction methods have been proposed to date.
These methods can be broadly classified into two categories: greedy methods and convex programming-based methods.
Representative examples of greedy methods include PPI \cite{Boa95}, N-FINDR \cite{Win99}, SPA \cite{Ara01, Gil14a}, and VCA \cite{Nas05},
all of which are widely used for endmember extraction.
Greedy methods have low computational cost and can therefore handle large-scale datasets,
although the accuracy of the extracted endmembers may not always be sufficient.
In contrast, convex programming-based methods generally achieve higher extraction accuracy,
but their high computational cost remains a drawback.

We consider self-dictionary approaches within the category of convex programming-based methods.
Let $A \in \Real^{d \times n}$ denote an HSI matrix, and let $r$ denote the number of endmembers.
Regarding the matrix $A$ as a dictionary and its columns as atoms,
we interpret Problem~\ref{prob: endmember extraction} as the task of identifying $r$ atoms from this dictionary
such that the entire dictionary can be well approximated by a convex combination of the selected atoms.
Based on this interpretation, we formulate the problem as the following sparse optimization problem:
\begin{alignat}{4} \label{eq: self-dictionary formulation}
    \min_{X \in \Real^{n \times n}} & \quad & \| A - AX \|                                                             & \quad &
    \text{subject to}               & \quad & \| X \|_{\mathrm{row}, 0} = r, \ \one^\trans X = \one^\trans, \ X \ge O,
\end{alignat}
where $\| X \|_{\mathrm{row}, 0}$ denotes the number of nonzero rows of $X$.

Solving problem~\eqref{eq: self-dictionary formulation} exactly is difficult
because of the combinatorial constraint $\| X \|_{\mathrm{row}, 0} = r$.
Self-dictionary methods therefore typically rely on a convex relaxation of this problem.
As mentioned in Section~\ref{sec: introduction}, this paper focuses on the LP method proposed by Bittorf et al.\ \cite{Bit12},
which employs an LP formulation as the convex relaxation.
This method was originally developed in the context of identifying hot topics in document corpora and was therefore named Hottopixx.
Since its original proposal,
the LP method has been extended and refined by several researchers \cite{Gil13, Gil14b, Miz22a}.
In what follows,
we review the version of the LP method proposed in \cite{Miz22a}.
This method is based on the following optimization model:
\begin{alignat*}{5}
    \HT: & \quad & \min_{X \in \Real^{n \times n}} & \quad & \| A - AX \|_1                 &       &                     \\
         &       & \text{subject to}               &       & \sum_{i=1}^{n} X(i,i) = r,     &       &                     \\
         &       &                                 &       & 0 \le X(i,j) \le X(i,i) \le 1, & \quad & i,j = 1, \ldots, n.
\end{alignat*}

Model $\HT$ relaxes the combinatorial constraint $\| X \|_{\mathrm{row}, 0} = r$ in problem \eqref{eq: self-dictionary formulation}
and can therefore be reformulated as an LP.
Feasible solutions $X$ may have more than $r$ nonzero rows.
However, the number of such rows is expected to be small,
since the constraints restrict the number of nonzero diagonal elements in $X$ and, moreover,
if the $i$th diagonal element $X(i,i)$ is zero, then the $i$th row of $X$ must also be zero.

The LP method consists of two main steps:
it first computes the optimal solution $X_{\opt}$ of $\HT$ and then performs a postprocessing step based on $X_{\opt}$.
In the postprocessing step, $r$ clusters of the column indices of the input matrix $A$ are formed according to
the diagonal elements of $X_{\opt}$.
One representative element is then chosen from each cluster and returned as the output.
Theorem 3.2 of \cite{Miz22a} guarantees that
the corresponding columns of $A$ indexed by these elements are close to the $r$ endmember signatures,
provided that the noise level is below a certain threshold.

From a theoretical perspective, the LP method is attractive,
as the theoretical result suggests that the accuracy of the extracted endmembers can be high.
However, its major drawback lies in the high computational cost.
Specifically,
the method requires solving the model $\HT$, which reduces to an LP with $O(n^2)$ variables and $O(n^2)$ constraints;
see \cite{Miz25} for details.
Consequently, the method becomes impractical for HSIs with a large number of pixels $n$.
To address this computational issue, a row and column expansion (RCE) algorithm was proposed in \cite{Miz25}
to solve model $\HT$ efficiently. The RCE algorithm is based on a column generation technique,
a well-known approach for solving large-scale LPs.
Nevertheless, even with RCE, the improvement in computation time is insufficient.
For example,
the Urban dataset requires 8 hours of computation,
whereas greedy methods such as SPA and VCA finish in only a few seconds.
This observation motivates us to develop a data reduction technique for hyperspectral images.

In addition to the LP method, various self-dictionary approaches have been proposed
for endmember extraction from HSIs \cite{Elh12, Ess12, Gil18, Ngu22}.
FGNSR, proposed by Gillis and Luce \cite{Gil18}, and MERIT, proposed by Nguyen et al.\ \cite{Ngu22},
are most closely related to the LP method.
These methods are based on the following optimization model:
\begin{alignat}{4} \label{eq: self-dictionary formulation for FGNSR and MERIT}
    \min_{X \in \Real^{n \times n}} & \quad & \frac{1}{2} \| A - AX \|_F^2 + \lambda \cdot \Phi(X) & \quad & \mbox{subject to} & \quad & X \in \FC
\end{alignat}
where $\lambda$ is a penalty parameter and $\Phi$ is a regularization term designed to promote row sparsity in $X$.
FGNSR and MERIT identify endmembers by leveraging the optimal solution of this model.
As with the LP method, solving this model incurs a high computational cost and becomes impractical for HSIs with a large number of pixels.
The work by Esser et al.\ \cite{Ess12} can be viewed as a self-dictionary approach to the endmember extraction problem,
whereas the work by Elhamifar et al.\ \cite{Elh12} addresses a related problem within a similar modeling framework.

\section{Theoretical Result on Data Reduction and Algorithm Development}
\label{sec: theoretical result on data reduction and algorithm development }
\subsection{Main Theorem} \label{subsec: main theorem}
We now present the main theoretical result (Theorem \ref{thm: main result}),
which addresses the research question posed in Section~\ref{sec: introduction}.
Let an HSI matrix $A$ follow the LMM with the pure-pixel assumption,
as given in \eqref{eq: LMM} and \eqref{eq: H under pure-pixel assumption}.
As discussed in Section~\ref{sec: introduction},
when $V = O$, the endmember signatures $\w_1, \ldots, \w_r$ correspond to the extreme rays of $\cone(A)$.
Thus, by removing the columns of $A$ that lie in its interior, the extreme rays remain and correspond to the endmembers.
In contrast, when $V \neq O$, these endmember signatures may no longer be the extreme rays of $\cone(A)$.
However, the theorem shows that, even after removing the columns located inside $\cone(A)$,
there still exist columns that lie close to the endmembers, provided that the noise level is below a certain threshold.

We introduce the necessary terminology to describe the theoretical result.
Let $M \in \Real^{d \times n}_+$ be factored as
\begin{align} \label{eq: separable matrix}
    M = W H \in \Real^{d \times n}_+, \quad
    W \in \Real^{d \times r}_+, \quad H = [I, \bar{H}] \Pi \in \Real^{r \times n}_+.
\end{align}
We say that such a matrix $M$ is {\it $r$-separable}.
If an $r$-separable matrix $M$ is perturbed by noise so that $A = M + V$ for $V \in \Real^{d \times n}$,
we say that $A$ is {\it nearly $r$-separable}.
Accordingly, if an HSI matrix follows the LMM with the pure-pixel assumption, it is nearly $r$-separable.
Our analysis makes the following assumption regarding the nearly $r$-separable matrix $A = M + V$:
\begin{assumption} \label{asm: nearly r-separable matrix}
    Let $A = M + V \in \Real^{d \times n}$ be a nearly $r$-separable matrix,
    where $M \in \Real^{d \times n}_+$ is $r$-separable of the form $M = WH$, as shown in \eqref{eq: separable matrix},
    and $V \in \Real^{d \times n}$ is noise.
    We assume that the following conditions hold:
    \begin{enumerate}[label={\normalfont(\alph*)}]
        \item Each column of $W$ and $H$ has unit $L_1$ norm.
        \item $\| V \|_1 \le \epsilon$ for some $\epsilon \ge 0$.
    \end{enumerate}
\end{assumption}

A similar assumption has been used in analyses of the noise robustness of LP methods for separable NMFs \cite{Gil13, Miz22a}.
While \cite{Miz22a} assumes $0 \le \epsilon < 1$,
this condition is automatically satisfied in the setting of Theorem~\ref{thm: main result} due to the remaining assumptions.
Part (a) can be imposed without loss of generality for $r$-separable matrices.
Our analysis employs the conditioning parameter $\rho(W)$ for $W \in \Real^{d \times r}_+$, defined by
\begin{align*}
    \rho(W) = \min_{\| \x \|_1 = 1} \| W \x \|_1.
\end{align*}
This parameter measures the degree of linear independence among the columns of $W$.
In particular,
$\rho(W) > 0$ if and only if the $r$ columns of $W$ are linearly independent.
Moreover, if part (a) of Assumption~\ref{asm: nearly r-separable matrix} holds, then $\rho(W) \le 1$.
These properties are formally established in Lemmas~\ref{lem: linear independence and tau} and \ref{lem: upper bound on tau}.
Since our analysis relies on the linear independence of the columns of $W$, we use the parameter $\rho(W)$.
For $A \in \Real^{d \times n}$,
define the family $\Gamma(A)$ by
\begin{align*}
    \Gamma(A) = \{\KC \subset [n] \mid \cone(A) = \cone(A(\KC)) \}.
\end{align*}
Our theoretical result is summarized in the following theorem.
\begin{theorem} \label{thm: main result}
    Let $A \in \Real^{d \times n}$ be a nearly $r$-separable matrix of the form $A = M + V$,
    where $M = WH$ is $r$-separable and $V$ is noise.
    Assume that Assumption~\ref{asm: nearly r-separable matrix} holds for $A$.
    Let $\KC \in \Gamma(A)$.
    If $\epsilon < \rho(W) / 9$, then
    there exist $k_1, \ldots, k_r \in \KC$ such that
    \begin{itemize}
        \item $k_1, \ldots, k_r$ are all distinct, and
        \item $\| \w_j - \a_{k_j} \|_1 < (9 / \rho(W) + 1)\epsilon$ for each $j \in [r]$.
    \end{itemize}
\end{theorem}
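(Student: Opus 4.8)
The plan is to locate, for each fixed $j \in [r]$, a single column of $A$ that survives in the reduced set $\KC$ and whose latent abundance is almost entirely concentrated on the $j$th endmember, and then to argue that any such column must be close to $\w_j$. Write $\rho = \rho(W)$. Since $H = [I, \bar{H}]\Pi$, there is a pure-pixel index $p_j$ with $\h_{p_j} = \e_j$, so the corresponding column of $M$ equals $W\e_j = \w_j$ and hence $\a_{p_j} = \w_j + \v_{p_j}$ with $\|\v_{p_j}\|_1 \le \epsilon$. The one use of the hypothesis $\KC \in \Gamma(A)$ is here: because $\a_{p_j} \in \cone(A) = \cone(A(\KC))$, there exist $y_k \ge 0$ for $k \in \KC$ with $\a_{p_j} = \sum_{k \in \KC} y_k \a_k$. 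Set $S = \sum_{k \in \KC} y_k$ and $\tilde{\h} = \sum_{k \in \KC} y_k \h_k$.

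The argument then rests on two quantitative estimates. First I would control the total weight $S$ via the all-ones functional. Since the columns of $W$ and $H$ have unit $L_1$ norm and are nonnegative, $\one^\trans \m_k = \one^\trans W \h_k = \one^\trans \h_k = 1$ for every $k$, so $\one^\trans \a_k = 1 + \one^\trans \v_k \in [1-\epsilon, 1+\epsilon]$. Applying $\one^\trans$ to the conic identity, together with $\one^\trans \a_{p_j} \in [1-\epsilon,1+\epsilon]$, yields $\frac{1-\epsilon}{1+\epsilon} \le S \le \frac{1+\epsilon}{1-\epsilon}$. Second, substituting $\a_k = W\h_k + \v_k$ and $\a_{p_j} = W\e_j + \v_{p_j}$ into the same identity gives $W(\e_j - \tilde{\h}) = \sum_{k \in \KC} y_k \v_k - \v_{p_j}$, whose right-hand side has $L_1$ norm at most $(S+1)\epsilon$. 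Invoking the defining inequality $\rho\|\x\|_1 \le \|W\x\|_1$ with $\x = \e_j - \tilde{\h}$ produces $\|\e_j - \tilde{\h}\|_1 \le \delta$, where $\delta := (S+1)\epsilon/\rho$.

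It remains to extract one good column and bound its distance to $\w_j$. From $\|\e_j - \tilde{\h}\|_1 \le \delta$ the $j$th coordinate satisfies $\tilde{\h}_j = \sum_{k \in \KC} y_k h_{jk} \ge 1 - \delta$; since $\tilde{\h}_j \le S \max_{k \in \KC} h_{jk}$, a weighted-average argument supplies an index $k_j \in \KC$ with $y_{k_j} > 0$ (hence genuinely retained) and $h_{j,k_j} \ge (1-\delta)/S$, i.e. $1 - h_{j,k_j} \le \eta := (S-1+\delta)/S$. Decomposing $\a_{k_j} - \w_j = (h_{j,k_j}-1)\w_j + \sum_{j' \ne j} h_{j',k_j}\w_{j'} + \v_{k_j}$ and using $\|\w_{j'}\|_1 = 1$ with $\sum_{j' \ne j} h_{j',k_j} = 1 - h_{j,k_j}$ gives $\|\a_{k_j} - \w_j\|_1 \le 2(1-h_{j,k_j}) + \|\v_{k_j}\|_1 \le 2\eta + \epsilon$. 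Substituting the worst case $S = (1+\epsilon)/(1-\epsilon)$ (the bound on $2\eta + \epsilon$ is increasing in $S$) and the estimate for $\delta$, and using $\rho \le 1$ from Lemma~\ref{lem: upper bound on tau}, the target $2\eta + \epsilon < (9/\rho + 1)\epsilon$ reduces, after cancelling $\epsilon$, to the elementary estimate $\frac{4(\rho+1)}{1+\epsilon} \le \frac{8}{1+\epsilon} < 9$.

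Finally, distinctness of $k_1,\dots,k_r$ follows from $\eta < \tfrac12$: the chain above gives $2\eta < 9\epsilon/\rho$, and the hypothesis $\epsilon < \rho/9$ is precisely what forces $9\epsilon/\rho < 1$, so $\eta < \tfrac12$; then $k_j = k_{j'}$ for $j \ne j'$ would give $h_{j,k_j} + h_{j',k_j} > 2(1-\eta) > 1$, contradicting the column-sum constraint $\one^\trans \h_{k_j} = 1$. I expect the main obstacle to be the weight-concentration step—converting the aggregate bound $\tilde{\h}_j \ge 1-\delta$ into a guarantee about one retained column—and, above all, keeping every estimate linear in $\epsilon$: it is the $\one^\trans$-bound on $S$ that prevents the error from amplifying and makes the clean threshold $\epsilon < \rho(W)/9$ suffice (note that strictness requires $\epsilon > 0$, since at $\epsilon = 0$ both sides of the distance bound vanish).
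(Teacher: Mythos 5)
Your proof is correct and follows essentially the same route as the paper: a conic representation of each (noisy) pure pixel over $A(\KC)$, a bound on the total weight of that representation (your $\one^\trans$-functional bound on $S$ is equivalent to the paper's estimate $|\,\|\c_j\|_1-1\,|\le 2\epsilon/(1-\epsilon)$), an application of $\rho(W)$ to force the aggregated abundance vector close to $\e_j$, extraction of one retained column with $H(j,k_j)>1/2$, the distance bound $2(1-H(j,k_j))+\epsilon$, and distinctness via the unit column sums of $H$. The only differences are cosmetic (unnormalized weights versus the paper's normalization $\bar{\c}_j$, and a marginally tighter intermediate constant), and your parenthetical remark that strictness of the final inequality degenerates at $\epsilon=0$ applies equally to the paper's own argument.
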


The proof is given in Appendix~\ref{sec: analysis of reduced hyperspectral image data}.
This result shows that $\KC \in \Gamma(A)$ contains columns of $A$
whose errors relative to the endmember signatures are bounded by $O(\epsilon / \rho(W))$,
provided that $\rho(W)$ is not too small.
However, when $\rho(W)$ is too small relative to $\epsilon$, this result does not apply
and hence does not guarantee the existence of such columns.
For instance, this situation can occur when the columns of $W$ are nearly linearly dependent.

Theorem~\ref{thm: main result} is closely related to the noise robustness results established for LP methods.
Here, we briefly discuss the relationship between these results,
focusing on the result presented in \cite{Miz22a},
which is summarized in Theorem~3.2 of that paper and is essentially equivalent to Theorem~3.5 of \cite{Gil13}.
In what follows, we write $\rho$ for $\rho(W)$ for brevity.
The result of \cite{Miz22a} involves the following parameters, defined for $W = [\w_1, \ldots, \w_r] \in \Real^{d \times r}$:
\begin{align*}
    \kappa = \min_{1 \le j \le r} \min_{\x \ge \zero} \| \w_j - W([r] - j) \x \|_1,
    \quad
    \xi  = \min_{1 \le j_1 \neq j_2 \le r} \| \w_{j_1} - \w_{j_2} \|_1.
\end{align*}
In the original definition, the second parameter is denoted by $\omega$.
Since $\omega$ is used for a different purpose in this paper, we use $\xi$
instead to avoid confusion.

By arguments similar to those used in Lemma~3.8 of \cite{Gil14c} and Lemma~7.1 of \cite{Gil20},
we can show that $\rho \le \kappa$ for every $W \in \Real^{d \times r}$.
Using this relationship, Theorem~3.2 of \cite{Miz22a} implies that,
if the noise level $\epsilon$ is bounded by $O(\rho \xi / r)$,
then the LP method returns an estimated endmember matrix $W_{\out}$
such that the error with respect to the true endmember matrix $W$, namely $\| W - W_{\out} \|_1$,
is bounded by $O(r \epsilon / \rho)$.
Theorem~3.2 of \cite{Miz22a} requires Assumption~\ref{asm: nearly r-separable matrix}.
Hence, $\xi$ satisfies $0 \le \xi \le 2$; see Section~2 of \cite{Miz22a} for details.
From this discussion and Theorem~\ref{thm: main result},
we see that $\KC \in \Gamma(A)$ contains estimates of the endmembers
whose errors are smaller by a factor of $r$ than that of the estimates obtained by the LP method.
It should be noted that the robustness result for the LP method used in this discussion
is weaker than the original result in \cite{Miz22a}, since we use the inequality $\rho \le \kappa$.

\subsection{Data Reduction Algorithms} \label{subsec: data reduction algorithms}
We develop algorithms to find a set $\KC \in \Gamma(A)$ with a small number of elements.
Algorithm~\ref{alg: DR}, referred to as DR, describes a procedure for identifying a small set $\KC \in \Gamma(A)$.
DR generates a finite number of sets $\KC_1, \ldots, \KC_t$
that satisfy $\cone(A(\KC_1)) = \cdots = \cone(A(\KC_t))$.
In particular, since $\KC_1 = [n]$ and $\KC_t = \KC$, the final output $\KC$ belongs to $\Gamma(A)$.
Moreover, removing any element from $\KC$ yields a set that no longer belongs to $\Gamma(A)$.
Therefore, the output $\KC$ is a set with the smallest number of elements in $\Gamma(A)$.
A formal statement of this result is presented in Theorem~\ref{thm: analysis of DR and DRS outputs}
in Appendix~\ref{sec: analysis of DR and DRS}.

\begin{algorithm}[h]
    \caption{DR: Data reduction}
    \label{alg: DR}
    \smallskip
    Input: $A \in \Real^{d \times n}$ \\
    Output: $\KC \subset [n]$
    \smallskip
    \begin{enumerate}[label={\arabic*.}]
        \item Initialize $\KC_1 \gets [n]$ and $\ell \gets 1$.
        \item For $i = 1, \ldots, n$, do:
              if $\a_i \in \cone(A(\KC_\ell - i))$,
              then set $\KC_{\ell+1} \gets \KC_\ell - i$ and increment $\ell \gets \ell + 1$.
        \item Set $\KC \gets \KC_\ell$ and return $\KC$.
    \end{enumerate}
\end{algorithm}

To perform the feasibility test of whether $\a_i \in \cone(A(\KC_\ell - i))$,
we formulate it as a nonnegative least-squares problem:
\begin{alignat}{2} \label{eq: NNLS for feasibility test}
    \min_{\x \ge \zero} \quad \| A(\KC_\ell - i)\x - \a_i \|_2^2.
\end{alignat}
Efficient algorithms for solving this problem are available, such as the active set method and the projected gradient method.
The optimal value is zero if and only if $\a_i \in \cone(A(\KC_\ell - i))$.
We adopt this formulation for the feasibility test in Step~2.
Let $\x_{\sol}$ denote the numerically obtained solution to problem~\eqref{eq: NNLS for feasibility test}.
If $\x_{\sol}$ satisfies
\begin{align*}
    \| A(\KC_\ell - i)\x_{\sol} - \a_i \|_2 < \epsilon_{\feas},
\end{align*}
where $\epsilon_{\feas}$ is a small positive tolerance,
we regard $\a_i$ as belonging to $\cone(A(\KC_\ell - i))$ and remove $\a_i$ from $A$.

To enhance the computational efficiency of DR, we employ a data-splitting technique.
To simplify the description of this procedure, we introduce the following notation.
For a positive integer $m$,
let $\IC = \{i_1, \ldots, i_m\}$ be a set of $m$ distinct integers, and let $\JC \subset [m]$.
We use the notation $\IC(\JC)$ to denote the set $\{ i_j \mid j \in \JC \}$.
The DR algorithm with data splitting proceeds as follows.
First, we cluster the $n$ columns of $A$ into $p$ groups using the $k$-means method.
Let $\IC_1, \ldots, \IC_p$ denote the resulting $p$ groups of the column index set $[n]$.
These groups satisfy $\IC_1 \cup \cdots \cup \IC_p = [n]$ and $\IC_i \cap \IC_j = \emptyset$ for all distinct $i,j \in [p]$.
The collection $\{ \IC_1, \ldots, \IC_p \}$ is referred to as a \emph{$p$-way partition} of $[n]$.
Next, we apply DR to each submatrix $A(\IC_u)$ for $u \in [p]$
and obtain the corresponding outputs, denoted by $\JC_1, \ldots, \JC_p$.
For each $u \in [p]$, set $\KC_u = \IC_u(\JC_u)$.
We then construct $\IC = \KC_1 \cup \cdots \cup \KC_p$ and apply DR once more to the submatrix $A(\IC)$.
Let $\JC$ be the output, and set $\KC = \IC(\JC)$.
We return $\KC$ as the final output.
The overall procedure is summarized in Algorithm~\ref{alg: DRS}, which we refer to as DRS.
The parameter $p$ is a positive integer representing the number of clusters.
Similar to DR, the output $\KC$ of DRS is a set with the smallest number of elements in $\Gamma(A)$.
A formal statement of this result is presented in Theorem~\ref{thm: analysis of DR and DRS outputs} in Appendix~\ref{sec: analysis of DR and DRS}.

\begin{algorithm}[h]
    \caption{DRS: Data reduction via splitting}
    \label{alg: DRS}
    \smallskip
    Input: $A \in \Real^{d \times n}$; parameter $p$ \\
    Output: $\KC \subset [n]$
    \smallskip
    \begin{enumerate}[label={\arabic*.}]
        \item Apply the $k$-means method to the columns of $A$
              to construct the $p$-way partition $\{\IC_1, \ldots, \IC_p\}$ of the column index set $[n]$ of $A$.
        \item For $u = 1, \ldots, p$, apply DR to $A(\IC_u)$ to obtain the output $\JC_u$.
              Set $\KC_u = \IC_u(\JC_u)$.
        \item Let $\IC = \KC_1 \cup \cdots \cup \KC_p$.
              Apply DR to $A(\IC)$ to obtain the output $\JC$.
              Set $\KC = \IC(\JC)$ and return $\KC$.
    \end{enumerate}
\end{algorithm}

We discuss the computational cost of DRS in comparison with that of DR.
The input matrix of DR has $n$ columns,
whereas DRS operates on submatrices with fewer columns in each iteration.
Specifically, in Step~2 of DRS, the input matrix of DR has $|\IC_u|$ columns for each $u \in [p]$,
and in Step~3, it has $|\IC|$ columns.
Accordingly, the computational cost of solving problem~\eqref{eq: NNLS for feasibility test}
during the feasibility tests in DRS is smaller than that in DR.
On the other hand, the number of feasibility tests performed by DRS is larger than that of DR,
since DR performs $n$ tests, whereas DRS performs $\sum_{u=1}^p |\IC_u| + |\IC| = n + |\IC|$ tests.
We show in Section~\ref{subsubsec: computational time of DRS} that,
if the parameter $p$, which represents the number of clusters, is appropriately chosen,
the number of feasibility tests performed by DRS is nearly the same as that of DR.
Consequently, the overall computational time of DRS is substantially smaller than that of DR.

We conclude this section with a discussion of related works.
Finding a set $\KC \in \Gamma(A)$ with the smallest number of elements
is referred to as the conical hull problem in the paper by Dul\'{a} et al.\ \cite{Dul98}.
They proposed an efficient algorithm for solving this problem and reported numerical results.
Starting with $\KC = \emptyset$, their algorithm constructs $\KC$ by sequentially adding elements.
This contrasts with our DR algorithm, which begins with $\KC = [n]$, where $n$ is the number of columns of $A$,
and removes elements one by one.
At each iteration, their algorithm performs a feasibility test using an LP formulation,
whereas our DR algorithm employs a nonnegative least-squares formulation.
As mentioned in \cite{Fuk18}, their algorithm can be regarded as an instance of the algorithmic framework studied in \cite{Cla94}.

Assume that $A \in \Real^{d \times n}$ and $\u \in \Real^d$ satisfy $A^\trans \gammab > 0$ and $\u^\trans \gammab > 0$
for some $\gammab \in \Real^d$.
Then, checking whether $\u \in \cone(A)$ is equivalent to checking whether $\bar{\u} \in \conv(\bar{A})$
where $\bar{\u}$ is obtained by scaling $\u$, and $\bar{A}$ is obtained by scaling the columns of $A$.
Here, $\conv(\bar{A})$ denotes the convex hull of the columns of $\bar{A}$.
Such a vector $\gammab$ always exists when $\u > 0$ and $A > 0$;
hence, this assumption is likely to hold in Step~2 of DR when the algorithm is applied to HSI matrices.
Kalantari \cite{Kal15} developed a geometric algorithm called the Triangle Algorithm
for testing whether $\u \in \conv(A)$ and analyzed its computational complexity.

\section{The REDIC Method for Endmember Extraction} \label{sec: REDIC method}
\subsection{Algorithm Description} \label{subsec: algorithm description}
We develop REDIC, a data-reduced self-dictionary method
for endmember extraction from HSIs by combining DRS with the LP method of \cite{Miz22a}.
Here, we describe the details of REDIC.
To implement the LP method, REDIC employs the approach proposed in \cite{Miz25}.
This approach is referred to as EEHT (efficient and effective Hottopixx), where Hottopixx is the original name of the LP method.

First, EEHT applies a dimensionality reduction technique to the input HSI matrix.
Given $A \in \Real^{d \times n}$ and a positive integer $r$, we compute the top-$r$ truncated singular value decomposition (SVD) of $A$ as
\[
    A_r = U_r \Sigma_r V_r^\trans.
\]
Here, $\Sigma_r \in \Real^{r \times r}$ is a diagonal matrix whose entries are the top $r$ singular values of $A$,
and $U_r \in \Real^{d \times r}$ (resp.\ $V_r \in \Real^{n \times r}$) are the left (resp.\ right) singular vectors
associated with these singular values.
We then construct a reduced matrix $A' = \Sigma_r V_r^\trans \in \Real^{r \times n}$.
It should be noted that if $A$ follows the LMM, i.e., $A = WH + V$, then $A'$ also follows the LMM, i.e., $A' = W'H + V'$,
but with $r$ rows instead of $d$ rows.
Second, EEHT constructs the model $\HT$ using $A'$ instead of $A$ and solves it with the RCE algorithm proposed in \cite{Miz25}.
Finally, EEHT performs a postprocessing step:
it partitions the column index set of $A$ into $r$ clusters based on the optimal solution of $\HT$
and chooses elements from these clusters according to a prescribed criterion.
For a detailed description of EEHT, we refer readers to Algorithm~4 in \cite{Miz25}.
The experimental results demonstrate that RCE significantly reduces the computation time required to solve $\HT$.
Moreover, the accuracy of the extracted endmembers can be improved
when the postprocessing step chooses representative elements from clusters based on their centroids, a procedure referred to as method-C.
EEHT employing method-C for postprocessing is denoted as EEHT-C in \cite{Miz25}.

We design the framework of REDIC as follows.
First, it applies SVD-based dimensionality reduction and DRS to the input matrix as a preprocessing step
and then constructs a reduced matrix.
Second, it builds the model $\HT$ using the reduced matrix and solves it with RCE.
Finally, it performs the postprocessing using method-C.
As shown in Section~\ref{sec: experiments},
this preprocessing step allows for the elimination of a large number of columns from HSI matrices.
However, the performance of endmember extraction achieved by this framework is not always satisfactory.
Therefore, to improve its performance, we incorporate a data augmentation technique and an averaging technique into the framework.

Let $A'$ be the reduced matrix obtained by applying SVD-based dimensionality reduction to the input matrix $A$.
Let $\KC$ be the index set obtained after applying DRS to $A'$.
The data augmentation technique randomly selects a predetermined number of elements from the set $[n] \setminus \KC$
and adds them to $\KC$.
After this augmentation, endmember extraction is performed on the submatrix of $A'$ whose columns are indexed by the augmented set.
This technique can improve the performance of endmember extraction;
however, the results may vary due to the inherent randomness of the selection process.
To mitigate this variability, we employ an averaging strategy:
the endmember extraction process is repeated multiple times with different random selections,
and the resulting estimates are then averaged.

We summarize the overall procedure of REDIC in Algorithm~\ref{alg: REDIC}.
The parameters $r$ and $p$ are positive integers representing the estimated number of endmembers
and the number of clusters for DRS, respectively.
The parameters $\lambda$ and $\tau$ are nonnegative integers
representing the number of elements added in the data augmentation step and the number of repetitions for averaging, respectively.

\begin{algorithm}[h]
    \caption{REDIC: Data-reduced self-dictionary method for endmember extraction}
    \label{alg: REDIC}
    \smallskip
    Input: $A \in \Real^{d \times n}$; parameters $r$, $p$, $\lambda$, $\tau$ \\
    Output: $\hat{\w}_1, \ldots, \hat{\w}_r \in \Real^{d}$
    \smallskip
    \begin{enumerate}[label={\arabic*.}]
        \item Compute the top-$r$ truncated SVD $A_r = U_r \Sigma_r V_r^\trans$ of $A$
              and construct $A' = \Sigma_r V_r^\trans \in \Real^{r \times n}$.
        \item Apply DRS to $A'$ with parameter $p$ to obtain $\KC \in \Gamma(A')$.
        \item For $j = 1, \ldots, \tau$, do:
              \begin{enumerate}[label={3-\arabic*.}]
                  \item If $\lambda - |\KC| \le 0$, then  set $\KC_{\add} = \emptyset$;
                        otherwise, randomly select $\lambda - |\KC|$ elements from $[n] \setminus \KC$ and construct the set $\KC_{\add}$.
                  \item Construct the model $\HT$ using $A'(\KC \cup \KC_{\add})$ and $r$, solve it with RCE,
                        and perform the postprocessing step using method-C to obtain $r$ indices $i_1, \ldots, i_r$.
                  \item Set $W_j = [\a_{i_1}, \ldots, \a_{i_r}]$ for the columns $\a_{i_1}, \ldots, \a_{i_r}$ of $A$.
              \end{enumerate}
        \item If $\tau = 1$, set $\hat{W} = W_1$ and go to Step~7; otherwise go to Step~5.
        \item Initialize $C \gets W_1$. For $j = 2, \ldots, \tau$, do:
              \begin{enumerate}[label={5-\arabic*.}]
                  \item Update $C \gets \tfrac{1}{j-1} ( W_1 + \cdots + W_{j-1} )$.
                  \item Rearrange the columns of $W_j$ so as to minimize the sum of MRSA values
                        between corresponding columns of $C$ and $W_j$.
              \end{enumerate}
        \item Set $\hat{W} = \tfrac{1}{\tau} ( W_1 + \cdots + W_{\tau} )$.
        \item Return the columns $\hat{\w}_1, \ldots, \hat{\w}_r$ of $\hat{W}$ and terminate.
    \end{enumerate}
\end{algorithm}

\subsection{Discussion on Prior Work} \label{subsec: discussion on prior work}
We discuss related work on REDIC for endmember extraction.
Ifarraguerri and Chang \cite{Ifa99} proposed convex cone analysis, which aims to identify endmember signatures from HSIs.
Their method can be viewed as based on a geometric interpretation of the LMM.
The fundamental idea of their algorithm is
to first identify multiple extreme rays of the cone generated by the columns of the input HSI matrix
and then select a subset of these rays as endmember signatures.
Similar to REDIC, their method identifies a set of extreme rays, although for a different purpose.
Soussen et al.\ \cite{Sou09} used convex cone analysis to reduce the computational burden of endmember extraction methods.
For an HSI matrix $A$,
their method heuristically computes an index set $\KC$ such that $\cone(A) = \cone(A(\KC))$
and then applies a Bayesian positive source separation algorithm to the submatrix $A(\KC)$.
Awasthi, Kalantari, and Zhang \cite{Awa18} developed an algorithm for detecting the vertices of the convex hull of a
set of points, and applied it to the computation of near-separable NMFs.
Their method enumerates all vertices of the convex hull of the columns of the input matrix
using the Triangle Algorithm \cite{Kal15} as a subroutine, and then selects a subset of these vertices as the output.
Hence, it shares similarities with the work of Ifarraguerri and Chang \cite{Ifa99}.
The authors of  \cite{Awa18} analyzed the robustness of their method to noise and its computational complexity.
As mentioned in Section~\ref{sec: introduction},
Esser et al.\ \cite{Ess12} and Gillis and Luce \cite{Gil18} used clustering-based methods
to eliminate redundant columns from HSI matrices,
motivated by the high computational cost of self-dictionary methods in endmember extraction.
Kumar, Sindhwani, and Kambadur \cite{Kum13} developed an algorithm called XRAY
for computing near-separable NMFs, which can be viewed as a greedy method.
XRAY identifies the extreme rays of the cone generated by the columns of the input matrix
by solving nonnegative least-squares problems multiple times.
The algorithm takes an estimate of the factorization rank $r$ as input and identifies extreme rays sequentially
until $r$ extreme rays have been identified.
Gillis \cite{Gil14c} developed an algorithm called SNPA by modifying XRAY.
SNPA was shown to be robust to noise both theoretically and experimentally.

\section{Experiments}\label{sec: experiments}
\subsection{Objectives, Implementation Details, and Datasets} \label{subsec: objectives, implementation details, and datasets}
We implemented REDIC in MATLAB and evaluated its performance through numerical experiments.
The first experiment (Section~\ref{subsec: performance evaluation of DRS}) was conducted
to examine the behavior of the DRS module incorporated into REDIC,
in terms of the size of the resulting matrix, its ability to retain the endmembers present in the original HSI data,
and the computation time required to perform DRS.
The second experiment (Section~\ref{subsec: performance evaluation of REDIC}) was conducted
to evaluate the performance of REDIC in terms of both endmember extraction accuracy and computation time.
All experiments were carried out in MATLAB R2024b on dual Intel Xeon Gold 6336Y processors with 256 GB of memory.

In what follows, we describe the implementation details and datasets used in our experiments.
REDIC performs DRS, which in turn invokes DR to find a set $\KC \in \Gamma(A)$ for a matrix $A$.
DR determines the feasibility of whether $\a_i \in \cone(A(\KC_\ell - i))$
by solving problem~\eqref{eq: NNLS for feasibility test}.
We used the MATLAB function \texttt{lsqnonneg}, which implements the active set method, to solve this problem.
The tolerance parameter $\epsilon_{\feas}$ in the feasibility test was set to $10^{-8}$.
REDIC also performs RCE to solve the model $\HT$.
For the implementation of RCE, we used a reorganized and speed-improved version of the MATLAB code developed in \cite{Miz25},
in which CPLEX is employed as the LP solver.
The comparison of computation time with the previous version of the RCE code is described in Remark~\ref{rem: comparison of computation time with previous RCE code}.
RCE first solves a small-scale subproblem of $\HT$ and then gradually increases the problem size until the optimal solution of $\HT$ is obtained.
It includes the parameters $\zeta$ and $\eta$ to control the size of the initial subproblem.
These parameter values were set as follows to maintain consistency with the experimental setup in \cite{Miz25}:
for an input matrix with $n$ columns,
if $n \le 300$, then $(\zeta, \eta) = (0, n)$; if $300 < n \le 50000$, then $(\zeta, \eta) = (10, 100)$;
and if $n > 50000$, then $(\zeta, \eta) = (50, 300)$.

We employed three real HSI datasets in our experiments: Jasper Ridge, Samson, and Urban, which are widely used for evaluating endmember extraction methods.
These datasets were constructed following the procedure described in \cite{Zhu17} and are the same as those used in \cite{Miz25}.
The details of each dataset are as follows:
\begin{itemize}
    \item \textbf{Jasper Ridge:} The Jasper Ridge HSI was acquired by the AVIRIS sensor and consists of $512 \times 614$ pixels with 224 bands
          covering wavelengths from 360 nm to 2500 nm.
          We used a subimage of size $100 \times 100$ pixels extracted from the original image, containing 198 valid bands
          after removing the degraded ones.
          This region includes four endmembers: Tree, Water, Soil, and Road.

    \item \textbf{Samson:} The Samson HSI consists of $952 \times 952$ pixels with 156 bands covering wavelengths from 400 nm to 900 nm.
          We used a subimage of size $95 \times 95$ pixels from the original image with all 156 bands retained.
          This region includes three endmembers: Soil, Tree, and Water.

    \item \textbf{Urban:} The Urban HSI was acquired by the HYDICE sensor and consists of $307 \times 307$ pixels with 210 bands
          covering wavelengths from 400 nm to 2500 nm.
          We used the entire image with 162 valid bands after removing the degraded ones.
          This region contains four to six endmembers.
          Following the setting in \cite{Gil18, Miz25}, we set the number of endmembers to six: Asphalt, Grass, Tree, Roof~1, Roof~2, and Soil.
\end{itemize}

Table~\ref{tab: dataset specifications} summarizes, for each dataset,
the size of the HSI matrix $A \in \Real^{d \times n}$,
the number of endmembers $r$, and the specific names of the endmembers considered in our experiments.
Figure~\ref{fig: rgb images of datasets} shows the RGB images of these datasets.
\begin{table}[h]
    \centering
    \footnotesize
    \caption{Specifications of the HSI datasets.}
    \label{tab: dataset specifications}
    \begin{tabular}{l l l}
        \toprule
        Dataset      & $(d, n, r)$       & Endmembers                                 \\
        \midrule
        Jasper Ridge & $(198, 10000, 4)$ & Tree, Water, Soil, Road                    \\
        Samson       & $(156, 9025, 3)$  & Soil, Tree, Water                          \\
        Urban        & $(162, 94249, 6)$ & Asphalt, Grass, Tree, Roof~1, Roof~2, Soil \\
        \bottomrule
    \end{tabular}
\end{table}

\begin{figure}[h]
    \centering
    \includegraphics[scale=0.38]{./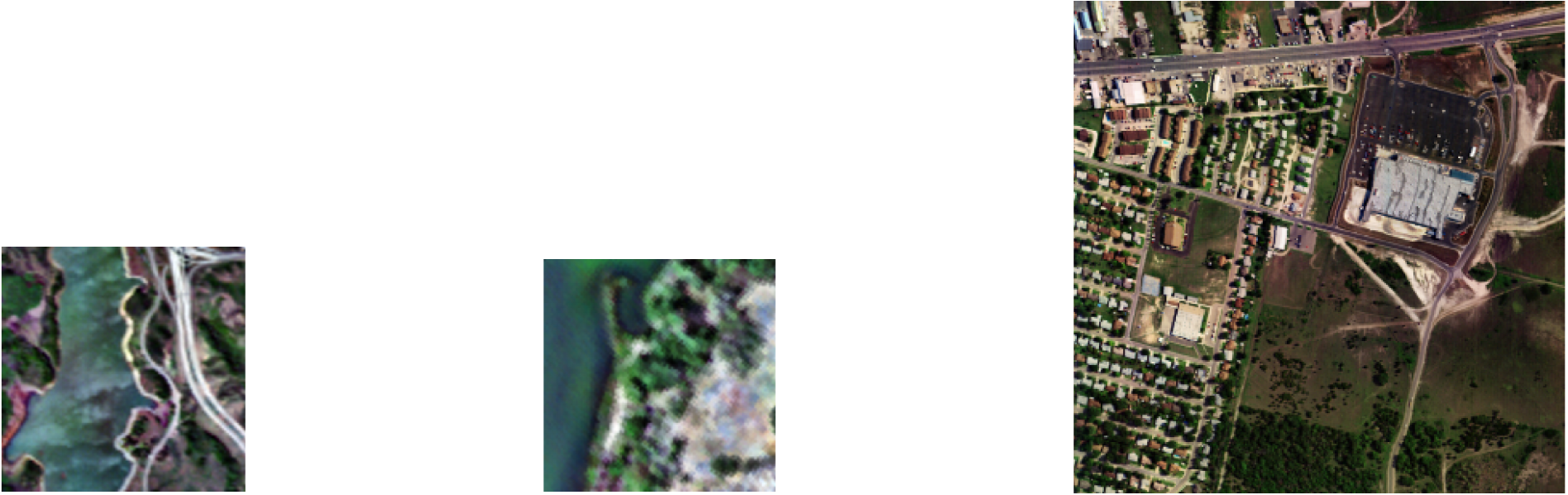}
    \caption{RGB images of the Jasper Ridge (left), Samson (center), and Urban (right) datasets.}
    \label{fig: rgb images of datasets}
\end{figure}

\subsection{Performance Evaluation of DRS} \label{subsec: performance evaluation of DRS}
\subsubsection{Experimental Setup} \label{subsubsec: experimental setup for DRS experiments}
We present experimental results evaluating the output of DRS and its computation time
in Section~\ref{subsubsec: evaluation of the DRS output} and Section~\ref{subsubsec: computational time of DRS}, respectively.
Here, we describe the experimental setup for these experiments.

\paragraph{Tested Algorithms}
We evaluated the DRS module incorporated into REDIC (Steps~1 and 2, and part of Step~3).
For clarity, we briefly describe the tested algorithms below.
The algorithm takes a matrix $A$ and parameters $r$, $p$, and $\lambda$ as input:
\begin{enumerate}[label={\arabic*.}]
    \item Construct the reduced matrix $A'$ by computing the top-$r$ truncated SVD of $A$.
    \item Apply DRS to $A'$ with parameter $p$ to obtain $\KC \in \Gamma(A')$.
    \item Repeat this procedure 50 times to generate 50 sets of $\KC_{\add}$.
          If $\lambda - |\KC| \le 0$, then all sets are empty; otherwise, each set is constructed
          by randomly selecting $\lambda - |\KC|$ elements from $[n] \setminus \KC$.
\end{enumerate}
For comparison, we also evaluated the performance of DR.
The tested DR algorithm corresponds to Steps~1 and~2 of the above procedure, with DRS replaced by DR:
it takes a matrix $A$ and a parameter $r$ as input,
computes the reduced matrix $A'$ through the top-$r$ truncated SVD of $A$,
and then applies DR to $A'$ to obtain $\KC \in \Gamma(A')$.

\paragraph{Datasets Used for Evaluation}
We used synthetic HSI datasets to evaluate the output of DRS and the three real HSI datasets to evaluate its computation time.
The synthetic datasets were generated from the three real datasets following the procedure described in Section~VII-B of \cite{Miz25}.
To make this paper self-contained, we summarize the procedure here.
The endmember signatures for the real HSI datasets were identified in \cite{Zhu17}, and
these signatures were used to generate the synthetic datasets.
Let $A^{\real} \in \Real^{d \times n}$ denote the HSI matrix for a given HSI dataset,
and let $\w_1^{\ident}, \ldots, \w_r^{\ident}$ represent the identified endmember signatures.
The following procedure was used to construct $W \in \Real^{d \times r}$, $H \in \Real^{r \times n}$, and $V \in \Real^{d \times n}$:
\begin{enumerate}[label={\arabic*.}]
    \item Normalize all columns of $A^{\real}$ to have unit $L_1$ norm.
    \item For $j = 1, \ldots, r$, compute
          \[
              i_j = \arg\min_{i \in [n]} \MRSA(\w_j^{\ident}, \a^{\real}_i),
          \]
          where $\a^{\real}_i$ is the $i$th column of $A^{\real}$.
    \item Set $W = A^{\real}(\{i_1, \ldots, i_r\})$.
    \item Compute the optimal solution $X$ to the following convex optimization problem:
          \[
              \min_{X \in \Real^{r \times n}} \ \|A^{\real} - WX\|_F^2
              \quad \text{subject to} \quad \one^\trans X = \one^\trans,\ X \ge O,
          \]
          and set $X(\{i_1, \ldots, i_r\}) = I$.
    \item Set $H = X$ and $V = A^{\real} - WH$.
\end{enumerate}

Three synthetic datasets were generated using the matrices $W$, $H$, and $V$ obtained through the above procedure:
Dataset~1 from Jasper Ridge, Dataset~2 from Samson, and Dataset~3 from Urban.
Each dataset consists of HSI matrices of the form
$ A = WH + (\nu / \| V \|_1) \cdot V \in \Real^{d \times n}, $
where the noise intensity level $\nu$ was varied from $0$ to $1.5$ in increments of $0.1$,
resulting in 16 HSI matrices with different noise levels.
The matrix size $(d,n)$ and the number of endmembers $r$ for Datasets~1--3 are identical to those of the corresponding original HSI datasets.
When $\nu = \|V\|_1$,
the HSI matrix $A$ coincides with $A^{\real}$, whose columns are all normalized to have unit $L_1$ norm.
In particular, $\|V\|_1$ equals 0.61, 0.15, and 1.15 for Datasets~1--3, respectively.

\paragraph{Evaluation Metrics}
We used the following metrics to evaluate the output of DRS.
The reconstruction error was introduced to assess
how well the original reduced matrix $A'$ can be reconstructed from the submatrix $A'(\KC)$.
This metric is defined as follows. Recall that $A'$ is of size $r \times n$, with columns $\a'_1, \ldots, \a'_n$.
First, we compute the optimal solutions $\x_1, \ldots, \x_n$ to the following nonnegative least-squares problems:
\[
    \x_i = \arg\min_{\x \ge \zero} \, \| A'(\KC)\x - \a'_i \|_2^2 \quad \text{for } i \in [n].
\]
The reconstruction error of $A'$ based on $A'(\KC)$ is then defined as
\[
    \sqrt{\frac{1}{rn} \sum_{i=1}^{n} \| A'(\KC)\x_i - \a'_i \|_2^2}.
\]

The distance between $A(\KC \cup \KC_{\add})$ and $W$ was introduced
to quantify how closely each column of $W$ can be matched by at least one column of $A(\KC \cup \KC_{\add})$.
Let $\dist(\a, \b)$ denote either the $L_1$ distance or the MRSA value between $\a, \b \in \Real^d$.
For each column $\w_i$ of $W$ with $i \in [r]$,
we select an index $k_i$ from $\KC \cup \KC_{\add}$ corresponding to the column of $A$ closest to $\w_i$ under the chosen metric:
\[
    k_i = \arg\min_{k \in \KC \cup \KC_{\add}} \, \dist(\w_i, \a_k) \quad \text{for } i \in [r].
\]
We then define the distance between $A(\KC \cup \KC_{\add})$ and $W$
as the mean of these distances,
\[
    \frac{1}{r} \sum_{i=1}^r \dist(\w_i, \a_{k_i}),
\]
which we refer to as the \emph{$L_1$ distance (resp.\ MRSA) of the DRS output $\KC \cup \KC_{\add}$}
with respect to the reference endmember matrix $W$, when $\dist$ represents the $L_1$ distance (resp.\ MRSA value).
These metrics were used to assess the similarity between $A(\KC \cup \KC_{\add})$ and $W$.

\subsubsection{Evaluation of the DRS Output} \label{subsubsec: evaluation of the DRS output}
We conducted experiments to evaluate the output of DRS.
We ran the DRS algorithm, described in Section~\ref{subsubsec: experimental setup for DRS experiments}, on Datasets~1--3.
The algorithm uses parameters $r$, $p$, and $\lambda$.
The parameter $\lambda$ represents the number of elements added in the data augmentation technique,
while $p$ represents the number of clusters used in the data-splitting technique.
We varied $\lambda$ from 0 to 250 in increments of 50 for Datasets~1 and~2,
and from 0 to 2500 in increments of 500 for Dataset~3.
Each value of $\lambda$ corresponds to approximately $0\%$--$2.5\%$ of the number of columns $n$ in each dataset,
with increments of 0.5\%.
In addition, for Dataset~3, $\lambda$ was further varied from 0 to 12500 in increments of 2500,
corresponding to approximately $0\%$--$12.5\%$ of the number of columns $n$, with increments of 2.5\%.
We set $p = 30$ in these experiments and set $r$ equal to the number of endmembers for each dataset.
After running the algorithm, we evaluated the following quantities:
(i) the size of $\KC$,
(ii) the reconstruction error of $A'$ based on $A'(\KC)$, and
(iii) the distance between $A(\KC \cup \KC_{\add})$ and $W$.

\begin{figure}[t]
    \centering

    \begin{subfigure}[t]{0.48\linewidth}
        \centering
        \includegraphics[width=\linewidth]{./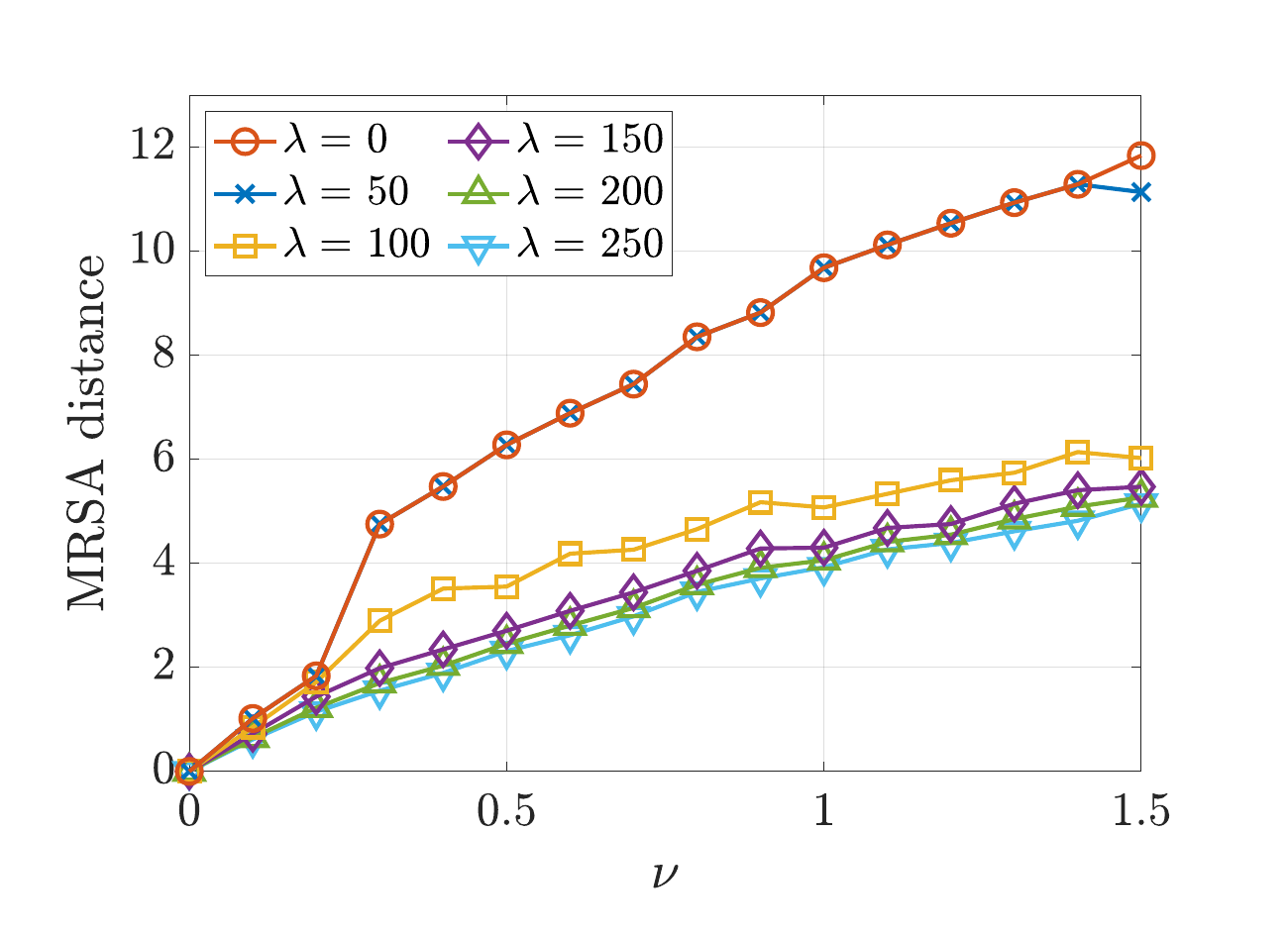}
        \caption{Dataset~1}
    \end{subfigure}
    \hfill
    \begin{subfigure}[t]{0.48\linewidth}
        \centering
        \includegraphics[width=\linewidth]{./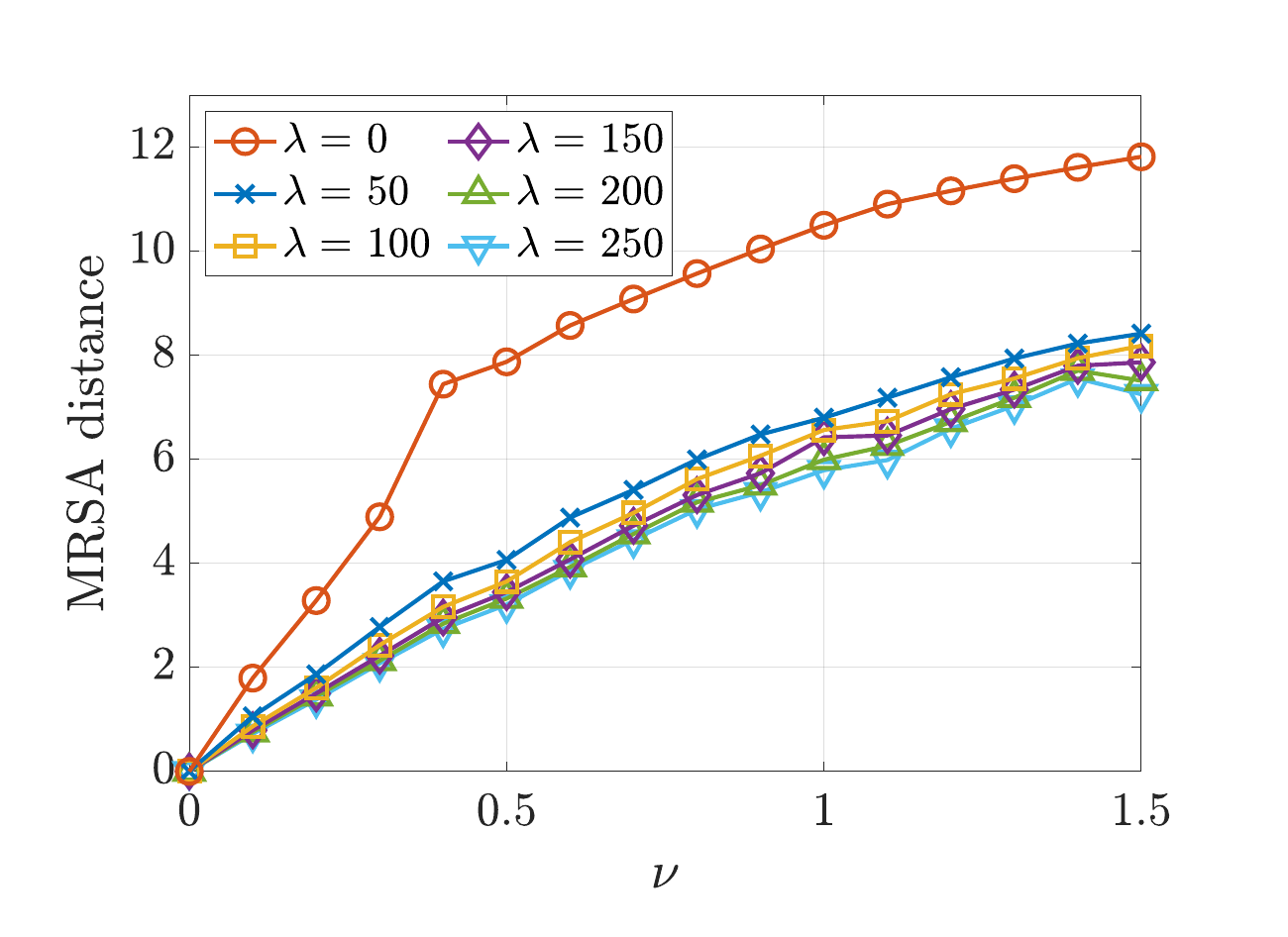}
        \caption{Dataset~2}
    \end{subfigure}

    \vspace{3mm}
    \begin{subfigure}[t]{0.48\linewidth}
        \centering
        \includegraphics[width=\linewidth]{./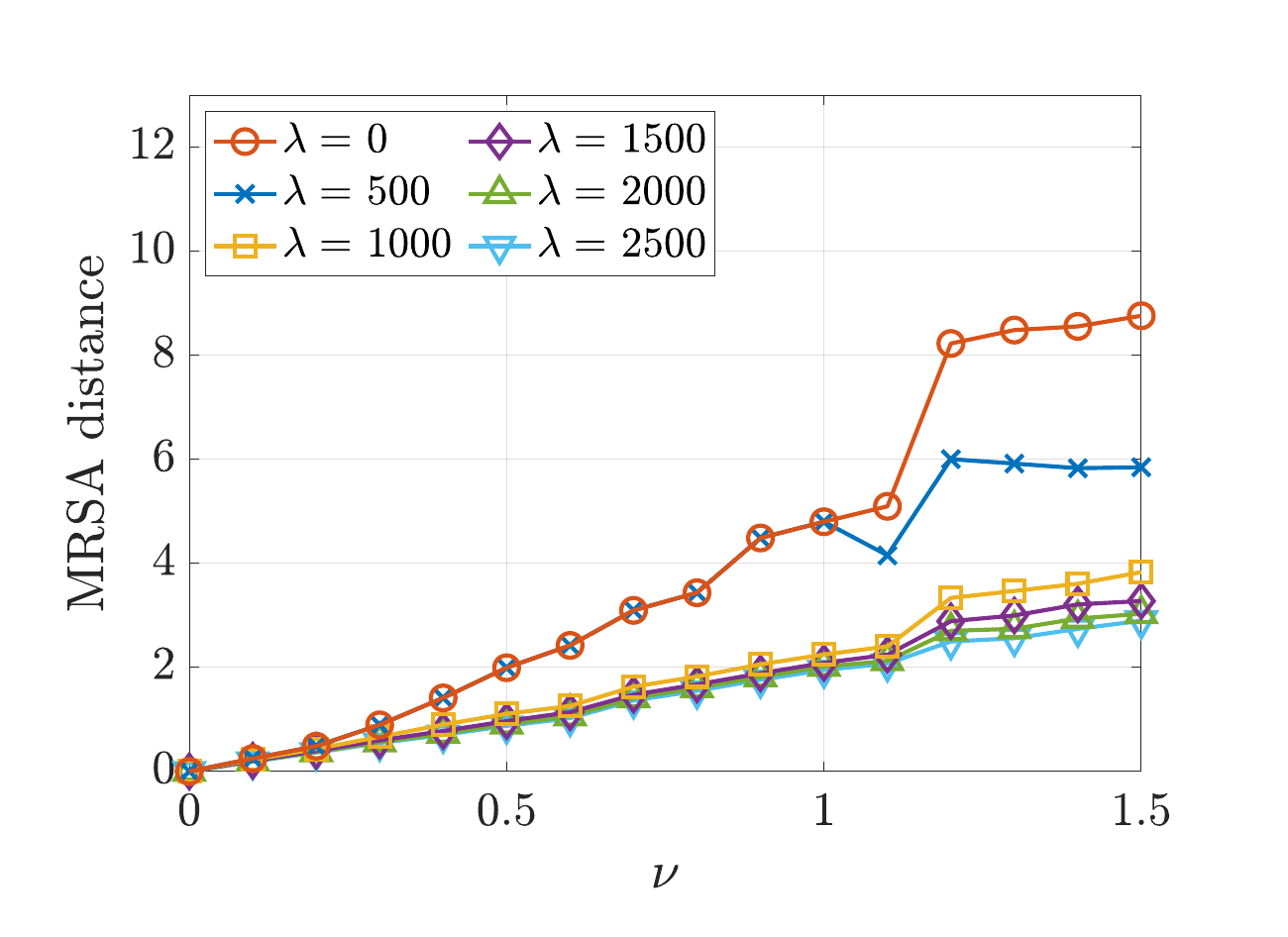}
        \caption{Dataset~3 with $\lambda$ from 0 to 2500 (step 500)}
    \end{subfigure}
    \hfill
    \begin{subfigure}[t]{0.48\linewidth}
        \centering
        \includegraphics[width=\linewidth]{./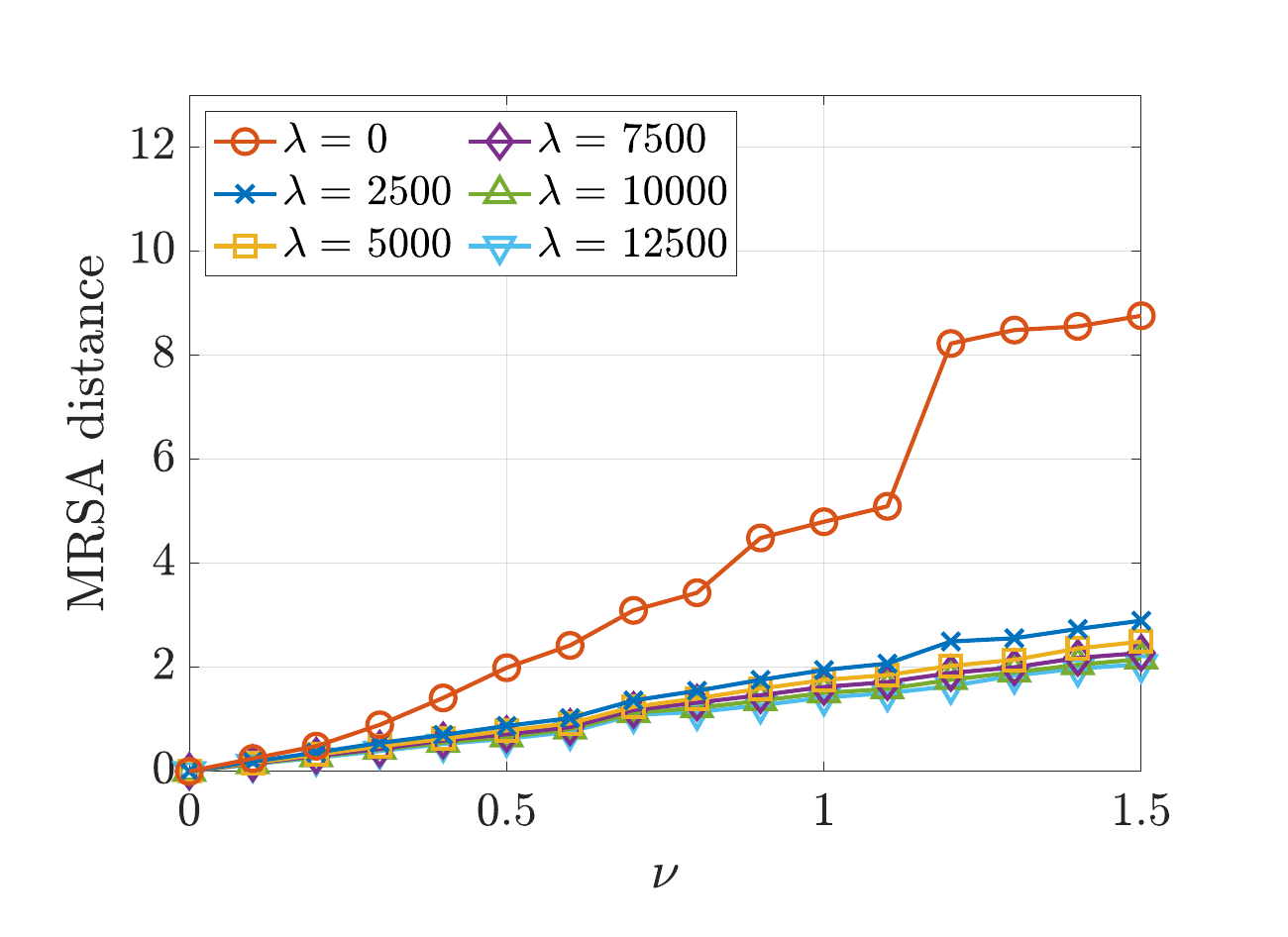}
        \caption{Dataset~3 with $\lambda$ from 0 to 12500 (step 2500)}
    \end{subfigure}

    \caption{Experimental results for Datasets~1--3 using the MRSA distance.
        Panels (a) and (b) show Datasets~1 and~2, while
        Panels (c) and (d) show Dataset~3 for different ranges of $\lambda$.
        Each panel shows the MRSA distance of the DRS output $\KC \cup \KC_{\add}$,
        averaged over 50 realizations of $\KC_{\add}$.}
    \label{fig: results of DRS for datasets 1-3 using the MRSA distance}
\end{figure}

We first examine the size of the output set $\KC$ produced by DRS.
For each dataset, this size matches the number of endmembers when $\nu = 0$.
When $\nu > 0$, the average sizes of $\KC$ are 64.6 for Dataset~1, 16.4 for Dataset~2, and 546.8 for Dataset~3.
In other words, DRS retains only about 0.2\%--0.6\% of the columns for each dataset.
Next, we evaluate the reconstruction error of $A'$ based on $A'(\KC)$.
The error values are below $10^{-8}$ for all datasets and all noise intensity levels $\nu$.
We then evaluate the distance between $A(\KC \cup \KC_{\add})$ and $W$.
Since the experimental results based on the $L_1$ distance and the MRSA distance showed similar trends,
we report only the results based on the MRSA distance;
the results based on the $L_1$ distance are provided in Appendix~\ref{sec: supplementary experimental results}.
Figure~\ref{fig: results of DRS for datasets 1-3 using the MRSA distance} summarizes the experimental results for Datasets~1--3,
showing the MRSA distance of the DRS output $\KC \cup \KC_{\add}$ at each noise intensity level~$\nu$.
Each plot shows the mean distance over 50 realizations of $\KC_{\add}$ for each value of $\lambda$.
The figures show that the distances decrease as $\lambda$ increases for all datasets.
In particular, even when $\lambda$ is set to 1\% of $n$, the distances are substantially reduced compared with those at $\lambda = 0$.

\subsubsection{Computational Time of DRS} \label{subsubsec: computational time of DRS}

\begin{figure}[t]
    \centering
    \begin{subfigure}[t]{0.48\linewidth}
        \centering
        \includegraphics[width=\linewidth]{./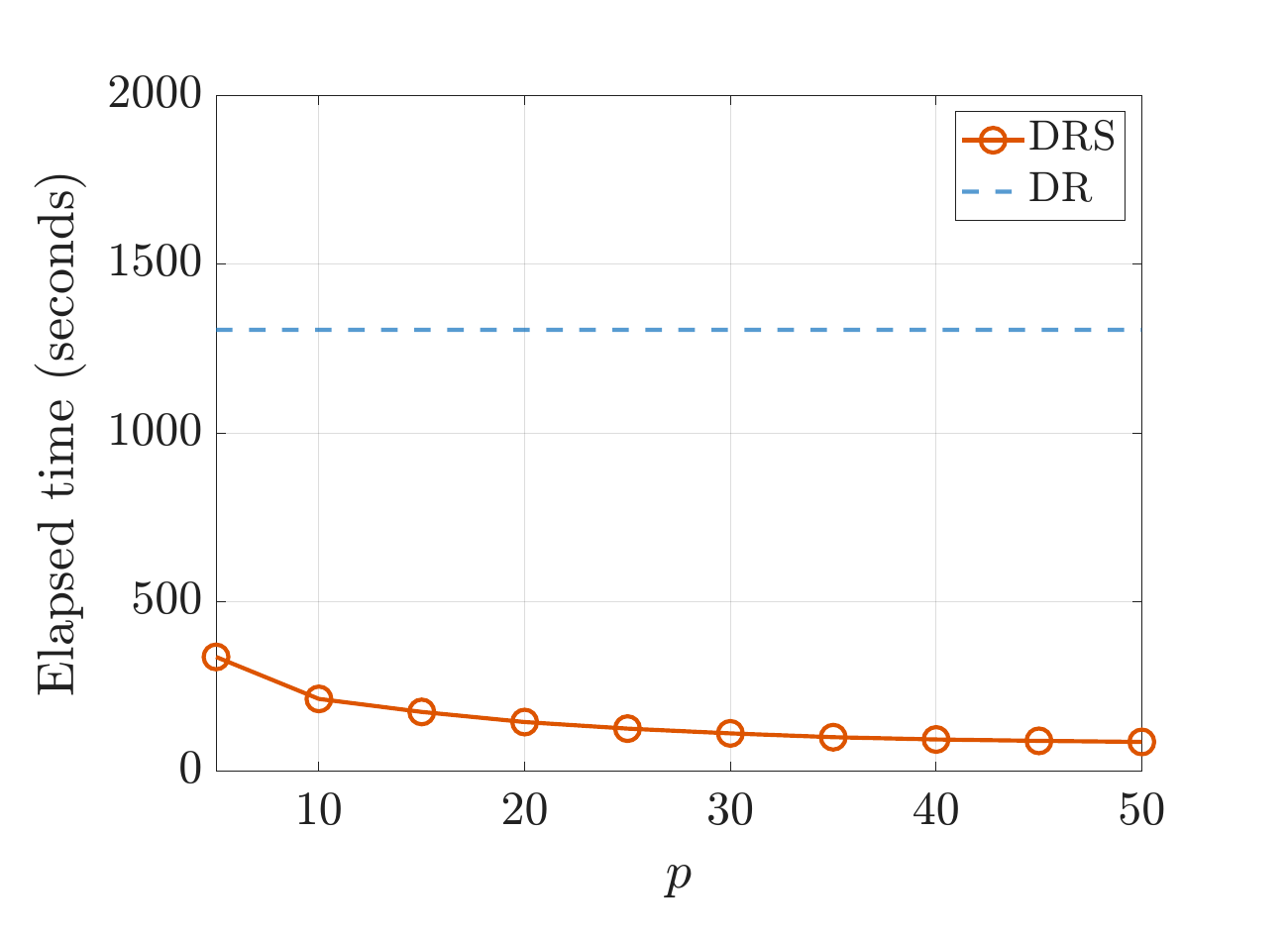}
        \caption{Elapsed time (seconds)}
    \end{subfigure}
    \hfill
    \begin{subfigure}[t]{0.48\linewidth}
        \centering
        \includegraphics[width=\linewidth]{./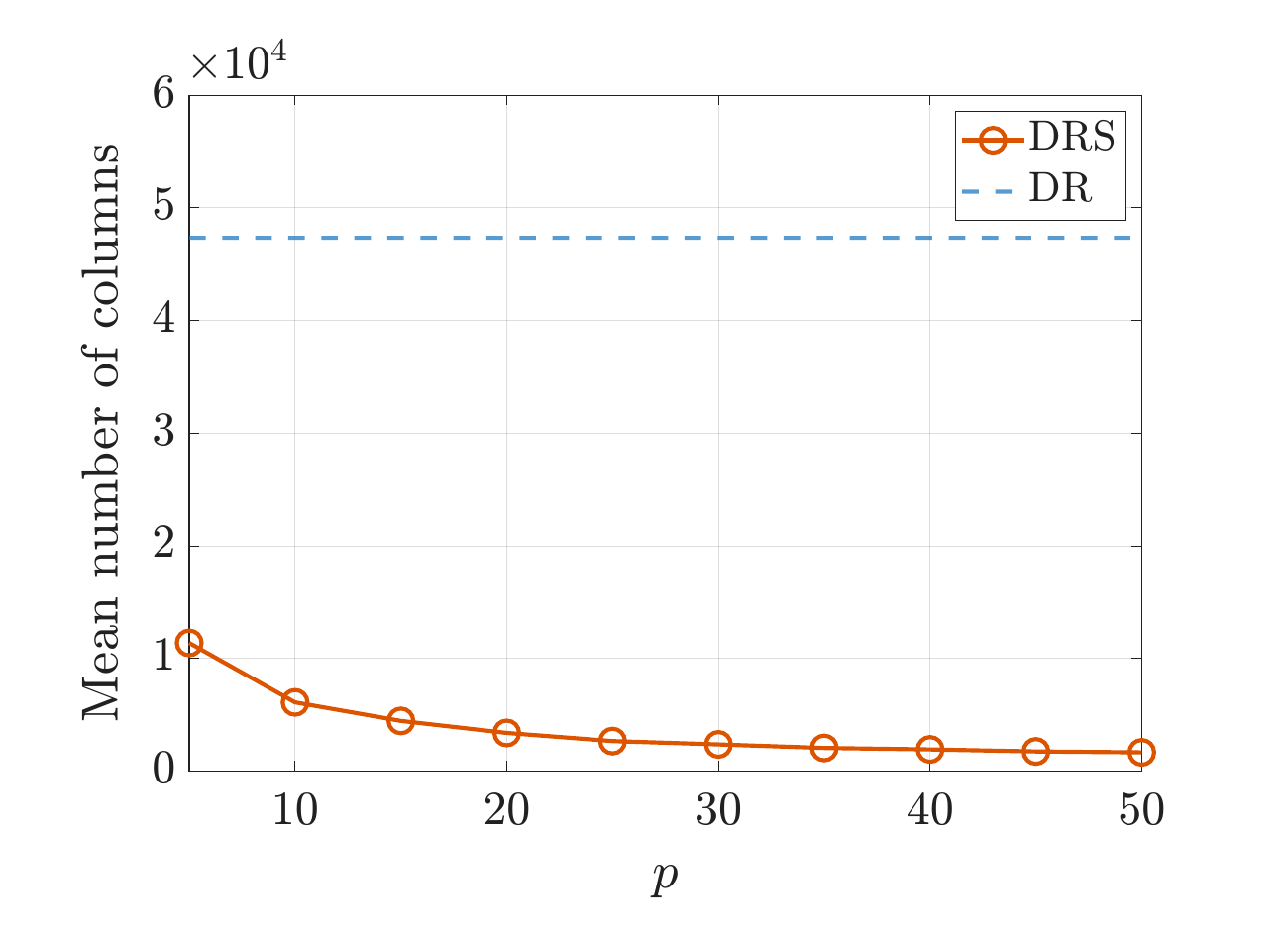}
        \caption{Mean number of columns}
    \end{subfigure}

    \vspace{3mm}
    \begin{subfigure}[t]{0.48\linewidth}
        \centering
        \includegraphics[width=\linewidth]{./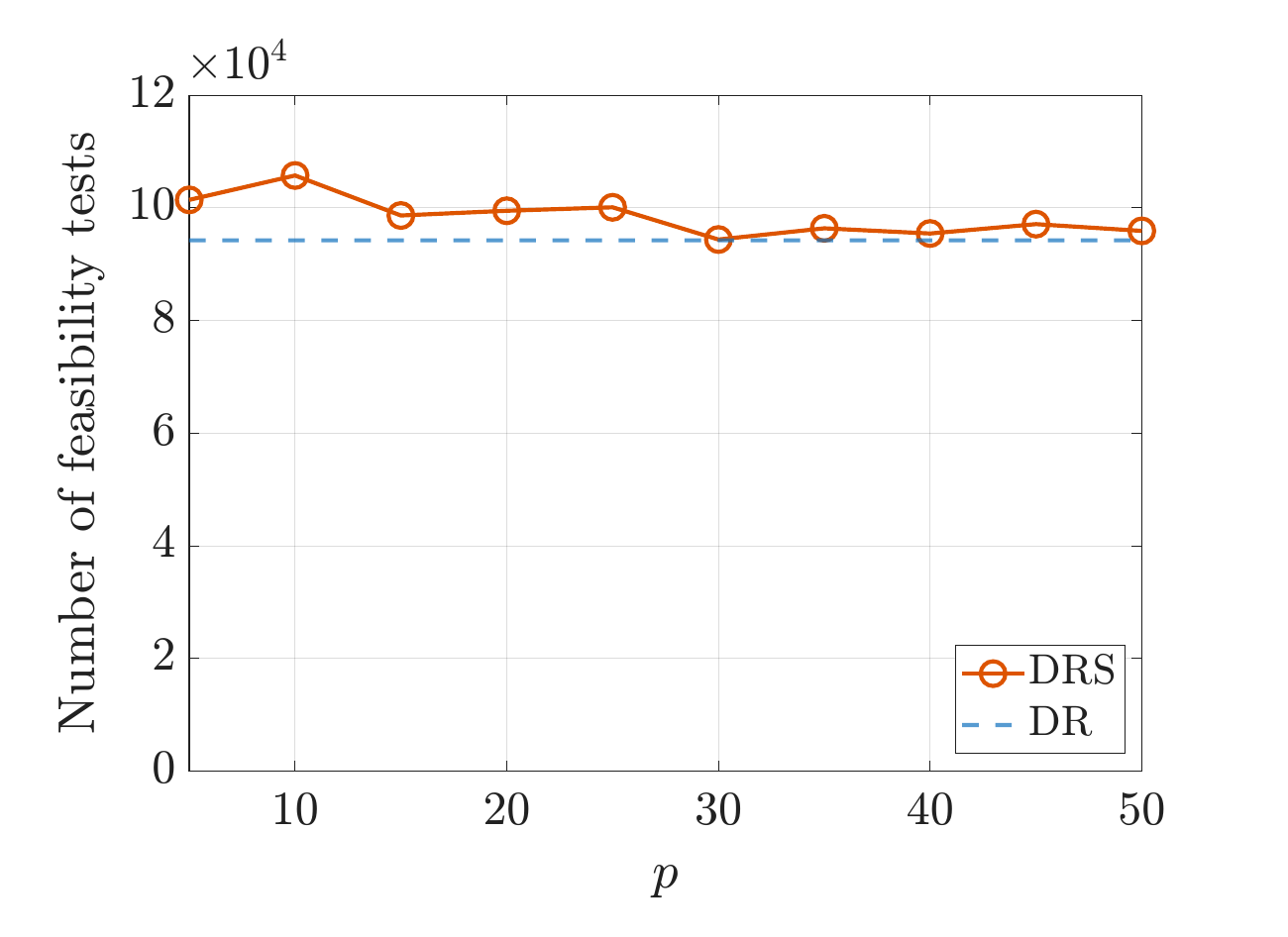}
        \caption{Number of feasibility tests}
    \end{subfigure}
    \caption{Results of DRS and DR on the Urban dataset.
        Panel (a) shows the elapsed time (seconds),
        Panel (b) the mean number of columns processed in the feasibility tests (Step~2 of DR),
        and Panel (c) the number of feasibility tests.}
    \label{fig: results of DRS for the Urban data}
\end{figure}

We conducted experiments to evaluate the computational time of DRS.
We ran the DRS and DR algorithms, described in Section~\ref{subsubsec: experimental setup for DRS experiments},
on the three real HSI datasets.
In these experiments, the parameters $p$ and $\lambda$ in DRS were set as follows:
$\lambda = 0$, while $p$ was varied from 5 to 50 in increments of 5.
The parameter $r$ for both DRS and DR was set equal to the number of endmembers in each dataset.
Here, we report the results on the Urban dataset in detail, as the results on the three datasets exhibited similar trends.

Figure~\ref{fig: results of DRS for the Urban data} summarizes the experimental results for the Urban dataset:
the elapsed time (Panel (a)), the mean number of columns in the matrices processed in Step~2 of DR during the iterations (Panel (b)),
and the number of feasibility tests (Panel (c)) for DRS at each value of $p$ and for DR.
We provide a more detailed explanation of Panel (b).
For each iteration,
we counted the number of columns in the matrix $A(\KC_\ell -i)$ processed in Step~2 of DR.
Since DRS calls DR multiple times, we computed the total number of columns in such matrices over all iterations of DR invoked by DRS.
Panel (b) shows the mean of these numbers over all iterations for DRS and DR.

Panel (a) shows that DRS is faster than DR.
Specifically, DRS takes less than 6 minutes for all values of $p$, and in particular, it takes 2 minutes when $p = 30$,
whereas DR takes 22 minutes.
The elapsed time of DRS gradually decreases as $p$ increases, and the rate of decrease becomes smaller for $p \ge 30$.
Panels (b) and (c) show that
DRS substantially reduces the size of problem~\eqref{eq: NNLS for feasibility test} in the feasibility tests,
while the number of feasibility tests performed by DRS is nearly the same as that of DR.
These results indicate that, with an appropriate choice of $p$,
DRS reduces the problem sizes arising in the feasibility tests of DR while maintaining a similar number of feasibility tests to DR,
thereby reducing the elapsed time.

Finally, we report the elapsed time of DRS and DR for the Jasper Ridge and Samson datasets.
For the Jasper Ridge dataset, DR takes 21 seconds, whereas DRS with $p = 30$ takes 5 seconds.
For the Samson dataset, DR takes 8 seconds, whereas DRS with $p = 30$ takes 2 seconds.

\subsection{Performance Evaluation of REDIC} \label{subsec: performance evaluation of REDIC}
\subsubsection{Experimental Setup} \label{subsubsec: experimental setup for REDIC experiments}
The experiments evaluating REDIC are presented in Section~\ref{subsubsec: comparison with existing methods},
and the ablation studies are presented in Section~\ref{subsubsec: ablation studies}.
Here, we describe the experimental setup for these experiments.

\paragraph{Tested Algorithms}
Section~\ref{subsubsec: comparison with existing methods} presents a comparison between REDIC and existing endmember extraction methods.
The methods included in the comparison are SPA \cite{Gil14a}, ER \cite{Miz14}, VCA \cite{Nas05}, SNPA \cite{Gil14c}, and MERIT \cite{Ngu22},
all of which were compared with the LP method in \cite{Miz25}.
Here, ER is a variant of SPA with a preprocessing step based on the minimum-volume ellipsoid of the columns of the input matrix.
These methods can be categorized as follows: MERIT is a convex programming--based method, whereas the other methods are greedy methods.
Note that we also conducted experiments using PSPA \cite{Gil15, Miz16}, which is a preconditioned variant of SPA.
However, the results are not included in this paper because the performance metrics varied across MATLAB versions;
see Remark~\ref{rem: performance of PSPA} for details.
In addition, we included MinVol \cite{Gil20, Lep20} in the comparison.

The experimental setup for these methods, except for MinVol, was the same as that in \cite{Miz25}.
To avoid confusion, we briefly describe the version of the MERIT algorithm evaluated in this experiment.
Like REDIC, the tested version of MERIT incorporates an SVD-based dimensionality reduction technique:
it first constructs a reduced matrix $A'$ by computing the top-$r$ truncated SVD of the input matrix $A$,
and then applies MERIT to $A'$ to extract endmember signatures.

We briefly describe the MinVol method below.
The method employs an NMF formulation with a minimum-volume regularization term.
Although various variants of MinVol have been proposed,
we adopted the version referred to as min-vol NMF (3) in Section~4.3.3 of \cite{Gil20} for our experiments,
because the textbook provides numerical evidence demonstrating that this version is effective for endmember extraction from the Urban dataset.
The method solves the following optimization problem:
\begin{align} \label{eq: min-vol NMF formulation}
    \min_{W \in \WC, \ H \in \HC} \  & \|A - WH\|_F^2  + \lambda \log\det(W^\trans W + \delta I)
\end{align}
where $\WC = \{ W \in \Real^{d \times r} \mid W^\trans \one = \one, \ W \ge O \}$ and $\HC = \{ H \in \Real^{r \times n} \mid H \ge O \}$.
The parameters $\lambda$ and $\delta$ control the strength of the regularization term and the numerical stability, respectively.
For the implementation of MinVol,
we used the MATLAB function \texttt{minvolNMF}, which is available in the GitLab repository maintained by the author of \cite{Gil20}.

\paragraph{Evaluation Metric}
We employed the following metric to assess the accuracy of endmember extraction.
Let $A = [\a_1, \ldots, \a_n] \in \Real^{d \times n}$ denote the HSI matrix for a given dataset,
and let $\w_1^{\ident}, \ldots, \w_r^{\ident} \in \Real^{d}$ represent the endmember signatures identified in \cite{Zhu17}.
We compute the column indices $i_1, \ldots, i_r$ of $A$ corresponding to $\w_1^{\ident}, \ldots, \w_r^{\ident}$ under the MRSA metric as
\begin{align*}
    i_j = \arg\min_{i \in [n]} \MRSA(\w_j^{\ident}, \a_i) \quad \text{for } j \in [r],
\end{align*}
and regard $\a_{i_1}, \ldots, \a_{i_r}$ as the reference endmember signatures.
This procedure was introduced to ensure the reproducibility of the experimental results.
Let $\w_1, \ldots, \w_r$ denote the reference endmember signatures,
and let $\hat{\w}_1, \ldots, \hat{\w}_r$ denote the estimated endmember signatures obtained by a given method.
We then determine a permutation $\sigma$ on the set $[r]$ according to
\begin{align*}
    \sigma = \arg\min_{\sigma \in \SC_r} \sum_{j=1}^r \MRSA(\w_{\sigma(j)}, \hat{\w}_j),
\end{align*}
where $\SC_r$ denotes the set of all permutations on $[r]$.
Such a permutation can be obtained by solving an assignment problem.
Finally, we compute the MRSA values between the reference and estimated endmembers,
$\MRSA(\w_{\sigma(j)}, \hat{\w}_j)$ for $j \in [r]$,
and their mean value,
\begin{align*}
    \frac{1}{r} \sum_{j=1}^r \MRSA(\w_{\sigma(j)}, \hat{\w}_j).
\end{align*}
We define this mean value as the \emph{MRSA score} of the method.

\paragraph{Data-Specific Preprocessing for the Urban Dataset}
This experiment was based on the EEHT-C experiment for the Urban dataset presented in \cite{Miz25},
where a data-specific preprocessing technique was applied prior to executing EEHT-C
to enhance its performance (see Section~VII-C of \cite{Miz25} for details).
The preprocessing technique involves two parameters, $\phi$ and $\omega$.
In this experiment, prior to performing endmember extraction,
we applied the same preprocessing to the Urban dataset, using $(\phi, \omega) = (0.4, 0.1)$ as in \cite{Miz25}.
This procedure eliminated 2995 columns from the original HSI matrix, which contained 94249 columns,
resulting in a total of 91254 columns.
No such preprocessing was applied to the Jasper Ridge and Samson datasets.

\subsubsection{Comparison with Existing Methods} \label{subsubsec: comparison with existing methods}
We compared the performance of REDIC with that of existing endmember extraction methods
using the three real HSI datasets described in Section~\ref{subsec: objectives, implementation details, and datasets}.
In these experiments, we set the parameter $r$ equal to the number of endmembers for each dataset and $p$ to 30.
Before presenting the experimental results for REDIC,
we first report the specifications of the DRS output $\KC$ for each dataset in Table~\ref{tab: specifications of DRS output}.
These results were obtained by applying Steps~1 and~2 of REDIC.
We observe that DRS retains only about 0.2\%--0.5\% of the columns for each dataset,
that the reduced matrix $A'$ can be reconstructed from the submatrix $A'(\KC)$ with a negligible error,
and that the reduced matrices contain columns close to the reference endmember signatures.
It should be noted that these MRSA distances serve as lower bounds for the MRSA scores achievable by
applying endmember extraction methods to the reduced matrices.
Figure~\ref{fig: samson}, shown in Section~\ref{sec: introduction}, was generated
using the DRS output $\KC$ obtained by executing Steps~1 and~2 of REDIC for the Samson dataset.
\begin{table}[h]
    \centering
    \footnotesize
    \caption{Specifications of the DRS output $\KC$ for each dataset.}
    \label{tab: specifications of DRS output}
    \begin{tabular}{r c c c}
        \toprule
                             & Jasper Ridge           & Samson                 & Urban                  \\
        \midrule
        Number of elements   & 53                     & 20                     & 483                    \\
        MRSA distance        & 5.96                   & 2.48                   & 6.15                   \\
        Reconstruction error & $2.11 \times 10^{-12}$ & $2.52 \times 10^{-13}$ & $1.05 \times 10^{-16}$ \\
        \bottomrule
    \end{tabular}
\end{table}

We then present the experimental results comparing REDIC with the LP method.
These methods were applied to each HSI dataset, and their MRSA scores and computational times were compared.
The LP method considered here is identical to EEHT-C proposed in \cite{Miz25},
whose procedure was reviewed in Section~\ref{sec: REDIC method}.
The parameters $\lambda$ and $\tau$ in REDIC were set as follows.
$\tau$ took values in $\{1, 5\}$, and $\lambda$ was varied in the same manner as
in the first experiment described in Section~\ref{subsubsec: evaluation of the DRS output}.
When $\lambda > 0$, REDIC was executed 50 times for each of the three HSI datasets,
and the mean and standard deviation of the MRSA scores and elapsed times over the 50 runs were computed.
For the Urban dataset,
we report the results for $\lambda = 0$ and $\lambda$ from 2500 to 12500,
while omitting the results for $\lambda$ ranging from 500 to 2000 because they are not essential to the discussion of the experimental results.
Table~\ref{tab: experimental results of REDIC on the three datasets} summarizes the experimental results.
We note that the standard deviation of the MRSA scores is zero for Jasper Ridge when $\lambda = 50$.
This is because, as shown in Table~\ref{tab: specifications of DRS output},
the DRS output $\KC$ for Jasper Ridge contains 53 elements, and therefore $\KC_{\add} = \emptyset$ when $\lambda = 50$.
We make the following observations from these results:
\begin{itemize}
    \item For all datasets,
          the MRSA scores of REDIC with $\lambda = 0$ are larger than those of the LP method,
          and noticeable gaps exist between these scores and the MRSA distances of the DRS output reported in Table~\ref{tab: specifications of DRS output}.
    \item The MRSA scores of REDIC tend to decrease as $\lambda$ increases for all datasets.
    \item When $\lambda > 0$, the mean and standard deviation of the MRSA scores with $\tau = 5$ are lower than those with $\tau = 1$.
    \item For Jasper Ridge and Samson, REDIC with $(\lambda, \tau) = (100, 5)$ achieves nearly the same MRSA scores as the LP method
          while reducing the computation time, yielding speed-ups of 33 and 5 times, respectively.
          In other words, adding only about 1\% of the columns was sufficient for extracting endmembers with accuracy comparable to that of the LP method.
          Moreover, the MRSA scores decrease further as $\lambda$ increases.
    \item For Urban, REDIC with $(\lambda, \tau) = (10000, 5)$ achieves nearly the same MRSA scores as the LP method, with a speed-up of 29 times.
          In contrast to Jasper Ridge and Samson, about 10\% of the columns had to be added to achieve comparable accuracy.
\end{itemize}

The LP method took 8 hours to process the Urban dataset, whereas REDIC completed the same task in only 16 minutes
without compromising endmember extraction accuracy.
These results suggest that incorporating DRS into self-dictionary methods
enables the efficient processing of large-scale datasets within a reasonable time frame.

\begin{remark} \label{rem: comparison of computation time with previous RCE code}
    The previous version of the MATLAB code implementing the LP method (i.e., EEHT-C) developed in \cite{Miz25} required 10 hours to process the Urban dataset.
    In contrast, the reorganized and speed-improved version of the code used in this experiment completed the same task in 8 hours.
\end{remark}
\begin{remark}
    Similar to the LP method,
    REDIC requires a data-specific preprocessing technique to achieve high endmember extraction accuracy on the Urban dataset.
    Without this preprocessing, the mean MRSA score of REDIC with $(\lambda, \tau) = (10000, 5)$ over 50 runs was 12.10.
\end{remark}

\begin{table}[t]
    \centering
    \footnotesize
    \caption{
        Results of REDIC on three real HSI datasets for different values of $\lambda$ and $\tau$.
        MRSA and Time denote the MRSA score and the elapsed time in seconds, respectively.
        The baseline corresponds to the LP method.
        For REDIC with $\lambda > 0$, the mean and standard deviation over 50 runs are shown.
    }
    \label{tab: experimental results of REDIC on the three datasets}
    \begin{subtable}[t]{1.0\linewidth}
        \caption{Jasper Ridge}
        \begin{minipage}[t]{0.26\linewidth}
            \vspace{0pt}
            \centering
            \begin{tabular}{ll}
                \toprule
                \multicolumn{2}{l}{\textbf{Baseline (LP method)}} \\
                \midrule
                MRSA & 6.82                                       \\
                Time & 271.9                                      \\
                \bottomrule
            \end{tabular}
        \end{minipage}
        \hfill
        \begin{minipage}[t]{0.71\linewidth}
            \vspace{0pt}
            \centering
            \begin{tabular}{
                    l
                    S[table-format=2.2(3)]
                    S[table-format=2.2(3)]
                    S[table-format=2.1(2)]
                    S[table-format=2.1(2)]
                }
                \toprule
                {$\lambda$}
                    & {MRSA ($\tau=1$)}
                    & {MRSA ($\tau=5$)}
                    & {Time ($\tau=1$)}
                    & {Time ($\tau=5$)}                                                               \\
                \midrule
                0   & 12.24             & \multicolumn{1}{c}{--} & 6.3       & \multicolumn{1}{c}{--} \\
                50  & 12.24             & 12.24                  & 4.9(0.4)  & 5.3(0.3)               \\
                100 & 8.49(1.10)        & 7.10(0.87)             & 5.5(0.4)  & 8.2(0.4)               \\
                150 & 7.39(1.42)        & 6.11(0.63)             & 7.5(0.4)  & 18.8(0.7)              \\
                200 & 7.04(1.22)        & 5.77(0.48)             & 10.2(0.5) & 32.7(1.0)              \\
                250 & 6.88(1.25)        & 5.68(0.56)             & 14.3(0.8) & 52.3(1.7)              \\
                \bottomrule
            \end{tabular}
        \end{minipage}
    \end{subtable}

    \vspace{5mm}
    \begin{subtable}[t]{1.0\linewidth}
        \caption{Samson}
        \begin{minipage}[t]{0.26\linewidth}
            \vspace{0pt}
            \centering
            \begin{tabular}{ll}
                \toprule
                \multicolumn{2}{l}{\textbf{Baseline (LP method)}} \\
                \midrule
                MRSA & 3.34                                       \\
                Time & 31.2                                       \\
                \bottomrule
            \end{tabular}
        \end{minipage}
        \hfill
        \begin{minipage}[t]{0.71\linewidth}
            \vspace{0pt}
            \centering
            \begin{tabular}{
                    l
                    S[table-format=2.2(3)]
                    S[table-format=2.2(3)]
                    S[table-format=2.1(2)]
                    S[table-format=2.1(2)]
                }
                \toprule
                {$\lambda$}
                    & {MRSA ($\tau=1$)}
                    & {MRSA ($\tau=5$)}
                    & {Time ($\tau=1$)}
                    & {Time ($\tau=5$)}                                                               \\
                \midrule
                0   & 6.14              & \multicolumn{1}{c}{--} & 1.6       & \multicolumn{1}{c}{--} \\
                50  & 3.89(0.99)        & 3.72(0.48)             & 1.5(0.1)  & 1.7(0.0)               \\
                100 & 3.23(0.59)        & 3.05(0.28)             & 2.2(0.1)  & 5.9(0.5)               \\
                150 & 2.99(0.43)        & 2.90(0.25)             & 4.2(0.3)  & 15.7(0.8)              \\
                200 & 2.91(0.34)        & 2.77(0.22)             & 7.0(0.6)  & 29.5(1.1)              \\
                250 & 2.87(0.37)        & 2.68(0.22)             & 11.2(1.0) & 52.6(2.4)              \\
                \bottomrule
            \end{tabular}
        \end{minipage}
    \end{subtable}

    \vspace{5mm}
    \begin{subtable}[t]{1.0\linewidth}
        \caption{Urban}
        \begin{minipage}[t]{0.26\linewidth}
            \vspace{0pt}
            \centering
            \begin{tabular}{ll}
                \toprule
                \multicolumn{2}{l}{\textbf{Baseline (LP method)}} \\
                \midrule
                MRSA & 7.90                                       \\
                Time & 29395.5                                    \\
                \bottomrule
            \end{tabular}
        \end{minipage}
        \hfill
        \begin{minipage}[t]{0.71\linewidth}
            \vspace{0pt}
            \centering
            \begin{tabular}{
                    l
                    S[table-format=2.2(3)]
                    S[table-format=2.2(3)]
                    S[table-format=3.1(4)]
                    S[table-format=4.1(4)]
                }
                \toprule
                {$\lambda$}
                      & {MRSA ($\tau=1$)}
                      & {MRSA ($\tau=5$)}
                      & {Time ($\tau=1$)}
                      & {Time ($\tau=5$)}                                                                  \\
                \midrule
                0     & 15.86             & \multicolumn{1}{c}{--} & 126.5        & \multicolumn{1}{c}{--} \\
                2500  & 10.59(1.64)       & 9.29(0.97)             & 122.0(5.0)   & 209.1(10.5)            \\
                5000  & 9.74(1.66)        & 8.41(0.59)             & 162.9(26.9)  & 435.0(73.9)            \\
                7500  & 9.43(1.04)        & 8.16(0.48)             & 218.6(74.1)  & 765.7(244.4)           \\
                10000 & 9.24(0.91)        & 7.98(0.35)             & 275.8(99.8)  & 1009.0(270.7)          \\
                12500 & 9.14(0.73)        & 7.99(0.39)             & 361.6(175.5) & 1339.5(411.6)          \\
                \bottomrule
            \end{tabular}
        \end{minipage}
    \end{subtable}
\end{table}

We finally present the experimental results comparing REDIC with existing methods, namely,
SPA, ER, VCA, SNPA, MERIT, and MinVol.
The parameter settings for these methods, except for MinVol, were the same as those used in \cite{Miz25}
and are therefore omitted here.
However, we describe the setting of the parameter $\lambda$ in MERIT,
which appears in the optimization problem shown in~\eqref{eq: self-dictionary formulation for FGNSR and MERIT},
since the performance of MERIT is sensitive to the choice of $\lambda$.
We set $\lambda$ to 10 values: $10^{-6}$, $10^{-5}$, \ldots, $10^3$.
For each value of $\lambda$, we ran MERIT on each dataset and computed the corresponding MRSA score.
The best MRSA score among the 10 values of $\lambda$ was selected as the final MRSA score of MERIT on each dataset.
MinVol also involves a parameter $\lambda$ in the optimization problem shown in~\eqref{eq: min-vol NMF formulation}.
We used the same parameter selection procedure for $\lambda$ in MinVol as that used in MERIT,
via the parameter \texttt{lambda} in the MATLAB function \texttt{minvolNMF}.
The other parameters of \texttt{minvolNMF} were set as follows:
\texttt{maxiter} = 1000, \texttt{target} = 0.05, and \texttt{model} = 3.

Table~\ref{tab: experimental results by existing methods for the three datasets}
summarizes the MRSA scores of these methods for the three datasets.
From these results, we make the following observations.
For Jasper Ridge, MinVol achieves the lowest MRSA score among the existing methods,
while REDIC with $(\lambda, \tau) = (100, 1)$ achieves a lower MRSA score than MinVol.
For Samson, ER achieves the lowest MRSA score among the existing methods,
while REDIC with $(\lambda, \tau) = (250, 5)$ achieves a comparable MRSA score.
For Urban, MinVol achieves the lowest MRSA score among the existing methods,
while REDIC with $(\lambda, \tau) = (2500, 1)$ achieves a lower MRSA score than MinVol.

\begin{table}[h]
    \centering
    \footnotesize
    \caption{MRSA scores of the existing methods on the three datasets.}
    \label{tab: experimental results by existing methods for the three datasets}

    \begin{tabular}{l r r r r r r}
        \toprule
                     & SPA   & ER    & VCA   & SNPA  & MERIT & MinVol \\
        \midrule
        Jasper Ridge & 21.32 & 17.75 & 26.07 & 22.15 & 10.65 & 9.59   \\
        Samson       & 25.14 & 2.64  & 3.30  & 2.88  & 22.58 & 11.92  \\
        Urban        & 17.81 & 17.00 & 28.02 & 19.68 & 19.47 & 13.01  \\
        \bottomrule
    \end{tabular}

\end{table}

\begin{remark} \label{rem: performance of PSPA}
    In \cite{Miz25}, MATLAB code for PSPA was developed and its performance was evaluated using MATLAB R2021a.
    Using this code, we conducted experiments on the Urban dataset with MATLAB R2024b.
    The code yielded an MRSA score of 17.00 with MATLAB R2024b,
    whereas it yielded 12.99 with MATLAB R2021a.
    This discrepancy appears to be caused by differences in implementation behavior across MATLAB versions.
    Therefore, we did not include the PSPA results in this paper.
\end{remark}

\subsubsection{Ablation Studies} \label{subsubsec: ablation studies}
As shown in Section~\ref{subsubsec: comparison with existing methods},
REDIC achieves endmember extraction performance comparable to that of the LP method
when appropriate values of $\lambda$ and $\tau$ are used.
To examine whether this performance can be attributed to the output of DRS,
we conducted ablation studies in which Step~2 was modified to skip DRS, with $\KC$ set to $\emptyset$.
The parameter $r$ was set to the number of endmembers for each dataset.
The parameter $\tau$ was set to 1 and 5,
and $\lambda$ was varied from 50 to 250 in increments of 50 for Jasper Ridge and Samson,
and from 2500 to 12500 in increments of 2500 for Urban.
For each dataset, the modified REDIC was run 50 times,
and the mean and standard deviation of the MRSA scores and elapsed times were computed.

Table~\ref{tab: ablation results for the three datasets} summarizes the results.
From these results, we make the following observations:
\begin{itemize}
    \item For Samson, the modified REDIC yields lower MRSA scores than REDIC for all values of $\lambda$ and $\tau$.
          The gap between the two methods decreases as $\lambda$ increases, becoming less than 0.5 when $\lambda = 250$.
          We discuss the underlying cause below.
    \item For Jasper Ridge, the modified REDIC yields larger MRSA scores than REDIC when $\lambda \ge 100$, with consistently larger standard deviations.
    \item For Urban, the modified REDIC yields larger MRSA scores and larger standard deviations than REDIC for all $\lambda$ and $\tau$.
\end{itemize}

\begin{table}[h]
    \centering
    \footnotesize
    \caption{
        Ablation results on three datasets for different values of $\lambda$ and $\tau$.
        MRSA denotes the MRSA score. The mean and standard deviation over 50 runs are shown.
    }
    \label{tab: ablation results for the three datasets}

    \begin{subtable}[t]{0.48\linewidth}
        \centering
        \caption{Jasper Ridge}
        \begin{tabular}{
                l
                S[table-format=1.2(3)]
                S[table-format=1.2(3)]
            }
            \toprule
            {$\lambda$} & {MRSA ($\tau=1$)} & {MRSA ($\tau=5$)} \\
            \midrule
            50          & 8.48(4.25)        & 8.63(4.29)        \\
            100         & 8.69(4.13)        & 7.92(3.04)        \\
            150         & 8.11(3.17)        & 7.38(2.75)        \\
            200         & 7.96(3.55)        & 8.41(3.93)        \\
            250         & 8.18(3.08)        & 6.98(1.92)        \\
            \bottomrule
        \end{tabular}
    \end{subtable}
    \hfill
    \begin{subtable}[t]{0.48\linewidth}
        \centering
        \caption{Samson}
        \begin{tabular}{
                l
                S[table-format=1.2(3)]
                S[table-format=1.2(3)]
            }
            \toprule
            {$\lambda$} & {MRSA ($\tau=1$)} & {MRSA ($\tau=5$)} \\
            \midrule
            50          & 2.22(0.52)        & 1.82(0.20)        \\
            100         & 2.28(0.40)        & 2.01(0.22)        \\
            150         & 2.34(0.42)        & 2.06(0.23)        \\
            200         & 2.44(0.40)        & 2.15(0.17)        \\
            250         & 2.50(0.45)        & 2.21(0.18)        \\
            \bottomrule
        \end{tabular}
    \end{subtable}

    \vspace{5mm}
    \begin{subtable}[t]{0.48\linewidth}
        \centering
        \caption{Urban}
        \begin{tabular}{
                l
                S[table-format=2.2(3)]
                S[table-format=2.2(3)]
            }
            \toprule
            {$\lambda$} & {MRSA ($\tau=1$)} & {MRSA ($\tau=5$)} \\
            \midrule
            2500        & 16.71(3.63)       & 15.65(2.48)       \\
            5000        & 16.56(2.84)       & 15.50(1.81)       \\
            7500        & 16.04(3.03)       & 15.32(2.12)       \\
            10000       & 16.04(3.57)       & 14.52(2.16)       \\
            12500       & 16.60(3.51)       & 14.45(2.34)       \\
            \bottomrule
        \end{tabular}
    \end{subtable}

\end{table}

To explore the cause of the lower MRSA scores of the modified REDIC for Samson,
we examined the proportion of columns in an HSI matrix $A$
whose MRSA values with respect to each reference endmember signature $\w_j$ are below a threshold $\theta$.
Specifically, we constructed the column index set $\TC_j = \{ i \in [n] \mid \MRSA(\a_i, \w_j) \le \theta  \}$ for each $j \in [r]$,
and evaluated the proportion of these columns as $|\TC_j| / n$.
The values of $\theta$ were chosen based on the MRSA scores of the LP method.

Table~\ref{tab: proportion of columns within an MRSA value of theta in three datasets}
summarizes the results for the three datasets.
These results show that, for Samson, if we randomly select a few hundred columns from the HSI matrix,
a few dozen of them have MRSA values below 4 with respect to each reference endmember signature.
In contrast, for Jasper Ridge,
although some similar columns may exist for the endmember corresponding to Road,
their number is only a few.
A similar observation holds for Urban.
Among thousands of randomly selected columns,
very few are similar to the endmembers corresponding to Roof~1 and Roof~2.

REDIC solves model $\HT$ with RCE
and then performs a postprocessing step to obtain the estimates of the endmember signatures.
In this postprocessing step, clusters of columns are constructed,
and one column is selected from each cluster as an estimated endmember signature.
Accordingly, if the input HSI matrix contains a large number of columns similar to each endmember signature,
this step is expected to yield more accurate estimates.

From these observations,
the lower MRSA scores of the modified REDIC for Samson can be attributed to
the presence of relatively many columns similar to each reference endmember signature in the HSI matrix.
Therefore, even when DRS is replaced with random column selection,
the modified REDIC can still extract endmembers with high accuracy.
In contrast,
incorporating DRS into REDIC is necessary to achieve high endmember extraction performance
for HSI matrices in which only a small number of similar columns are present for some endmembers.
\begin{table}[h]
    \centering
    \footnotesize
    \caption{
        Proportion of columns whose MRSA values are at most $\theta$
        for each reference endmember signature in three datasets.
    }
    \label{tab: proportion of columns within an MRSA value of theta in three datasets}
    \begin{subtable}[t]{0.48\linewidth}
        \centering
        \caption{Jasper Ridge}
        \begin{tabular}{lrrrr}
            \toprule
            $\theta$ & Tree    & Water   & Soil   & Road   \\
            \midrule
            7        & 23.7 \% & 27.2 \% & 8.7 \% & 2.3 \% \\
            6        & 20.9 \% & 23.4 \% & 6.7 \% & 1.5 \% \\
            5        & 17.7 \% & 14.3 \% & 4.9 \% & 0.9 \% \\
            \bottomrule
        \end{tabular}
    \end{subtable}
    \hfill
    \begin{subtable}[t]{0.48\linewidth}
        \centering
        \caption{Samson}
        \begin{tabular}{lrrr}
            \toprule
            $\theta$ & Soil    & Tree    & Water   \\
            \midrule
            4        & 23.5 \% & 27.0 \% & 10.0 \% \\
            3        & 22.2 \% & 19.8 \% & 4.7 \%  \\
            2        & 19.3 \% & 12.3 \% & 0.1 \%  \\
            \bottomrule
        \end{tabular}
    \end{subtable}

    \vspace{5mm}
    \begin{subtable}[t]{0.7\linewidth}
        \centering
        \caption{Urban}
        \begin{tabular}{lrrrrrr}
            \toprule
            $\theta$ & Asphalt & Grass   & Tree    & Roof~1 & Roof~2 & Soil   \\
            \midrule
            8        & 6.4 \%  & 31.3 \% & 27.0 \% & 0.3 \% & 0.1 \% & 6.6 \% \\
            7        & 4.5 \%  & 27.1 \% & 23.5 \% & 0.3 \% & 0.1 \% & 4.4 \% \\
            6        & 2.6 \%  & 22.8 \% & 19.8 \% & 0.2 \% & 0.1 \% & 2.7 \% \\
            \bottomrule
        \end{tabular}
    \end{subtable}
\end{table}

\section{Concluding Remarks} \label{sec: concluding remarks}
We studied a data reduction technique that removes redundant pixels for endmember extraction from HSIs.
This work was motivated by the high computational cost of self-dictionary methods when applied to large-scale HSIs.
We conducted a theoretical analysis of this reduction step
and developed a data reduction algorithm based on a splitting technique, called DRS.
By combining DRS with a self-dictionary method that uses an LP formulation,
we proposed a data-reduced self-dictionary method, REDIC, for endmember extraction.
Numerical experiments demonstrated the effectiveness of REDIC in terms of both computational time and extraction accuracy.

We conclude this paper by suggesting directions for future research.
As observed in Section~\ref{subsec: performance evaluation of REDIC},
there are noticeable gaps between the MRSA distances of the outputs produced by DRS and the MRSA scores obtained by REDIC with $\lambda = 0$,
indicating that the extraction accuracy of REDIC still leaves room for improvement.
One possible approach to addressing this limitation is to use a combinatorial method in place of the LP method within REDIC.
Gillis \cite{Gil19} studied a combinatorial approach for a slightly generalized separable NMF problem and
showed theoretically that it exhibits strong robustness to noise.
This suggests that combinatorial methods may achieve superior endmember extraction accuracy.
Due to their high computational cost,
applying such methods directly to endmember extraction is impractical.
However, by integrating them with DRS,
it may be possible to achieve higher accuracy while maintaining reasonable computational time,
potentially outperforming the current version of REDIC.

As mentioned in Section~\ref{subsec: problem formulation},
the endmember extraction problem is closely related to problems such as topic modeling and community detection.
It would be interesting to investigate
whether the proposed data reduction technique can also be applied to these problems
and thereby accelerate progress in these research areas.

\appendix
\section{Analysis of Reduced Hyperspectral Image Data} \label{sec: analysis of reduced hyperspectral image data}
Throughout this section, we assume that $A \in \Real^{d \times n}$ is a nearly $r$-separable matrix of
the form $A = M + V$, where $M \in \Real^{d \times n}_+$ is $r$-separable and can be written as $M = WH$,
as given in \eqref{eq: separable matrix}, and $V \in \Real^{d \times n}$ is noise.
Before presenting the proof of Theorem~\ref{thm: main result},
we establish some properties of the parameter $\rho$.

\begin{lemma} \label{lem: linear independence and tau}
    The following statements hold for $Z \in \Real^{m \times n}$:
    \begin{enumerate}[label={\normalfont(\alph*)}]
        \item $\rho(Z) = 0$ if and only if the $n$ columns of $Z$ are linearly dependent.
        \item $\rho(Z) > 0$ if and only if the $n$ columns of $Z$ are linearly independent.
    \end{enumerate}
\end{lemma}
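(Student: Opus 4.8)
The plan is to establish part (a) directly and then obtain part (b) as its logical complement, since for a fixed $Z$ the columns are linearly independent exactly when they fail to be linearly dependent. Before either direction, I would first record two preliminary facts. The first is that $\rho(Z)$ is well-defined as an \emph{attained} minimum rather than merely an infimum: the feasible set $\{\x \in \Real^n : \|\x\|_1 = 1\}$ is compact (closed and bounded), and the map $\x \mapsto \|Z\x\|_1$ is continuous, so by the extreme value theorem the minimum is achieved at some point $\x^\star$. The second is the trivial but essential observation that $\rho(Z) \ge 0$, because $\|Z\x\|_1 \ge 0$ for every $\x$.

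For the forward direction of (a), I would assume $\rho(Z) = 0$ and use the minimizer $\x^\star$, which satisfies $\|\x^\star\|_1 = 1$ and $\|Z\x^\star\|_1 = 0$. The latter forces $Z\x^\star = \zero$, while $\|\x^\star\|_1 = 1$ guarantees $\x^\star \neq \zero$. This exhibits a nontrivial vector in the kernel of $Z$, i.e.\ a nontrivial linear combination of the columns of $Z$ equal to $\zero$, which is precisely linear dependence of the columns.

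For the converse direction of (a), I would assume the columns are linearly dependent, so there exists $\x \neq \zero$ with $Z\x = \zero$. Rescaling to $\y = \x / \|\x\|_1$ produces a feasible point with $\|\y\|_1 = 1$ and $Z\y = \zero$, hence $\|Z\y\|_1 = 0$. This shows $\rho(Z) \le 0$, and together with $\rho(Z) \ge 0$ it yields $\rho(Z) = 0$. Part (b) then follows immediately, since linear independence is the negation of linear dependence, which by (a) is equivalent to $\rho(Z) \neq 0$, and by nonnegativity of $\rho$ this is the same as $\rho(Z) > 0$.

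The argument is essentially routine, resting on the fact that an $L_1$ norm vanishes only at the zero vector together with a rescaling to the unit sphere. The one point I would treat with care—and the only place where anything could go wrong—is the well-definedness of the minimum: the equivalence between $\rho(Z) = 0$ and the existence of a \emph{feasible} null vector $\x^\star$ (with $\|\x^\star\|_1 = 1$) relies on the minimum being attained rather than merely approached, which is exactly what the compactness argument secures.
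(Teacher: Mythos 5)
Your proposal is correct and follows essentially the same route as the paper: exhibit a unit-$L_1$-norm null vector when $\rho(Z)=0$ for the forward direction, rescale a nontrivial null vector for the converse, and deduce (b) from (a). The only difference is that you make explicit the attainment of the minimum via compactness, which the paper leaves implicit.
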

\begin{proof}
    Since part (b) is immediate from part (a), we prove only part (a).
    We first show the ``only if'' direction.
    There exists $\x \in \Real^n$ such that $Z\x = \zero$ and $\| \x \|_1 = 1$,
    which implies that the $n$ columns of $Z$ are linearly dependent.
    We next prove the ``if'' direction. There exists $\x \neq \zero$ such that $Z\x = \zero$.
    Let $\bar{\x} = \x / \| \x \|_1$,
    Then $Z \bar{\x} = \zero$ and $\| \bar{\x} \|_1 = 1$. Accordingly, $\rho(Z) = 0$.
\end{proof}

\begin{lemma} \label{lem: upper bound on tau}
    Let $Z \in \Real^{m \times n}$ satisfy $\| Z \|_1 = 1$. Then $\rho(Z) \le 1$ holds.
\end{lemma}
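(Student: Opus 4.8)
The plan is to exploit the fact that $\rho(Z)$ and the matrix norm $\|Z\|_1$ are, respectively, the \emph{minimum} and the \emph{maximum} of one and the same continuous function over one and the same feasible set. Recall that for a matrix $Z \in \Real^{m \times n}$ the matrix $L_1$ norm is the operator norm induced by the vector $L_1$ norm, so that
$$
\|Z\|_1 = \max_{\|\x\|_1 = 1} \|Z\x\|_1 = \max_{j} \|\z_j\|_1,
$$
where $\z_1, \ldots, \z_n$ denote the columns of $Z$, while by definition $\rho(Z) = \min_{\|\x\|_1 = 1} \|Z\x\|_1$.

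First I would observe that the feasible set $\{\x \in \Real^n \mid \|\x\|_1 = 1\}$ is nonempty and that the objective $\x \mapsto \|Z\x\|_1$ is identical in both extremal problems; hence the minimum cannot exceed the maximum, giving $\rho(Z) \le \|Z\|_1$, and combining this with the hypothesis $\|Z\|_1 = 1$ yields $\rho(Z) \le 1$. If a fully self-contained argument is preferred, one avoids invoking the operator-norm characterization altogether by simply testing the definition of $\rho(Z)$ on the unit vectors: since each $\e_k$ satisfies $\|\e_k\|_1 = 1$, it is feasible in the minimization defining $\rho(Z)$, so $\rho(Z) \le \|Z\e_k\|_1 = \|\z_k\|_1$ for every $k$, and in particular $\rho(Z) \le \min_k \|\z_k\|_1 \le \max_k \|\z_k\|_1 = \|Z\|_1 = 1$.

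The statement is elementary, so I do not expect any substantial obstacle; the proof is essentially a one-line comparison of a minimum with a maximum of the same quantity. The only point requiring care is to fix the meaning of the matrix norm $\|Z\|_1$ as the operator $L_1$ norm (equivalently, the largest column $L_1$ norm of $Z$). This is precisely the convention under which Assumption~\ref{asm: nearly r-separable matrix}(a) forces $\|W\|_1 = 1$, so that the present lemma applies directly to $W$ and delivers the bound $\rho(W) \le 1$ invoked in the discussion preceding Theorem~\ref{thm: main result}.
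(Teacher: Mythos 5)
Your proof is correct and is essentially the paper's own argument: the paper simply notes that $\| Z\x \|_1 \le \| Z \|_1 \| \x \|_1 = 1$ for every feasible $\x$, which is the same ``minimum of $\|Z\x\|_1$ cannot exceed its maximum $\|Z\|_1$'' observation you make (your unit-vector variant is a minor rephrasing of the same idea). No substantive difference.
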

\begin{proof}
    For any $\x \in \Real^n$ such that $\| \x \|_1 = 1$, we have $\| Z\x \|_1 \le \| Z \|_1 \| \x \|_1 = 1$.
    Accordingly, $\rho(Z) = \min_{\|\x\|_1 = 1} \| Z\x \|_1 \le 1$.
\end{proof}
As an immediate consequence of Lemma \ref{lem: linear independence and tau}, we obtain the following result:
\begin{corollary} \label{cor: m ge n}
    If $\rho(Z) > 0$ for $Z \in \Real^{m \times n}$, then $m \ge n$.
\end{corollary}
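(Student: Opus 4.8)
The plan is to invoke Lemma~\ref{lem: linear independence and tau}(b) directly and then apply a basic dimension-counting fact from linear algebra. Since the hypothesis is $\rho(Z) > 0$ for $Z \in \Real^{m \times n}$, the very first step is to translate this spectral-type condition into a statement about the columns of $Z$: by part (b) of Lemma~\ref{lem: linear independence and tau}, $\rho(Z) > 0$ is equivalent to the $n$ columns of $Z$ being linearly independent.

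The second and final step is to observe that these $n$ columns are vectors living in the ambient space $\Real^m$. A standard result in linear algebra states that any collection of linearly independent vectors in $\Real^m$ can contain at most $m$ elements, since $\dim \Real^m = m$. Applying this to the $n$ independent columns of $Z$ yields $n \le m$, which is precisely the claim $m \ge n$.

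I do not anticipate any genuine obstacle here, which is consistent with the corollary being flagged as an immediate consequence of the preceding lemma. The only point that warrants a word of care is the appeal to the maximal-cardinality property of independent sets in a finite-dimensional space; this is entirely standard and requires no additional hypotheses beyond those already in place. Thus the proof reduces to a one-line chain: $\rho(Z) > 0 \Rightarrow$ columns of $Z$ are linearly independent $\Rightarrow n \le m$.
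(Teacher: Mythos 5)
Your proof is correct and follows essentially the same route as the paper: both invoke Lemma~\ref{lem: linear independence and tau}(b) to conclude that the $n$ columns of $Z$ are linearly independent, and then deduce $n \le m$ from a standard dimension/rank bound in $\Real^m$. No issues.
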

\begin{proof}
    If $\rho(Z) > 0$, then
    by part (b)~of Lemma~\ref{lem: linear independence and tau}, the $n$ columns of $Z$ are linearly independent,
    which implies that $\rank(Z) = n$.
    Since $\rank(Z) = \min\{m,n\} \le m$, it follows that $m \ge n$.
\end{proof}

\subsection{Proof of Theorem~\ref{thm: main result}}
To prove Theorem~\ref{thm: main result},
we establish two propositions: Propositions~\ref{prop: framwork for proving the main theorem} and~\ref{prop: evaluation of mu(j)}.
The theorem then follows immediately from these propositions.
In the first proposition, we focus on the factor $H$ of the $r$-separable matrix $M = WH$ contained in $A$.
We choose $\KC$ from $\Gamma(A)$ and consider, for each $j$, the maximum value in the $j$th row of $H(\KC)$
and the index at which this maximum is attained.
We denote these respectively by $\mu(j)$ and $k(j)$. More precisely, given $\KC \in \Gamma(A)$, define
\begin{align}
    \mu(j) & = \max_{k \in \KC} H(j,k),     \label{eq: mu(j)} \\
    k(j)   & = \arg \max_{k \in \KC} H(j,k) \label{eq: k(j)}
\end{align}
for each $j \in [r]$.
The following proposition provides a framework for proving Theorem~\ref{thm: main result}.
\begin{proposition} \label{prop: framwork for proving the main theorem}
    Let the nearly $r$-separable matrix $A = M + V$ with $M=WH$ satisfy Assumption \ref{asm: nearly r-separable matrix}.
    Let $\KC \in \Gamma(A)$.
    If $\epsilon < \rho(W)$ and $\mu(j) > 1/2$ for each $j \in [r]$,
    then
    \begin{itemize}
        \item the elements $k(1), \ldots, k(r)$ of $\KC$ are all distinct, and
        \item $\| \w_j - \a_{k(j)} \|_1 \le 2(1 - \mu(j)) + \epsilon$ for each $j \in [r]$.
    \end{itemize}
\end{proposition}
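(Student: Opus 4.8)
The plan is to bound the two claims separately, treating the distance bound first and the distinctness claim afterward. For the distance bound, I would work directly from the factorization $M = WH$. Writing $h_{k(j)}$ for the $k(j)$-th column of $H$, I note that $\w_j = W\e_j$ while $\a_{k(j)} = W h_{k(j)} + \v_{k(j)}$ is the $k(j)$-th column of $A = WH + V$. Applying the triangle inequality then gives
\[
  \| \w_j - \a_{k(j)} \|_1 \le \| W(\e_j - h_{k(j)}) \|_1 + \| \v_{k(j)} \|_1 .
\]
The second term is at most $\epsilon$: since part~(b) of Assumption~\ref{asm: nearly r-separable matrix} states $\|V\|_1 \le \epsilon$, and $\|\cdot\|_1$ on a matrix is the maximum absolute column sum, every column norm $\|\v_{k(j)}\|_1$ is dominated by $\|V\|_1$.

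For the first term I would use that part~(a) of the assumption makes every column of the nonnegative matrix $W$ have unit $L_1$ norm, so the induced norm satisfies $\|W\|_1 = 1$, and hence $\| W(\e_j - h_{k(j)}) \|_1 \le \|\e_j - h_{k(j)}\|_1$ by submultiplicativity. The crux is then the explicit $L_1$ computation of $\|\e_j - h_{k(j)}\|_1$. By the definition \eqref{eq: mu(j)}, the $j$-th entry of $h_{k(j)}$ equals $\mu(j)$, while $h_{k(j)} \ge \zero$ and $\one^\trans h_{k(j)} = 1$. Splitting the norm into the $j$-th coordinate and the remaining ones gives $\|\e_j - h_{k(j)}\|_1 = (1 - \mu(j)) + \sum_{i \ne j} h_{k(j)}(i)$, and because the off-diagonal mass is exactly $1 - \mu(j)$, this equals $2(1-\mu(j))$. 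Adding the noise term yields the claimed bound $\| \w_j - \a_{k(j)} \|_1 \le 2(1-\mu(j)) + \epsilon$.

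For distinctness, I would argue by contradiction with a pigeonhole flavor. Suppose $k(j_1) = k(j_2)$ for some $j_1 \ne j_2$, and let $h$ denote that common column of $H$. Then $h(j_1) = \mu(j_1) > 1/2$ and $h(j_2) = \mu(j_2) > 1/2$, so $h(j_1) + h(j_2) > 1$. This contradicts $h \ge \zero$ together with $\one^\trans h = 1$, which force $h(j_1) + h(j_2) \le 1$. Hence $k(1), \ldots, k(r)$ are pairwise distinct; this is the single place where the hypothesis $\mu(j) > 1/2$ is genuinely used.

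I do not expect a serious obstacle: the whole argument reduces to one triangle inequality, a short $L_1$ identity, and a pigeonhole step. The only points demanding care are the normalization bookkeeping—namely that $\|W\|_1 = 1$, $\one^\trans h_{k(j)} = 1$, and $h_{k(j)} \ge \zero$ all come from Assumption~\ref{asm: nearly r-separable matrix}(a)—and the observation that the stated hypothesis $\epsilon < \rho(W)$ appears not to be needed for this proposition; it enters only later, through Proposition~\ref{prop: evaluation of mu(j)}, where it is used to guarantee $\mu(j) > 1/2$.
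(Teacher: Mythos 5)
Your proof is correct and follows essentially the same route as the paper: the distance bound reproduces Lemma~\ref{lem: bound on the distance between w_j and a_i} (expand $\a_{k(j)} = W\h_{k(j)} + \v_{k(j)}$, apply the triangle inequality, and use the unit-$L_1$ normalization of the columns of $W$ and $H$ to get $2(1-\mu(j))+\epsilon$), and the distinctness argument is the identical pigeonhole contradiction on the column sums of $H$ used in Lemma~\ref{lem: k is injective}. Your closing observation is also accurate: the paper invokes $\epsilon < \rho(W)$ only through Lemma~\ref{lem: size of K} to note that an injective map from $[r]$ to $\KC$ \emph{can} exist, but the direct contradiction argument already forces $|\KC| \ge r$, so that hypothesis is not genuinely needed here.
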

The proof is given in Section \ref{subsec: proof of the first proposition}.
It is established using Lemmas~\ref{lem: bound on the distance between w_j and a_i}--\ref{lem: k is injective}.
Lemma~\ref{lem: bound on the distance between w_j and a_i} provides an upper bound on the $L_1$ distance
between $\w_j$  and $\a_{k(j)}$ for each $j \in [r]$,
while Lemmas~\ref{lem: size of K} and~\ref{lem: k is injective} establish that the elements $k(1), \ldots, k(r)$ of $\KC$ are all distinct.
Next, we show in the second proposition that, if $\epsilon$ is below a certain threshold, then $\mu(j)$ exceeds $1/2$.
\begin{proposition} \label{prop: evaluation of mu(j)}
    Let the nearly $r$-separable matrix $A = M + V$ with $M = WH$ satisfy Assumption \ref{asm: nearly r-separable matrix}.
    Let $\KC \in \Gamma(A)$.
    If $\epsilon < \rho(W)/9$, then
    \begin{align*}
        1 - \mu(j) \le \frac{4 \epsilon}{\rho(W)(1 - \epsilon)} < \frac{1}{2}
    \end{align*}
    for each $j \in [r]$.
\end{proposition}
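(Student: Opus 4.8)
The plan is to exploit the pure-pixel structure of $H$ together with the defining property of $\KC \in \Gamma(A)$, reducing everything to an estimate controlled by $\rho(W)$. Fix $j \in [r]$. Since $H = [I,\bar{H}]\Pi$, there is a pure-pixel index $i^*$ with $H_{\cdot,i^*} = \e_j$, so that $\a_{i^*} = W\e_j + \v_{i^*} = \w_j + \v_{i^*}$ with $\|\v_{i^*}\|_1 \le \epsilon$. Because $\KC \in \Gamma(A)$ gives $\cone(A) = \cone(A(\KC))$, I can write $\a_{i^*} = \sum_{k\in\KC} c_k \a_k$ with $\c = (c_k) \ge \zero$; I set $s = \|\c\|_1$ and $\g = \sum_{k\in\KC} c_k H_{\cdot,k} \ge \zero$. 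The quantity I want to lower-bound, $\mu(j)$, enters through $g_j = \sum_{k} c_k H(j,k) \le \mu(j)\, s$, which holds because $H(j,k)\le\mu(j)$ for every $k\in\KC$.

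First I would control the total conic weight $s$. Using that each column of $W$ and of $H$ is nonnegative with unit $L_1$ norm, one has $\one^\trans W = \one^\trans$ and $\one^\trans H_{\cdot,k} = 1$, so $\one^\trans\a_k = 1 + \one^\trans\v_k$ and likewise $\one^\trans\a_{i^*} = 1 + \one^\trans\v_{i^*}$. Applying $\one^\trans$ to the identity $\a_{i^*}=\sum_k c_k\a_k$ and using $|\one^\trans\v| \le \|\v\|_1 \le \epsilon$ yields $|s-1|\le \epsilon(1+s)$, hence $\frac{1-\epsilon}{1+\epsilon}\le s\le\frac{1+\epsilon}{1-\epsilon}$; in particular $\frac{s+1}{s}\le\frac{2}{1-\epsilon}$. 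This is the step I expect to be the main obstacle: since the pure-pixel index $i^*$ need not belong to $\KC$, the conic coefficients $c_k$ are a priori unbounded, and the noise term $\epsilon(s+1)$ appearing later could be uncontrolled. It is precisely the $\one^\trans$-stochasticity of $W$ and $H$ that pins $s$ near $1$; the rest is triangle-inequality bookkeeping.

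Next I would extract the main estimate. Substituting $\a_k = WH_{\cdot,k}+\v_k$ and $\a_{i^*}=W\e_j+\v_{i^*}$ into the conic identity gives $W(\e_j-\g)=\sum_k c_k\v_k-\v_{i^*}$, so $\|W(\e_j-\g)\|_1\le \epsilon(s+1)$. Since $\epsilon<\rho(W)/9$ with $\epsilon\ge 0$ forces $\rho(W)>0$, the definition of $\rho$ (via $\|W\y\|_1\ge\rho(W)\|\y\|_1$) yields $\|\e_j-\g\|_1\le \epsilon(s+1)/\rho(W)$. Reading off the $j$th coordinate gives $1-g_j\le\|\e_j-\g\|_1$, i.e.\ $g_j\ge 1-\epsilon(s+1)/\rho(W)$.

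Finally I would assemble the bound. Combining $g_j\le s\,\mu(j)$ with the previous step gives $1-\mu(j)\le (s-g_j)/s$, and $s-g_j\le (s-1)+\|\e_j-\g\|_1\le\epsilon(1+s)\bigl(1+\tfrac{1}{\rho(W)}\bigr)$ using $s-1\le\epsilon(1+s)$. Dividing by $s$, applying $\frac{s+1}{s}\le\frac{2}{1-\epsilon}$ from the first step, and using $\rho(W)\le 1$ from Lemma~\ref{lem: upper bound on tau} (so that $\rho(W)+1\le 2$) yields $1-\mu(j)\le\frac{4\epsilon}{\rho(W)(1-\epsilon)}$. The strict inequality $<\tfrac12$ then follows by substituting $\epsilon<\rho(W)/9\le\tfrac19$: the bound is smaller than $\frac{4/9}{1-\epsilon}<\frac{4/9}{8/9}=\tfrac12$.
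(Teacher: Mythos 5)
Your proof is correct and follows essentially the same route as the paper's: both express the noisy pure pixel $\a_{i^*}=\w_j+\v_{i^*}$ conically over $A(\KC)$, pin the total conic weight near $1$ using the $L_1$-stochasticity of $W$ and $H$ (your $\one^\trans$ argument is exactly the mechanism behind the paper's Lemma~\ref{lem: size of the norm of c}), and invert $W$ via $\rho(W)$ to compare the $j$th coordinate with $\mu(j)$. The only cosmetic difference is that the paper first normalizes the coefficient vector and splits the error accordingly (Lemma~\ref{lem: upper bound on mu(j)}), whereas you keep $s=\|\c\|_1$ explicit and absorb the factor $(\rho(W)+1)/\rho(W)\le 2/\rho(W)$ at the end; both yield the same bound $4\epsilon/(\rho(W)(1-\epsilon))$.
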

The proof is given in Section \ref{subsec: proof of the first proposition}.
It is established using Lemmas~\ref{lem: upper bound on mu(j)} and \ref{lem: size of the norm of c}.
We now prove Theorem \ref{thm: main result} by combining Propositions \ref{prop: framwork for proving the main theorem}
and \ref{prop: evaluation of mu(j)}.
\begin{proof}[(Proof of Theorem \ref{thm: main result})]
    It follows from Propositions \ref{prop: framwork for proving the main theorem} and \ref{prop: evaluation of mu(j)}
    that $\KC \in \Gamma(A)$ contains $r$ distinct indices $k_1, \ldots, k_r$, which satisfy
    \begin{align*}
        \| \w_j - \a_{k_j} \|_1 \le 2(1 - \mu(j)) + \epsilon \le  \frac{8}{\rho(W)} \frac{\epsilon}{1 - \epsilon} + \epsilon
    \end{align*}
    for each $j \in [r]$.
    Lemma \ref{lem: upper bound on tau}, together with Assumption \ref{asm: nearly r-separable matrix}(a), implies that $\rho(W) \le 1$.
    Therefore, $0 \le \epsilon < \rho(W)/9 \le 1/9$.
    It then follows that $\epsilon / ( 1 - \epsilon) < 9\epsilon / 8$.
    Consequently, we obtain
    $\| \w_j - \a_{k_j} \|_1 < (9 / \rho(W) + 1) \epsilon$ for each $j \in [r]$.
\end{proof}

\subsection{Proof of Proposition \ref{prop: framwork for proving the main theorem}}
\label{subsec: proof of the first proposition}
We begin by evaluating the $L_1$ distance between $\w_j$ and $\a_i$.
We first show that this distance can be upper-bounded using the $(j,i)$th element of $H$
and the noise level $\epsilon$ under Assumption~\ref{asm: nearly r-separable matrix}.
\begin{lemma} \label{lem: bound on the distance between w_j and a_i}
    Let the nearly $r$-separable matrix $A = M + V$, with $M = WH$, satisfy Assumption \ref{asm: nearly r-separable matrix}.
    Then $\| \w_j - \a_i \|_1 \le 2 (1 - H(j,i)) + \epsilon$ holds for each $i \in [n]$ and each $j \in [r]$.
\end{lemma}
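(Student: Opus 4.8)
**

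The plan is to bound $\|\w_j - \a_i\|_1$ by splitting it into a contribution from the separable part $M = WH$ and a contribution from the noise $V$, then control each piece using Assumption~\ref{asm: nearly r-separable matrix}. The natural starting point is the triangle inequality: since $\a_i = \m_i + \v_i$ where $\m_i = W\h_i$ is the $i$th column of $M = WH$ and $\v_i$ is the $i$th column of $V$, we have
\begin{align*}
  \| \w_j - \a_i \|_1 \le \| \w_j - \m_i \|_1 + \| \v_i \|_1.
\end{align*}
The second term is immediately bounded by $\epsilon$ using part~(b) of the assumption, since $\|\v_i\|_1 \le \|V\|_1 \le \epsilon$.

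The heart of the argument is bounding $\|\w_j - \m_i\|_1 = \|\w_j - W\h_i\|_1$, where $\h_i$ is the $i$th column of $H$. I would write $W\h_i = \sum_{\ell=1}^r H(\ell,i)\,\w_\ell$ and use the fact that the columns of $H$ sum to one (they have unit $L_1$ norm and nonnegative entries), so $\sum_{\ell} H(\ell,i) = 1$. This lets me express
\begin{align*}
  \w_j - W\h_i = (1 - H(j,i))\,\w_j - \sum_{\ell \neq j} H(\ell,i)\,\w_\ell.
\end{align*}
Taking $L_1$ norms and applying the triangle inequality, together with the fact that each $\w_\ell$ has unit $L_1$ norm (part~(a) of the assumption), gives
\begin{align*}
  \| \w_j - W\h_i \|_1 \le (1 - H(j,i)) + \sum_{\ell \neq j} H(\ell,i) = (1 - H(j,i)) + (1 - H(j,i)) = 2(1 - H(j,i)),
\end{align*}
where the last equality uses $\sum_{\ell \neq j} H(\ell,i) = 1 - H(j,i)$ from the column-sum condition. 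Combining this with the noise bound yields exactly $\|\w_j - \a_i\|_1 \le 2(1 - H(j,i)) + \epsilon$.

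The main thing to be careful about is the bookkeeping with the unit $L_1$ norms: the bound on the separable part relies crucially on both halves of part~(a), namely that every column of $W$ has unit norm (so that $\|H(\ell,i)\w_\ell\|_1 = H(\ell,i)$) and that every column of $H$ has unit norm with nonnegative entries (so that the coefficients sum to one and the two occurrences of $1 - H(j,i)$ coincide). I do not anticipate a genuine obstacle here; the result is essentially a direct consequence of the triangle inequality once the algebraic identity for $\w_j - W\h_i$ is set up correctly, so the proof should be short.
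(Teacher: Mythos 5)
Your proposal is correct and follows essentially the same route as the paper: write $\a_i = \sum_{u} H(u,i)\,\w_u + \v_i$, regroup to $(1-H(j,i))\w_j - \sum_{u\neq j}H(u,i)\w_u$, and apply the triangle inequality using the unit $L_1$ norms of the columns of $W$, the nonnegativity and column-sum condition on $H$, and $\|\v_i\|_1 \le \epsilon$. The only cosmetic difference is that you peel off the noise term first, whereas the paper carries $\v_i$ through a single application of the triangle inequality.
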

\begin{proof}
    The $i$th column $\a_i$ of $A$ can be expressed as $\a_i = \sum_{u=1}^r H(u,i) \w_u + \v_i$.
    In addition, by $H \ge O$ and Assumption \ref{asm: nearly r-separable matrix}(a),
    we have
    \begin{align*}
        \sum_{u=1}^{r} H(u,i) = 1 \equivSym 1 - H(j,i) = \sum_{u \neq j} H(u,i) \ge 0.
    \end{align*}
    Accordingly,  we can bound $\| \w_j - \a_i \|_1$ from above as follows:
    \begin{align*}
        \| \w_j - \a_i \|_1
         & = \| \w_j - \sum_{u=1}^r H(u,i) \w_u - \v_i \|_1                                   \\
         & = \| (1 - H(j,i)) \w_j - \sum_{u \neq j} H(u,i) \w_u - \v_i \|_1                   \\
         & \le (1 - H(j,i)) \| \w_j \|_1 + \sum_{u \neq j} H(u,i) \| \w_u \|_1 + \| \v_i \|_1 \\
         & \le 1 - H(j,i) + \sum_{u \neq j} H(u,i) + \epsilon                                 \\
         & = 2(1 - H(j,i)) + \epsilon
    \end{align*}
    where the first inequality follows from $H \ge O$ and $ 1 - H(j,i) \ge 0$ as shown above, and
    the second inequality follows from Assumption \ref{asm: nearly r-separable matrix}.
\end{proof}
Let $\KC \in \Gamma(A)$. This lemma implies that, for each $k \in \KC$,
the $L_1$ distance between $\w_j$ and  $\a_k$ satisfies $\| \w_j - \a_k \|_1 \le 2 (1 - H(j,k)) + \epsilon$.
In particular, this upper bound is minimized when $k = k(j)$, where $k(j)$ is defined as in \eqref{eq: k(j)}.
Hence, the column $\a_{k(j)}$ is located near $\w_j$, and the $L_1$ distance between them is bounded as
\begin{align*}
    \| \w_j - \a_{k(j)} \|_1 \le 2 (1 - \mu(j)) + \epsilon
\end{align*}
for each $j \in [r]$.

We next show that if $\mu(j) > 1/2$ for all $j \in [r]$, then the indices $k(1), \ldots, k(r)$ are all distinct.
In what follows, we regard $k(j)$ as a map from $[r]$ to $\KC$.
We show in Lemma~\ref{lem: size of K} that any $\KC \in \Gamma(A)$ contains at least $r$ elements.
Hence, it is possible for an injective map from $[r]$ to $\KC$ to exist.
We then show in Lemma~\ref{lem: k is injective} that if $\mu(j) > 1/2$ for all $j \in [r]$, then the map $k : [r] \rightarrow \KC$ is in fact injective.

\begin{lemma} \label{lem: size of K}
    Let the nearly $r$-separable matrix $A = M + V$ with $M=WH$ satisfy Assumption~\ref{asm: nearly r-separable matrix}.
    If $\epsilon < \rho(W)$, then any $\KC \in \Gamma(A)$ satisfies $|\KC| \ge r$.
\end{lemma}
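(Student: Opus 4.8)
The plan is to reduce the claim to a rank inequality and then prove that rank bound by exploiting the pure-pixel structure of $H$ together with the conditioning parameter $\rho$. The starting observation is that $\KC \in \Gamma(A)$ means $\cone(A) = \cone(A(\KC))$, and since the linear span of a conical hull coincides with the span of its generators, this gives $\mathrm{span}(A(\KC)) = \mathrm{span}(A)$, hence $\rank(A(\KC)) = \rank(A)$. Because $\rank(A(\KC)) \le |\KC|$, it therefore suffices to establish that $\rank(A) \ge r$; the bound $|\KC| \ge \rank(A(\KC)) = \rank(A) \ge r$ then follows immediately.

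To prove $\rank(A) \ge r$, I would isolate the pure pixels. By the pure-pixel form \eqref{eq: H under pure-pixel assumption} (equivalently \eqref{eq: separable matrix}), $H$ contains the unit columns $\e_1, \ldots, \e_r$, so there is an index set $P \subset [n]$ with $|P| = r$ such that the corresponding columns of $M = WH$ are exactly $\w_1, \ldots, \w_r$; that is, $M(P) = W$ (up to reordering). Consequently the submatrix $A(P)$ satisfies $A(P) = W + E$, where $E = V(P)$ is a submatrix of the noise $V$. Since the matrix $1$-norm is the maximum column $L_1$ sum, selecting columns cannot increase it, so $\| E \|_1 \le \| V \|_1 \le \epsilon$ by Assumption~\ref{asm: nearly r-separable matrix}(b).

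The crux—and the step I expect to be the main obstacle—is a perturbation argument showing that the noisy copy $W + E$ still has full column rank $r$. Note first that $\epsilon < \rho(W)$ forces $\rho(W) > 0$, so by Lemma~\ref{lem: linear independence and tau}(b) the columns of $W$ are linearly independent. For the perturbed matrix, suppose toward a contradiction that $(W + E)\x = \zero$ for some $\x \neq \zero$; after normalizing so that $\| \x \|_1 = 1$, this yields $W\x = -E\x$ and hence
\begin{align*}
  \rho(W) \le \| W\x \|_1 = \| E\x \|_1 \le \| E \|_1 \, \| \x \|_1 \le \epsilon,
\end{align*}
where the first inequality is the definition of $\rho(W)$ evaluated at a unit-$L_1$ vector. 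This contradicts $\epsilon < \rho(W)$, so no such $\x$ exists and $\rank(A(P)) = r$. Since $A(P)$ is a submatrix of $A$, we conclude $\rank(A) \ge r$, which closes the chain of inequalities above. The only delicate point is ensuring the norm inequalities are the correct submultiplicative ones for the $L_1$ matrix/vector norms; everything else is bookkeeping on the pure-pixel indices.
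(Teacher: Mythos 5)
Your proof is correct and rests on the same key fact as the paper's: under $\epsilon < \rho(W)$, the $r$ noisy pure-pixel columns $W + V(P)$ are linearly independent (your normalization-and-contradiction step is exactly the paper's bound $\rho(A(\IC)) \ge \rho(W) - \epsilon > 0$ combined with Lemma~\ref{lem: linear independence and tau}). The only difference is organizational: your direct chain $|\KC| \ge \rank(A(\KC)) = \rank(A) \ge \rank(A(P)) = r$, justified by the observation that the linear span of a conical hull equals the span of its generators, replaces the paper's proof by contradiction on a minimal element $\KC_*$ of $\Gamma(A)$ together with the factorization $A(\IC) = A(\KC_*)X$ --- a slightly more streamlined packaging of the same argument.
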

\begin{proof}
    Since $\Gamma(A) \neq \emptyset$, there exists $\KC_* \in \Gamma(A)$
    such that $| \KC_* | \le | \KC |$ for any $\KC \in \Gamma(A)$.
    Hence, it suffices to show that $| \KC_* | \ge r$.
    We prove this inequality by contradiction.
    Suppose, contrary to the claim, that $| \KC_* | < r$.
    There are the columns $\a_{i_1}, \ldots, \a_{i_r}$ of $A$
    such that $\a_{i_j} = \w_j + \v_{i_j}$ for $j \in [r]$.
    By letting $\IC = \{i_1, \ldots, i_r\}$, we express these columns in a matrix form as $A(\IC) = W + V(\IC)$.
    For any $\x \in \Real^{r}$ with $\| \x \|_1 = 1$,
    \begin{align*}
        \| A(\IC) \x \|_1 = \| W\x + V(\IC) \x \|_1 \ge \| W\x \|_1 - \| V(\IC) \x \|_1 \ge \rho(W) - \epsilon.
    \end{align*}
    The last inequality follows from the definition of $\rho(W)$ and Assumption~\ref{asm: nearly r-separable matrix}(b).
    We thus get
    \begin{align*}
        \rho(A(\IC)) = \min_{\| \x \|_1 = 1} \| A(\IC) \x \|_1 \ge \rho(W) - \epsilon > 0.
    \end{align*}
    Taking into account Lemma~\ref{lem: linear independence and tau}(b), we obtain $\rank(A(\IC)) = r$.
    Meanwhile, since $\a_1, \ldots, \a_n \in \cone(A) = \cone A((\KC_*))$,
    we can express $A(\IC)$ as $A(\IC) = A(\KC_*) X$ using $X \in \Real^{|\KC_*| \times r}_+$, which implies that
    \begin{align*}
        \rank(A(\IC)) = \rank(A(\KC_*)X) \le \rank(A(\KC_*)) \le \min\{d,|\KC_*|\} = |\KC_*|.
    \end{align*}
    The last equality follows from the assumption $|\KC_*| < r$ and
    from $r \le d$, which holds by Corollary~\ref{cor: m ge n} together with $0 \le \epsilon < \rho(W)$.
    Thus we obtain the contradiction
    \begin{align*}
        \rank(A(\IC)) = r > |\KC_*| \ge \rank(A(\IC)).
    \end{align*}
    Therefore, $|\KC_*| \ge r$ must hold.
\end{proof}

\begin{lemma} \label{lem: k is injective}
    Let the nearly $r$-separable matrix $A = M + V$ with $M = WH$ satisfy Assumption~\ref{asm: nearly r-separable matrix}.
    Let $\KC \in \Gamma(A)$.
    If $\epsilon < \rho(W)$ and $\mu(j) > 1/2$ for each $j \in [r]$, then the map $k : [r] \rightarrow \KC$ is injective.
\end{lemma}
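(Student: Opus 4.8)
The plan is to argue by contradiction, leveraging the single structural fact that really matters here: under Assumption~\ref{asm: nearly r-separable matrix}(a) together with $H \ge O$, every column of $H$ is a probability vector, i.e.\ its entries are nonnegative and sum to one. Suppose, for contradiction, that the map $k : [r] \to \KC$ is not injective. Then there exist two distinct row indices $j_1, j_2 \in [r]$ with $k(j_1) = k(j_2)$; denote this common column index by $\bar{k} \in \KC$.

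Next I would translate the collision into a statement about the entries of $H$. By the definitions \eqref{eq: mu(j)} and \eqref{eq: k(j)}, the value attained at the maximizer is $H(j, k(j)) = \mu(j)$ for every $j$. Applying this to $j_1$ and $j_2$, and using that both maximizers coincide with $\bar{k}$, gives $H(j_1, \bar{k}) = \mu(j_1)$ and $H(j_2, \bar{k}) = \mu(j_2)$. The hypothesis $\mu(j) > 1/2$ then yields $H(j_1, \bar{k}) > 1/2$ and $H(j_2, \bar{k}) > 1/2$ simultaneously.

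The contradiction comes from column-stochasticity. Since $H \ge O$ and each column of $H$ has unit $L_1$ norm, the $\bar{k}$th column satisfies $\sum_{u=1}^r H(u, \bar{k}) = 1$. Because $j_1 \neq j_2$ and all remaining entries are nonnegative, I can bound $\sum_{u=1}^r H(u, \bar{k}) \ge H(j_1, \bar{k}) + H(j_2, \bar{k}) > 1/2 + 1/2 = 1$, contradicting the fact that the column sum equals exactly one. Hence no two distinct rows can share the same maximizing column, and $k$ is injective.

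I do not expect a genuine obstacle in this lemma: once the column-stochasticity of $H$ is in hand, the whole argument is a one-line pigeonhole computation, and the threshold $1/2$ is precisely what is needed to make two maxima overflow the unit column sum. The only subtlety worth flagging is that the hypothesis $\epsilon < \rho(W)$ is not actually used for injectivity itself; it is carried over because Lemma~\ref{lem: size of K} relies on it to guarantee $|\KC| \ge r$, so that an injection of $[r]$ into $\KC$ can exist in the first place, and to keep the statement consistent with Proposition~\ref{prop: framwork for proving the main theorem}.
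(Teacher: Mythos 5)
Your proof is correct and matches the paper's argument essentially verbatim: the same contradiction via two entries exceeding $1/2$ in a single column of the column-stochastic matrix $H$. Your observation that $\epsilon < \rho(W)$ serves only to guarantee $|\KC| \ge r$ via Lemma~\ref{lem: size of K} is exactly the role it plays in the paper's proof as well.
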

\begin{proof}
    Since $\epsilon < \rho(W)$, Lemma~\ref{lem: size of K} implies that an injective map from $[r]$ to $\KC$ can exist.
    We prove by contradiction that if $j \neq j'$, then $k(j) \neq k(j')$.
    Suppose, contrary to the claim, that $k = k(j) = k(j')$ for some $j \neq j'$.
    Since $\mu(j) > 1/2$ for each $j \in [r]$,
    this assumption implies that
    \begin{align*}
        H(j, k) = H(j, k(j)) = \mu(j) > 1/2,
        \qquad
        H(j', k) = H(j', k(j')) = \mu(j') > 1/2.
    \end{align*}
    Accordingly, since $H \ge O$, we obtain $\sum_{u = 1}^{r} H(u, k) > 1$.
    On the other hand, by Assumption \ref{asm: nearly r-separable matrix}(a) and $H \ge O$,
    we have  $\sum_{u = 1}^{r} H(u, k) = 1$.
    This yields the contradiction
    \begin{align*}
        1 =  \sum_{u = 1}^{r} H(u, k)  > 1.
    \end{align*}
    Therefore, the assumption is false, and hence $k(j) \neq k(j')$ whenever $j \neq j'$.
\end{proof}

We now prove Proposition~\ref{prop: framwork for proving the main theorem}.
\begin{proof}[(Proof of Proposition~\ref{prop: framwork for proving the main theorem})]
    The result follows from Lemmas~\ref{lem: bound on the distance between w_j and a_i} and~\ref{lem: k is injective}.
\end{proof}

\subsection{Proof of Proposition \ref{prop: evaluation of mu(j)}}
\label{subsec: proof of the second proposition}
We estimate an upper bound on $1- \mu(j)$.
The nearly $r$-separable matrix $A$ contains columns $\a_{i_1}, \ldots, \a_{i_r}$
such that $\a_{i_j} = \w_j + \v_{i_j}$ for $j \in [r]$.
Let $\KC  \in \Gamma(A)$.
Since $\a_{i_j} = \w_j + \v_{i_j} \in \cone(A) = \cone(A(\KC))$, there exists some $\c_j \in \Real^{|\KC|}$ such that
\begin{align} \label{eq: conical hull representation of perturbed generators}
    \w_j + \v_{i_j} = A(\KC) \c_j \quad \text{and} \quad \c_j \ge \zero
\end{align}
for each $j \in [r]$.
Using this representation, we derive an upper bound on $1 - \mu(j)$
in Lemmas~\ref{lem: upper bound on mu(j)} and~\ref{lem: size of the norm of c}.

\begin{lemma} \label{lem: upper bound on mu(j)}
    Let the nearly $r$-separable matrix $A = M + V$ with $M=WH$ satisfy Assumption~\ref{asm: nearly r-separable matrix}.
    Let $\KC \in \Gamma(A)$.
    If $\rho(W) > 0$, then for each $j \in [r]$, the following inequality holds:
    \begin{align*}
        1 - \mu(j) \le \frac{2 \epsilon + (1 + \epsilon) \left| \| \c_j \|_1 -1 \right|}{\rho(W)},
    \end{align*}
    where $\c_j$ is defined as in~\eqref{eq: conical hull representation of perturbed generators}.
\end{lemma}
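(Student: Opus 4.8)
The plan is to start from the conical-hull representation \eqref{eq: conical hull representation of perturbed generators} and transfer it onto the endmember matrix $W$, so that the parameter $\rho(W)$ can be brought to bear. First I would substitute $A(\KC) = W H(\KC) + V(\KC)$ into $\w_j + \v_{i_j} = A(\KC)\c_j$ and use $\w_j = W\e_j$ to rearrange the identity into
\begin{align*}
  W(\e_j - H(\KC)\c_j) = V(\KC)\c_j - \v_{i_j}.
\end{align*}
Writing $\y = H(\KC)\c_j$ and applying the bound $\|W\x\|_1 \ge \rho(W)\|\x\|_1$, which is valid for all $\x$ by the definition of $\rho(W)$, yields $\rho(W)\|\e_j - \y\|_1 \le \|V(\KC)\c_j - \v_{i_j}\|_1$.

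Next I would bound the right-hand noise term. Since each column of $V$ has $L_1$ norm at most $\|V\|_1 \le \epsilon$ by Assumption~\ref{asm: nearly r-separable matrix}(b), and $\c_j \ge \zero$, the triangle inequality gives $\|V(\KC)\c_j\|_1 \le \epsilon\|\c_j\|_1$ and $\|\v_{i_j}\|_1 \le \epsilon$, hence
\begin{align*}
  \rho(W)\|\e_j - \y\|_1 \le \epsilon(\|\c_j\|_1 + 1).
\end{align*}

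The core of the argument is to convert $\|\e_j - \y\|_1$ into a bound on $1 - \mu(j)$. I would split $1 - \mu(j) = (1 - \y(j)) + (\y(j) - \mu(j))$ and treat the two pieces separately. For the first piece, the $j$th coordinate alone gives $1 - \y(j) \le |1 - \y(j)| \le \|\e_j - \y\|_1$, so $1 - \y(j) \le \epsilon(\|\c_j\|_1 + 1)/\rho(W)$. For the second piece, since $\y(j) = \sum_{k \in \KC} H(j,k)\c_j(k) \le \mu(j)\|\c_j\|_1$ (using $H(j,k) \le \mu(j)$ and $\c_j \ge \zero$) and $0 \le \mu(j) \le 1$, a short case check on the sign of $\|\c_j\|_1 - 1$ shows $\y(j) - \mu(j) \le \mu(j)(\|\c_j\|_1 - 1) \le |\|\c_j\|_1 - 1|$.

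Finally I would combine the two bounds and reorganize. Writing $\epsilon(\|\c_j\|_1 + 1) = 2\epsilon + \epsilon(\|\c_j\|_1 - 1) \le 2\epsilon + \epsilon|\|\c_j\|_1 - 1|$ and then absorbing the leftover $|\|\c_j\|_1 - 1|$ from the second piece using $\rho(W) \le 1$, which holds by Lemma~\ref{lem: upper bound on tau} together with Assumption~\ref{asm: nearly r-separable matrix}(a), collapses everything into the claimed form $(2\epsilon + (1+\epsilon)|\|\c_j\|_1 - 1|)/\rho(W)$. The main obstacle I anticipate is the bookkeeping in this last step: matching the exact constant requires care in how the $|\|\c_j\|_1 - 1|$ contribution from the second piece is merged with the $\epsilon$-weighted contribution from the first, and it is precisely the inequality $\rho(W) \le 1$ that makes the merge land on the stated coefficient rather than a weaker one.
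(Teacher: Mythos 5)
Your proof is correct and arrives at exactly the stated constant, but it handles the key technical wrinkle — that $\c_j$ need not have unit $L_1$ norm — differently from the paper. Both arguments share the same core mechanism: rewrite $\w_j + \v_{i_j} = A(\KC)\c_j$ as an identity of the form $W(\e_j - (\text{combination of columns of } H(\KC))) = (\text{noise})$ and apply $\|W\x\|_1 \ge \rho(W)\|\x\|_1$. The paper normalizes first, splitting $A(\KC)\c_j = A(\KC)\bar{\c}_j + A(\KC)(\c_j - \bar{\c}_j)$ with $\bar{\c}_j = \c_j/\|\c_j\|_1$, so that $\p = H(\KC)\bar{\c}_j$ is a genuine convex combination satisfying $\p(j) \le \mu(j)$ directly, and the leftover term is absorbed into the noise via $\|A(\KC)\|_1 \le 1+\epsilon$, which is where the $(1+\epsilon)\left|\|\c_j\|_1-1\right|$ comes from; it also treats the degenerate case $\q = \zero$ separately. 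You instead keep $\y = H(\KC)\c_j$ unnormalized, pay for the discrepancy through the separate piece $\y(j) - \mu(j) \le \mu(j)(\|\c_j\|_1 - 1) \le \left|\|\c_j\|_1 - 1\right|$ (valid since $0 \le \mu(j) \le 1$ under Assumption~\ref{asm: nearly r-separable matrix}(a)), and then invoke $\rho(W) \le 1$ (Lemma~\ref{lem: upper bound on tau} with Assumption~\ref{asm: nearly r-separable matrix}(a)) to fold that piece under the common denominator $\rho(W)$. That extra invocation is legitimate under the lemma's hypotheses but is not needed in the paper's version, which only uses $\rho(W) > 0$ inside this lemma and defers $\rho(W)\le 1$ to the proofs of Proposition~\ref{prop: evaluation of mu(j)} and Theorem~\ref{thm: main result}; in exchange, your intermediate bound $\epsilon(\|\c_j\|_1+1)/\rho(W) + \left|\|\c_j\|_1-1\right|$ is actually slightly tighter than the stated one before you relax it, and your use of $\|W\x\|_1 \ge \rho(W)\|\x\|_1$ for all $\x$ avoids both the normalization of $\q$ and the case split on $\q = \zero$.
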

\begin{proof}
    Following the notation stated in Section~\ref{sec: introduction},
    we write $\a(i)$ for the $i$th entry of a vector $\a$ in this proof.
    Let $\KC$ contain $s$ elements $k_1, \ldots, k_s$.
    For $\c_j$, let $\bar{\c}_j = \c_j / \| \c_j \|_1$.
    We rewrite
    \begin{align*}
        \w_j + \v_{i_j} = A(\KC)\bar{\c}_j + A(\KC) (\c_j - \bar{\c}_j).
    \end{align*}
    Each column $\a_{k_u}$ of $A(\KC)$ can be expressed as
    $\a_{k_u} = W\h_{k_u} + v_{k_u}$ for $u \in [s]$.
    Hence,
    \begin{align*}
        A(\KC) \bar{\c}_j = \sum_{u=1}^{s} \bar{\c}_j(u) \a_{k_u}
        = \sum_{u=1}^{s} \bar{\c}_j(u) (W \h_{k_u} + \v_{k_u}) =  W \p + \sum_{u=1}^{s} \bar{\c}_j(u) \v_{k_u}.
    \end{align*}
    where $\p = \sum_{u=1}^{s} \bar{\c}_j(u) \h_{k_u} \in \Real^r$.
    Thus,
    \begin{align*}
        \w_j + \v_{i_j} = W\p + \sum_{u=1}^{s} \bar{\c}_j(u) \v_{k_u} +  A(\KC) (\c_j - \bar{\c}_j)
        \equivSym \y = W\q
    \end{align*}
    where $\y = \v_{i_j} - \sum_{u=1}^{s} \bar{\c}_j(u) \v_{k_u} - A(\KC) (\c_j - \bar{\c}_j) \in \Real^d$
    and $\q = \p - \e_j \in \Real^r$ where $\e_j$ denotes the $j$th unit vector in $\Real^r$.

    \medskip \noindent
    \textbf{Case of $\q \neq \zero$.} \
    Let $\bar{\q} = \q / \| \q \|_1$ for $\q$.
    From the equality $\y = W \q$, we obtain
    \begin{align*}
        \| \y \|_1 = \| W \q \|_1 = \| \q \|_1  \| W \bar{\q} \|_1 \ge \rho(W) \| \q \|_1.
    \end{align*}
    Since $\rho(W) > 0$, it follows that
    \begin{align*}
        \| \y \|_1 / \rho(W) \ge \| \q \|_1 = \sum_{u=1}^{r} |\q(u)| \ge | \p(j) - 1 |.
    \end{align*}
    Therefore,
    \begin{align*}
        \p(j) \ge 1 - \| \y \|_1 / \rho(W)
    \end{align*}
    for $j \in [r]$.
    Recall that $\mu(j)$ is defined as $\mu(j) = \max_{k \in \KC} H(j,k)$ in~\eqref{eq: mu(j)}.
    Since the columns of $H(\KC)$ are $\h_{k_1}, \ldots, \h_{k_s}$,
    we can equivalently write
    \begin{align*}
        \mu(j) = \max_{u = 1,\ldots,s} \h_{k_u}(j).
    \end{align*}
    Using this representation, we observe that $\p(j)$ is upper-bounded by $\mu(j)$:
    \begin{align} \label{ineq: p(j) is upper-bounded by mu(j)}
        \p(j) = \sum_{u=1}^{s} \bar{\c}_j(u) \h_{k_u}(j) \le \sum_{u=1}^{s} \bar{\c}_j(u) \mu(j) = \mu(j)
    \end{align}
    where the last equality uses $\| \bar{\c} \|_1 = 1$ and $\bar{\c} \ge \zero$.
    Hence,
    \begin{align} \label{ineq: bound on 1 - mu(j) using y}
        1 - \mu(j) \le \| \y \|_1 / \rho(W)
    \end{align}
    for $j \in [r]$.
    We now bound the $L_1$ norm of $\y$:
    \begin{align} \label{ineq: bound on the norm of y}
        \| \y \|_1
         & = \| \v_{i_j} - \sum_{u=1}^{s} \bar{\c}_j(u) \v_{k_u} - A(\KC) (\c_j - \bar{\c}_j) \|_1 \notag                           \\
         & \le \| \v_{i_j} \|_1 + \sum_{u=1}^{s} |\bar{\c}_j(u)| \| \v_{k_u} \|_1 + \| A(\KC) \|_1 \| \c_j - \bar{\c}_j \|_1 \notag \\
         & \le 2 \epsilon + \| A(\KC) \|_1 \| \c_j - \bar{\c}_j \|_1 \notag                                                         \\
         & \le 2 \epsilon + ( 1 + \epsilon ) \left| \| \c_j \|_1 - 1 \right|.
    \end{align}
    The last inequality uses
    \begin{align} \label{ineq: bound on the norm of A(KC)}
        \| A(\KC) \|_1 = \| WH(\KC) + V(\KC) \|_1 \le \| W \|_1 \| H(\KC) \|_1 + \| V(\KC) \|_1 \le 1 + \epsilon,
    \end{align}
    from Assumption~\ref{asm: nearly r-separable matrix}, and
    \begin{align*}
        \| \c_j - \bar{\c}_j \|_1 = \| \| \c_j \|_1 \bar{\c}_j - \bar{\c}_j \|_1
        = \| \left( \| \c_j \|_1 - 1 \right) \bar{\c}_j \|_1
        = \left| \| \c_j \|_1 - 1 \right| \| \bar{\c}_j \|_1 = \left| \| \c_j \|_1 - 1 \right|.
    \end{align*}
    Combining inequalities~\eqref{ineq: bound on 1 - mu(j) using y} and \eqref{ineq: bound on the norm of y} gives the desired result.

    \medskip \noindent
    \textbf{Case of $\q = \zero$.} \
    In this case, we have $\p = \e_j$.
    Moreover, by~\eqref{ineq: p(j) is upper-bounded by mu(j)},
    we have $\mu(j) \ge \p(j)$.
    Thus,
    \begin{align*}
        1 = \p(j) \le \mu(j) \equivSym 1 - \mu(j) \le 0.
    \end{align*}
    Since
    \begin{align*}
        \frac{2 \epsilon + (1 + \epsilon) \left| \| \c_j \|_1 -1 \right|}{\rho(W)} \ge 0,
    \end{align*}
    we obtain the desired result.
\end{proof}

\begin{lemma} \label{lem: size of the norm of c}
    Let the nearly $r$-separable matrix $A = M + V$ with $M=WH$ satisfy Assumption~\ref{asm: nearly r-separable matrix}.
    Let $\KC \in \Gamma(A)$.
    If $\epsilon < 1$, then for each $j \in [r]$,
    \begin{align*}
        \left| \| \c_j \|_1 - 1 \right| \le \frac{2 \epsilon}{ 1 - \epsilon},
    \end{align*}
    where $\c_j$ is defined in~\eqref{eq: conical hull representation of perturbed generators}.
\end{lemma}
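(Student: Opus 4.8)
The plan is to avoid taking $L_1$ norms of the conical representation~\eqref{eq: conical hull representation of perturbed generators} directly: since the noise can make the columns $\a_{k_u}$ of $A(\KC)$ have negative entries, $\| A(\KC)\c_j \|_1$ cannot be expanded cleanly in terms of $\| \c_j \|_1$. Instead, I would apply the signed linear functional $\one^\trans$ (summing the entries of a vector) to both sides, exploiting the fact that the mixing structure preserves column sums exactly. Concretely, from Assumption~\ref{asm: nearly r-separable matrix}(a) together with $W, H \ge O$, the column sums give $\one^\trans W = \one^\trans$ and $\one^\trans H = \one^\trans$, hence $\one^\trans M = \one^\trans W H = \one^\trans$; in particular every column $\m_i$ of $M$ satisfies $\one^\trans \m_i = 1$, and likewise $\one^\trans \w_j = 1$ for each endmember.

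Next I would apply $\one^\trans$ to $\w_j + \v_{i_j} = A(\KC)\c_j$. On the left this yields $1 + \one^\trans \v_{i_j}$. On the right, writing each column of $A(\KC)$ as $\a_{k_u} = \m_{k_u} + \v_{k_u}$ and using $\one^\trans \m_{k_u} = 1$, I obtain $\one^\trans A(\KC)\c_j = \sum_u \c_j(u)\,(1 + \one^\trans \v_{k_u})$. Because $\c_j \ge \zero$, we have $\sum_u \c_j(u) = \| \c_j \|_1$, so the two sides combine into the scalar identity $\| \c_j \|_1 - 1 = \one^\trans \v_{i_j} - \sum_u \c_j(u)\, \one^\trans \v_{k_u}$. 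Bounding each term via $|\one^\trans \v| \le \| \v \|_1 \le \epsilon$ (Assumption~\ref{asm: nearly r-separable matrix}(b)) then gives $\left| \| \c_j \|_1 - 1 \right| \le \epsilon\,(1 + \| \c_j \|_1)$.

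Finally, setting $t = \| \c_j \|_1$, the inequality $|t - 1| \le \epsilon(1 + t)$ with $\epsilon < 1$ unpacks into the two-sided bound $\tfrac{1-\epsilon}{1+\epsilon} \le t \le \tfrac{1+\epsilon}{1-\epsilon}$, from which each one-sided deviation is at most $\tfrac{2\epsilon}{1-\epsilon}$, yielding the claim. The only genuinely delicate point is the one flagged at the outset: the noise prevents the naive $L_1$-norm approach, and the key realization is that the exact normalization $\one^\trans M = \one^\trans$ lets the signed functional $\one^\trans$ turn the vector equation into a clean scalar equation in $\| \c_j \|_1$; everything afterward is elementary algebra on a one-variable inequality.
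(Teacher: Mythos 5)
Your proof is correct. It reaches exactly the same two-sided bound $\tfrac{1-\epsilon}{1+\epsilon} \le \| \c_j \|_1 \le \tfrac{1+\epsilon}{1-\epsilon}$ as the paper, but by a slightly different mechanism. The paper stays entirely in the $L_1$-norm world: it sandwiches $\| A(\KC)\c_j \|_1$ between $(1-\epsilon)\|\c_j\|_1$ and $(1+\epsilon)\|\c_j\|_1$ using the triangle inequality together with the decomposition $A(\KC) = WH(\KC) + V(\KC)$ and the submultiplicative bound $\|A(\KC)\|_1 \le 1+\epsilon$; the crucial nonnegativity step there is $\| WH(\KC)\c_j \|_1 = \|\c_j\|_1$, which is really the same column-sum observation you make. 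You instead apply the functional $\one^\trans$ to the whole equation $\w_j + \v_{i_j} = A(\KC)\c_j$ at once, which yields the exact scalar identity $\|\c_j\|_1 - 1 = \one^\trans\v_{i_j} - \sum_u \c_j(u)\,\one^\trans\v_{k_u}$ before any estimation. Your version is arguably a bit cleaner (one identity, one estimate, no reverse triangle inequality), though your opening claim that the norm route ``cannot be expanded cleanly'' is too strong --- the paper's decomposition handles the possibly negative entries of $A(\KC)$ without difficulty. Both arguments rest on the same structural fact, namely that Assumption~\ref{asm: nearly r-separable matrix}(a) and the nonnegativity of $W$, $H$, and $\c_j$ force the noiseless part to contribute exactly $\|\c_j\|_1$.
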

\begin{proof}
    From the representation $\w_j + \v_{i_j} = A(\KC) \c_j$ with $\c_j \ge \zero$ in~\eqref{eq: conical hull representation of perturbed generators},
    we obtain the following bounds:
    \begin{align}
         & 1 - \epsilon \le \| \w_j + \v_{i_j} \|_1 = \| A(\KC) \c_j \|_1 \le (1 + \epsilon) \| \c_j \|_1,
        \label{ineq: (1-epsilon <= (1+epsilon)||c_j|| )}                                                   \\
         & 1 + \epsilon \ge \| \w_j + \v_{i_j} \|_1 = \| A(\KC) \c_j \|_1 \ge (1 - \epsilon) \| \c_j \|_1.
        \label{ineq: (1-epsilon >= (1+epsilon)||c_j|| )}
    \end{align}
    We justify the bounds in~\eqref{ineq: (1-epsilon <= (1+epsilon)||c_j|| )}.
    The left inequality follows from
    \begin{align*}
        \| \w_j + \v_{i_j} \|_1 \ge \| \w_j \|_1 - \| \v_{i_j} \|_1 \ge 1 - \epsilon,
    \end{align*}
    and the right inequality follows from
    \begin{align*}
        \| A(\KC) \c_j \|_1 \le \| A(\KC) \|_1 \| \c_j \|_1 \le (1 + \epsilon) \| \c_j \|_1
    \end{align*}
    where the final inequality uses \eqref{ineq: bound on the norm of A(KC)} from the proof of Lemma~\ref{lem: upper bound on mu(j)}.
    We justify the bounds in~\eqref{ineq: (1-epsilon >= (1+epsilon)||c_j|| )}.
    The left inequality follows from
    \begin{align*}
        \| \w_j + \v_{i_j} \|_1 \le \| \w_j \|_1 + \| \v_{i_j} \|_1 \le 1 + \epsilon,
    \end{align*}
    and the right inequality follows from
    \begin{align*}
        \| A(\KC) \c_j \|_1
         & = \| (WH(\KC) + V(\KC)) \c_j \|_1           \\
         & \ge \| WH(\KC) \c_j \|_1 - \|V(\KC)\c_j\|_1 \\
         & = \|\c_j \|_1 - \|V(\KC)\c_j\|_1            \\
         & \ge \|\c_j \|_1 - \|V(\KC)\|_1 \| \c_j\|_1  \\
         & \ge (1-\epsilon) \| \c_j \|_1
    \end{align*}
    where the second equality follows from the nonnegativity of $W$, $H$, and $\c_j$,
    together with Assumption~\ref{asm: nearly r-separable matrix}(a).
    Since $\epsilon < 1$,
    inequalities~\eqref{ineq: (1-epsilon <= (1+epsilon)||c_j|| )} and~\eqref{ineq: (1-epsilon >= (1+epsilon)||c_j|| )} imply
    that
    \begin{align*}
        \frac{1 - \epsilon}{1 + \epsilon} \le \| \c_j \|_1 \le \frac{1 + \epsilon}{1 - \epsilon}
        \equivSym
        \frac{-2\epsilon}{1 + \epsilon} \le \| \c_j \|_1 - 1 \le \frac{2\epsilon}{1 - \epsilon}.
    \end{align*}
    Since
    \begin{align*}
        0 \le \frac{2\epsilon}{1 + \epsilon} \le \frac{2\epsilon}{1 - \epsilon},
    \end{align*}
    we conclude that
    \begin{align*}
        \frac{-2\epsilon}{1 - \epsilon} \le \| \c_j \|_1 - 1 \le \frac{2\epsilon}{1 - \epsilon}
        \equivSym
        \left| \| \c_j \|_1 - 1 \right| \le \frac{2\epsilon}{1 - \epsilon},
    \end{align*}
    completing the proof.
\end{proof}

We now prove Proposition~\ref{prop: evaluation of mu(j)}.
\begin{proof}[(Proof of Proposition~\ref{prop: evaluation of mu(j)})]
    By definition, $\epsilon \ge 0$.
    Lemma~\ref{lem: upper bound on tau}, together with Assumption~\ref{asm: nearly r-separable matrix}(a), implies that $\rho(W) \le 1$.
    Since the proposition assumes that $\epsilon < \rho(W) / 9$,
    we have
    \begin{align*}
        0 \le \epsilon < \rho(W) / 9 \le 1/9 < 1.
    \end{align*}
    Applying Lemmas~\ref{lem: upper bound on mu(j)} and~\ref{lem: size of the norm of c} yields
    \begin{align*}
        1 - \mu(j) \le \frac{4 \epsilon}{\rho(W) (1 - \epsilon)}
    \end{align*}
    for each $j \in [r]$. Moreover, since $\epsilon < \rho(W) / 9 \le 1/9$,
    we obtain
    \begin{align*}
        1 - \mu(j) \le \frac{4 \epsilon}{\rho(W) (1 - \epsilon)} < \frac{4 \rho(W) / 9 }{8 \rho(W) / 9} = \frac{1}{2}.
    \end{align*}
    Hence, $1 - \mu(j) < 1/2$ for each $j \in [r]$.
\end{proof}

\section{Analysis of the Outputs Produced by DR and DRS} \label{sec: analysis of DR and DRS}
We summarize the result concerning the outputs produced by DR and DRS in the following theorem.
\begin{theorem} \label{thm: analysis of DR and DRS outputs}
    Let $\KC$ be the output produced by either DR or DRS for the input matrix $A$. Then the following statements hold:
    \begin{enumerate}[label={\normalfont(\alph*)}]
        \item $\KC \in \Gamma(A)$.
        \item For any $k \in \KC$, the set $\KC - k$ does not belong to $\Gamma(A)$.
    \end{enumerate}
\end{theorem}
To prove this theorem for the case of DRS, we establish the following lemma.
\begin{lemma} \label{lem: cone equality over partition}
    Let $\{\IC_1, \ldots, \IC_p\}$ be a $p$-way partition of $[n]$.
    If $\KC_j \subset \IC_j$ satisfies $\cone(A(\IC_j)) = \cone(A(\KC_j))$ for each $j \in [p]$,
    then $\cone(A) = \cone(A(\KC_1 \cup \cdots \cup \KC_p))$.
\end{lemma}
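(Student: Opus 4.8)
The plan is to reduce this multi-group statement to a single elementary replacement rule for conical hulls and then apply that rule $p$ times in succession. The core building block I would establish first is the following: whenever two index sets $\mathcal{P}, \mathcal{Q} \subset [n]$ satisfy $\cone(A(\mathcal{P})) = \cone(A(\mathcal{Q}))$, then for any third index set $\mathcal{R} \subset [n]$ we have $\cone(A(\mathcal{P} \cup \mathcal{R})) = \cone(A(\mathcal{Q} \cup \mathcal{R}))$. This replacement rule is the substantive part of the argument; once it is in place, the lemma follows by a short telescoping chain.

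To prove the rule I would argue by mutual inclusion. Take any point of $\cone(A(\mathcal{P} \cup \mathcal{R}))$; splitting a nonnegative combination of the columns indexed by $\mathcal{P} \cup \mathcal{R}$ into its $\mathcal{P}$-part and its $\mathcal{R}$-part, it can be written as $A(\mathcal{P})\a + A(\mathcal{R})\b$ with $\a, \b \ge \zero$. Since $A(\mathcal{P})\a \in \cone(A(\mathcal{P})) = \cone(A(\mathcal{Q}))$, there is some $\a' \ge \zero$ with $A(\mathcal{P})\a = A(\mathcal{Q})\a'$, so the point equals $A(\mathcal{Q})\a' + A(\mathcal{R})\b \in \cone(A(\mathcal{Q} \cup \mathcal{R}))$. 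This gives one inclusion, and the reverse follows by symmetry using $\cone(A(\mathcal{Q})) = \cone(A(\mathcal{P}))$. Note this identity is insensitive to any overlap among the sets, since the conical hull depends only on which columns appear, not on repetitions.

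With the rule in hand, I would finish by replacing the groups one at a time. Starting from $\cone(A) = \cone(A([n])) = \cone(A(\IC_1 \cup \cdots \cup \IC_p))$, I apply the rule at step $j$ with $\mathcal{P} = \IC_j$, $\mathcal{Q} = \KC_j$, and $\mathcal{R}$ equal to the remaining union $\KC_1 \cup \cdots \cup \KC_{j-1} \cup \IC_{j+1} \cup \cdots \cup \IC_p$; the hypothesis $\cone(A(\IC_j)) = \cone(A(\KC_j))$ is exactly what the rule consumes, so this turns $\IC_j$ into $\KC_j$ while leaving the conical hull unchanged. After $p$ such steps every $\IC_j$ has become $\KC_j$, yielding $\cone(A) = \cone(A(\KC_1 \cup \cdots \cup \KC_p))$, as required.

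I do not expect a genuine obstacle here: the replacement rule is the only nontrivial ingredient, and its proof is a two-inclusion argument. The remaining care is purely bookkeeping, namely ensuring that the set $\mathcal{R}$ playing the role of ``the rest'' is well defined at each step. The partition assumption makes the sets $\IC_1, \ldots, \IC_p$ disjoint, and since $\KC_j \subset \IC_j$ the partially replaced union stays disjoint from both $\IC_j$ and $\KC_j$; this keeps the notation $A(\cdot)$ reading unambiguously as column selection, though as noted the conical-hull identity itself would hold even without disjointness.
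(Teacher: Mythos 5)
Your proposal is correct and rests on the same core idea as the paper's proof: expressing a conic combination groupwise and substituting each group's generating set using the hypothesis $\cone(A(\IC_j)) = \cone(A(\KC_j))$. The only difference is packaging --- the paper performs all $p$ substitutions simultaneously in one forward inclusion (getting the reverse inclusion for free from $\KC_1 \cup \cdots \cup \KC_p \subset [n]$), whereas you iterate a two-set replacement rule $p$ times --- which is a cosmetic rather than substantive distinction.
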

\begin{proof}
    It suffices to show the inclusion $\cone(A) \subset \cone(A(\KC_1 \cup \cdots \cup \KC_p))$,
    since the reverse inclusion follows immediately from  $[n] \supset \KC_1 \cup \cdots \cup \KC_p$.
    Let $\a \in \cone(A)$. Then
    \begin{align*}
        \a = \sum_{i \in [n]} \alpha_i \a_i
    \end{align*}
    for some coefficients $\alpha_i \ge 0$.
    Since $\{\IC_1, \ldots, \IC_p\}$ is a $p$-way partition of $[n]$, this can be rewritten as
    \begin{align*}
        \a = \sum_{j \in [p]} \sum_{i \in \IC_j} \alpha_i \a_i.
    \end{align*}
    For each $j \in [p]$, define $\b_j = \sum_{i \in \IC_j} \alpha_i \a_i$.
    Then $\b_j \in \cone(A(\IC_j)) = \cone(A(\KC_j))$ by assumption.
    Thus, $\b_j$ can be written as
    \begin{align*}
        \b_j = \sum_{k \in \KC_j} \beta_k \a_k
    \end{align*}
    for some coefficients $\beta_k \ge 0$.
    Combining these, we obtain
    \begin{align*}
        \a = \sum_{j \in [p]}  \sum_{k \in \KC_j} \beta_k \a_k.
    \end{align*}
    Because $\IC_{j_1} \cap \IC_{j_2} = \emptyset$ for $j_1 \neq j_2$
    and $\KC_j \subset \IC_j$, it follows that
    $\KC_{j_1} \cap \KC_{j_2} = \emptyset$ as well.
    Hence we can rewrite the above expression simply as
    \begin{align*}
        \a = \sum_{k \in \KC_1 \cup \cdots \cup \KC_p} \beta_k \a_k.
    \end{align*}
    This proves the desired inclusion.
\end{proof}

DR iteratively constructs a finite number of sets $\KC_\ell$.
In the proof of Theorem~\ref{thm: analysis of DR and DRS outputs}, we denote by $t$ the total number of these sets.
By construction, the sets $\KC_1, \ldots, \KC_t$ satisfy
\begin{align} \label{eq: nested sequence of K}
    [n] = \KC_1 \supset \KC_2 \supset \cdots \supset \KC_t = \KC,
\end{align}
where $n$ is the number of columns of $A$, and $\KC$ is the output produced by DR.
Furthermore, $\KC_\ell$ and $\KC_{\ell+1}$ satisfy
\begin{align} \label{eq: relation between K_l and K_(l+1)}
    \KC_{\ell+1} = \KC_{\ell} - i
\end{align}
where the element $i \in \KC_\ell$ is such that $\a_i \in \cone(A(\KC_\ell - i))$.
We are now ready to prove Theorem \ref{thm: analysis of DR and DRS outputs}.
\begin{proof}[(Proof of Theorem~\ref{thm: analysis of DR and DRS outputs})]
    We first prove the statements for the case of DR, and then for the case of DRS.

    \medskip \noindent
    \textbf{Case of DR.} \ We begin with statement (a).
    If $t=1$, then $[n] = \KC_1 = \KC_t = \KC$, and hence $\cone(A) = \cone(A(\KC))$, which establishes (a).
    Now assume $t \ge 2$. We show that $\cone(A(\KC_\ell)) = \cone(A(\KC_{\ell+1}))$ for each $\ell = 1, \ldots, t-1$,
    because this relation, together with $\KC_1 = [n]$ and $\KC_t = \KC$, implies statement (a).
    It suffices to prove the inclusion $\cone(A(\KC_\ell)) \subset \cone(A(\KC_{\ell+1}))$,
    since the reverse inclusion follows from relation~\eqref{eq: nested sequence of K}.
    Let $\a \in \cone(A(\KC_\ell))$.
    Then
    \begin{align*}
        \a = \sum_{k \in \KC_\ell} \alpha_k \a_k
    \end{align*}
    for some coefficients $\alpha_k \ge 0$.
    By relation~\eqref{eq: relation between K_l and K_(l+1)},
    we can rewrite this as
    \begin{align*}
        \a = \sum_{k \in \KC_{\ell + 1}} \alpha_k \a_k + \alpha_i \a_i
    \end{align*}
    where $i$ is the element removed from $\KC_\ell$ to form $\KC_{\ell+1}$.
    Since $\a_i \in \cone(A(\KC_{\ell} - i)) = \cone(A(\KC_{\ell + 1}))$ by relation~\eqref{eq: relation between K_l and K_(l+1)},
    we can express $\a_i = \sum_{k \in \KC_{\ell + 1}} \beta_k \a_k$ for some coefficients $\beta_k \ge 0$.
    Substituting this into the expression for $\a$ yields
    \begin{align*}
        \a = \sum_{k \in \KC_{\ell + 1}} \alpha_k \a_k + \alpha_i \sum_{k \in \KC_{\ell + 1}} \beta_k \a_k
        = \sum_{k \in \KC_{\ell + 1}} (\alpha_k + \alpha_i \beta_k) \a_k.
    \end{align*}
    Since each coefficient $\alpha_k + \alpha_i \beta_k$ is nonnegative,
    we conclude that $\a \in \cone(A(\KC_{\ell+1}))$, proving the desired inclusion.

    We next prove statement (b).
    Let $k \in \KC$.
    Then, by the construction of the sequence $\KC_1, \KC_2, \ldots, \KC_t$ generated by DR,
    there exists an index $\ell$ such that $k \in \KC_\ell$ and $\a_k \notin \cone(A(\KC_\ell - k))$.
    From relation~\eqref{eq: nested sequence of K}, we have  $\cone(A(\KC - k)) \subset \cone(A(\KC_{\ell} - k))$.
    Thus, $\a_k \notin \cone(A(\KC - k))$, even though $\a_k \in \cone(A)$.
    Hence, $\cone(A) \neq \cone(A(\KC - k))$ for any $k \in \KC$, which establishes statement (b).

    \medskip \noindent
    \textbf{Case of DRS.} \
    In this proof, we use the same notation as in the description of DRS (Algorithm~3.2), namely:
    \begin{itemize}
        \item the $p$-way partition $\{\IC_1, \ldots, \IC_p\}$ of $[n]$ produced in Step~1;
        \item the output $\JC_u$ produced by DR for the input matrix $A(\IC_u)$ in Step~2, for each $u \in [p]$;
        \item $\KC_u = \IC_u(\JC_u)$ for $u \in [p]$;
        \item $\IC = \KC_1 \cup \cdots \cup \KC_p$;
        \item the output $\JC$ produced by DR for the input matrix $A(\IC)$ in Step~3.
    \end{itemize}
    Note that $\KC$ is given by $\KC = \IC(\JC)$.

    We first prove statement (a).
    By statement (a) for the case of DR, we have $\cone(A(\IC_u)) = \cone(A(\KC_u))$ for each $u \in [p]$.
    Applying Lemma~\ref{lem: cone equality over partition} then yields
    \begin{align*}
        \cone(A) = \cone(A(\KC_1 \cup \cdots \cup \KC_p)) = \cone(A(\IC)).
    \end{align*}
    Furthermore, by statement (a) for DR, we obtain $\cone(A(\IC)) = \cone(A(\KC))$.
    Consequently, $\cone(A) = \cone(A(\KC))$, which proves statement (a) for DRS.

    We next prove statement (b).
    From statement (b) for the case of DR, we have $\JC - j \notin \Gamma(A(\IC))$ for any $j \in \JC$.
    This implies that
    \begin{align*}
        \cone(A(\IC(\JC) - i_j)) \neq \cone(A(\IC)),
    \end{align*}
    where $i_j$ denotes the $j$th element of $\IC = \{i_1, i_2, \ldots, i_m\}$  with $i_1 < \cdots < i_m$.
    Note that $i_j$ belongs to $\IC(\JC)$ because $j \in \JC$.
    Recall that $\KC = \IC(\JC)$.
    Thus, by letting $k = i_j$, we obtain
    \begin{align*}
        \cone(A(\IC(\JC) -  i_j)) = \cone(A(\KC - k)).
    \end{align*}
    Here, $k$ is an arbitrary element of $\KC$, because $j$ is arbitrary in $\JC$.
    Moreover, as shown in the proof of statement (a), we have $\cone(A) = \cone(A(\IC))$.
    Consequently, $\cone(A(\KC - k)) \neq \cone(A)$ for any $k \in \KC$.
    This proves statement (b) for the case of DRS.
\end{proof}

\section{Supplementary Experimental Results} \label{sec: supplementary experimental results}

\begin{figure}[h]
    \centering

    \begin{subfigure}[t]{0.48\linewidth}
        \centering
        \includegraphics[width=\linewidth]{./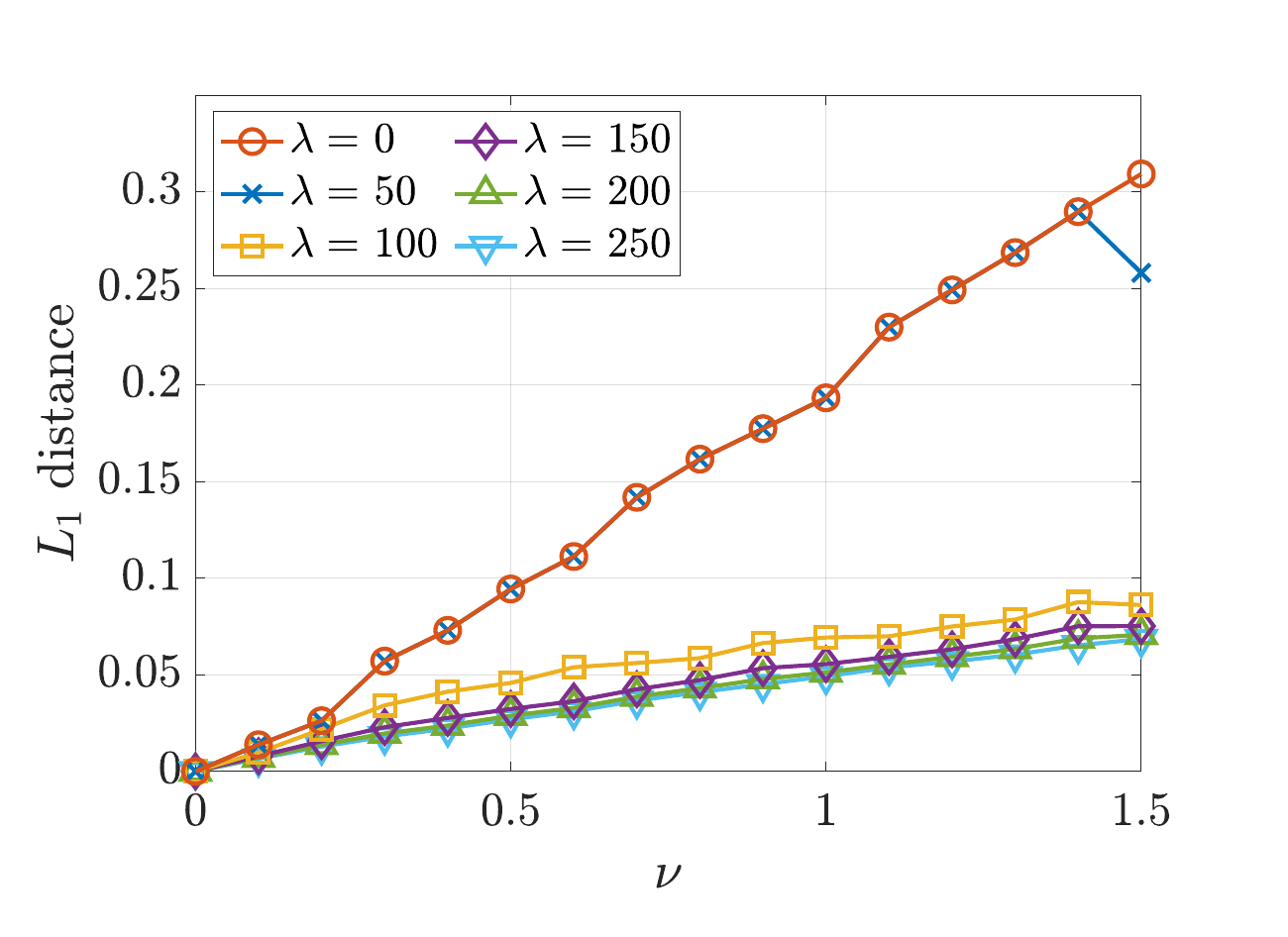}
        \caption{Dataset~1}
    \end{subfigure}
    \hfill
    \begin{subfigure}[t]{0.48\linewidth}
        \centering
        \includegraphics[width=\linewidth]{./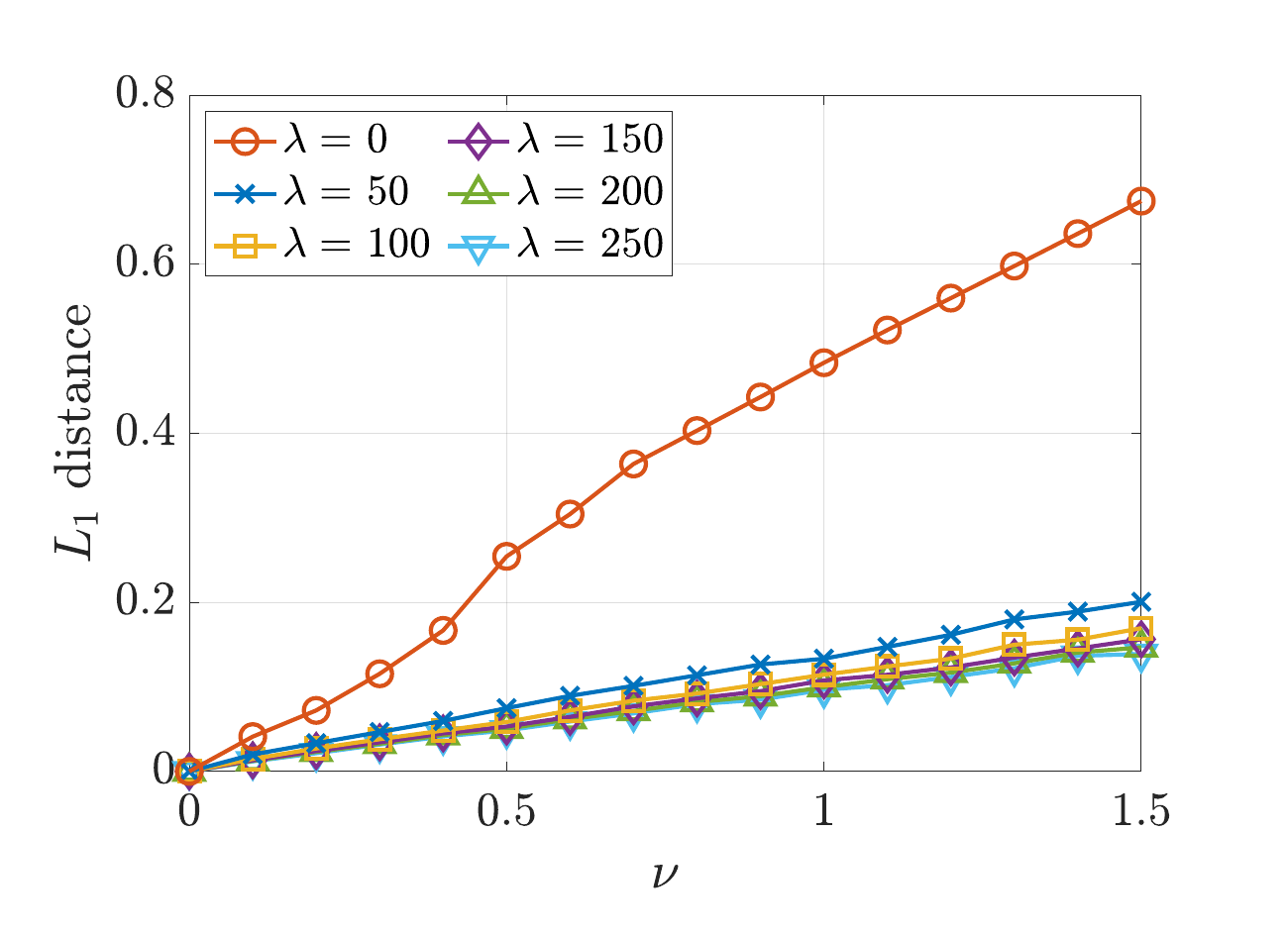}
        \caption{Dataset~2}
    \end{subfigure}

    \vspace{1mm}

    \begin{subfigure}[t]{0.48\linewidth}
        \centering
        \includegraphics[width=\linewidth]{./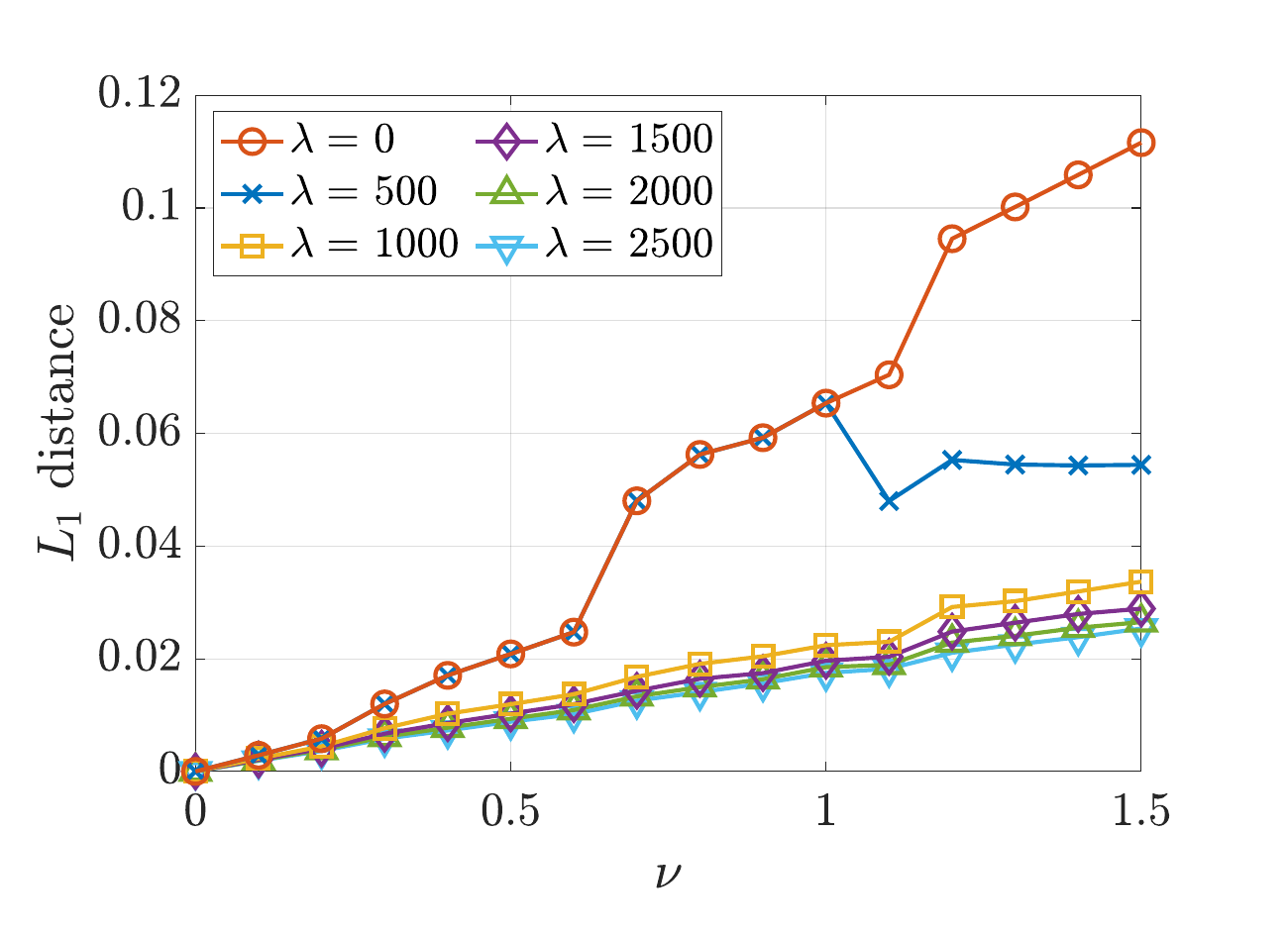}
        \caption{Dataset~3 with $\lambda$ from 0 to 2500 (step 500)}
    \end{subfigure}
    \hfill
    \begin{subfigure}[t]{0.48\linewidth}
        \centering
        \includegraphics[width=\linewidth]{./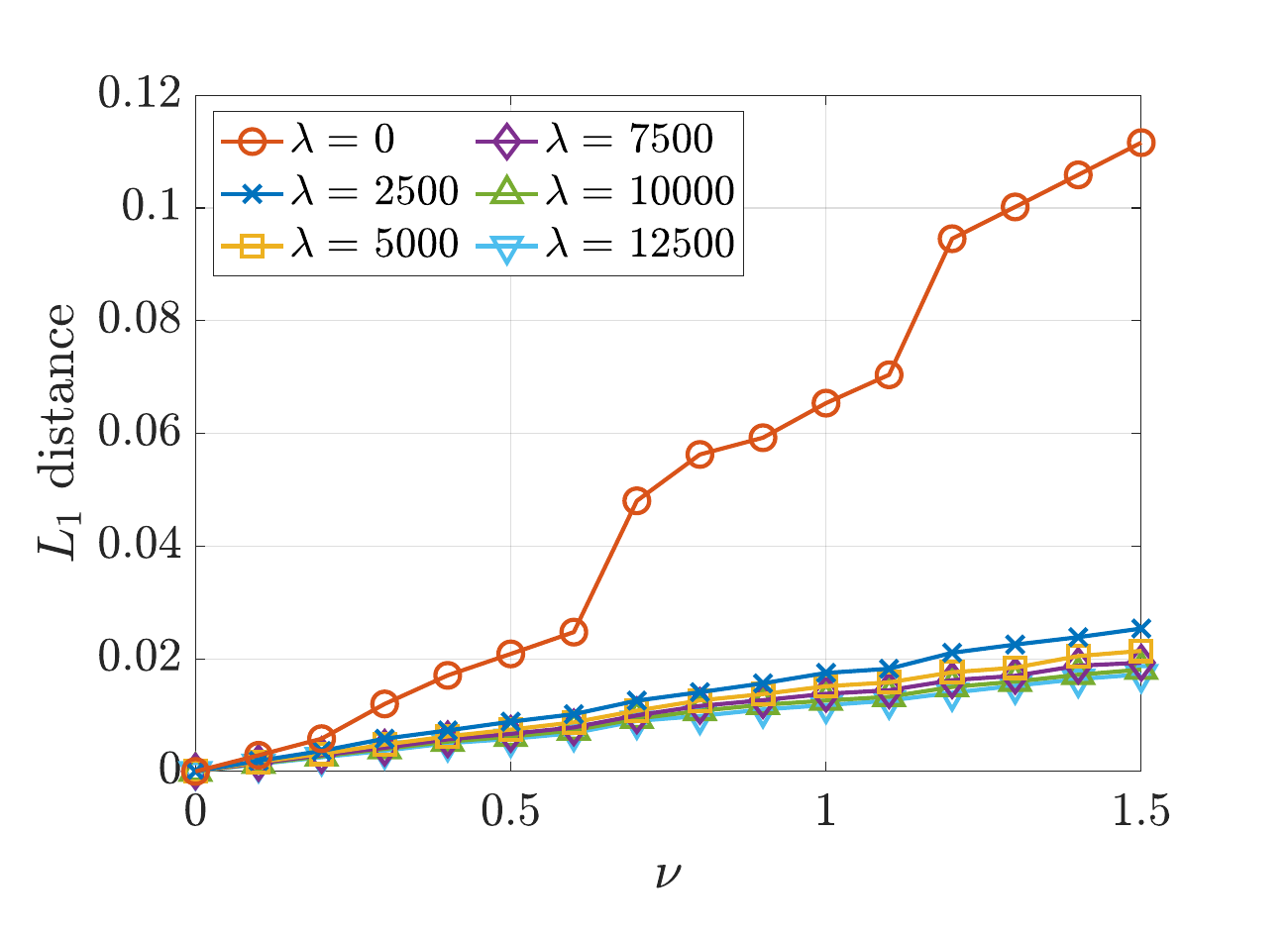}
        \caption{Dataset~3 with $\lambda$ from 0 to 12500 (step 2500)}
    \end{subfigure}

    \caption{Experimental results for Datasets~1--3 using the $L_1$ distance.
        Panels (a) and (b) show Datasets~1 and~2, while
        Panels (c) and (d) show Dataset~3 for different ranges of $\lambda$.
        Each panel shows the $L_1$ distance of the DRS output $\KC \cup \KC_{\add}$,
        averaged over 50 realizations of $\KC_{\add}$.}
    \label{fig: results of DRS for datasets 1-3 using the L1 distance}
\end{figure}

\section*{Acknowledgments}
The author would like to thank the anonymous reviewers for their helpful and insightful comments,
which significantly improved the quality of this paper.

\bibliographystyle{siamplain}
\bibliography{main.bbl}

\end{document}